\patchcmd{\@maketitle}{\LARGE}{\Large}{}{}
\newcommand{\ber}{\begin{eqnarray}}
\newcommand{\eer}{\end{eqnarray}}
\renewcommand{\baselinestretch}{1.2}
\newtheorem{corollary}{\noindent Corollary}
\newtheorem{proposition}{\noindent Proposition}
\newtheorem{definition}{\noindent Definition}
\newtheorem{claim}{\noindent Claim}
\newtheorem{remark}{\noindent Remark}
\newcommand{\be}{\begin{equation}}
\newcommand{\ee}{\end{equation}}
\newcommand{\bal}{\begin{align}}
\newcommand{\eal}{\end{align}}
\newcommand{\balnonum}{\begin{align*}}
\newcommand{\ealnonum}{\end{align*}}
\newcommand{\eps}{\epsilon}
\newcommand{\iid}{\overset{i.i.d.}{\sim}}
\newtheorem{lemma}{Lemma}[section]
\newcommand{\sample}{\mathcal{D}}
\newcommand{\dataset}{\mathcal{D}} 
\newcommand{\jointpdf}{f_{XY}}
\newcommand{\jointcdf}{F_{XY}}
\newcommand{\jointcdfhat}{\hat{F}_{XY}^n}
\newcommand{\xcdf}{F_{X}}
\newcommand{\xpdf}{f_{X}}
\newcommand{\ycdf}{F_{Y}}
\newcommand{\ypdf}{f_{Y}}
\newcommand{\xcdfhat}{\hat{F}_{X}}  
\newcommand{\ycdfhat}{\hat{F}_{Y}}
\newcommand{\xcdfempir}{\hat{F}_{X}^n} 
\newcommand{\ycdfempir}{\hat{F}_{Y}^n}
\newcommand{\wfun}{{w}} 
\newcommand{\wmat}{\mathcal{W}} 
\newcommand{\truncregion}{\mathcal{A}} 
\newcommand{\indicator}[1]{\mathbbm{1}_{\{#1\}}} 
\newcommand{\setindicator}[1]{\mathbbm{1}_{#1}} 
\newcommand{\nperm}{B}
\newcommand{\obsset}{o} 
\newcommand{\expectset}{e}
\newcommand{\zuk}[1]{\textcolor{red}{[#1 --Or]}} 
\newcommand{\micha}[1]{\textcolor{blue}{[#1 --Micha]}} 
\newcommand{\blind}{1}
\newcommand\blfootnote[1]{%
	\begingroup
	\renewcommand\thefootnote{}\footnote{#1}%
	\addtocounter{footnote}{-1}%
	\endgroup
}
\begin{document}

\def\spacingset#1{\renewcommand{\baselinestretch}%
{#1}\small\normalsize} \spacingset{1}


\date{} 

\vspace{-0.4in}
\if1\blind
{
	\title{\bf \fontsize{20}{25} \Large{Testing Independence under Biased Sampling}}
\vspace{-0.1in}	
	\author{Yaniv Tenzer, 
		 Micha Mandel and Or Zuk \\
		Department of Statistics, The Hebrew University of Jerusalem}
	\maketitle
} \fi
\vspace{0.2in}	
\blfootnote{*The authors thank Yair Heller, Yair Goldberg, Malka Gorfine, Sy Han Chiou, and Rebecca Betensky for fruitful discussions.
This research was supported by the NIH (grant no. R01NS094610) and by The Israel Science Foundation (grant No. 519/14).}

\if0\blind
{
	\bigskip
	\begin{center}
		{\LARGE\bf Testing Independence under Biased Sampling}
	\end{center}
} \fi

\vspace{-0.8in}
\begin{abstract}	
Testing for association or dependence between pairs of random variables is a
fundamental problem in statistics. In some applications, data are subject to selection bias that causes dependence between observations even when it is absent from the population. An important example is truncation models, in which observed pairs are restricted to a specific subset of the X-Y plane. Standard tests for independence are not suitable in such cases, and alternative tests that take the selection bias into account are required. To deal with this issue, we generalize the notion of quasi-independence with respect to the sampling mechanism, and study the problem of detecting any deviations from it. We develop two test statistics motivated by the classic Hoeffding's statistic, and use two approaches to compute their distribution under the null: (i) a bootstrap-based approach, and (ii) a permutation-test with non-uniform probability of permutations, sampled using either MCMC or importance sampling with various proposal distributions. We show that our tests can tackle cases where the biased sampling mechanism is estimated from the data, with an important application to the case of censoring with truncation. 
We prove the validity of the tests, and show, using simulations, that they perform well for important special cases of the problem and improve power compared to competing methods. The tests are applied to four datasets, two that are subject to truncation, with and without censoring, and two to positive bias mechanisms related to length bias.
\end{abstract}

\noindent%
{\it Keywords:} 
quasi-independence, Markov chain Monte Carlo, permutation test, truncation, weighted distribution
\vspace{-0.0in}

\spacingset{1.45} 
\section{Introduction}
\label{sec:introduction}%
\vspace{-0.2cm} Testing independence of two random variables $X,Y$ is a fundamental statistical problem. Classical methods have focused on testing linear (Pearson’s correlation coefficient) or monotone (Spearman’s correlation, Kendall's tau) dependence, while other works focus on developing methods to capture complex dependencies (e.g. using Pearson's Chi-squared test). This classical problem keeps drawing attention from scholars with recent approaches focusing on omnibus tests employing computer-intensive methods \citep{gretton2008kernel,szekely2009brownian, heller2012consistent,heller2016consistent}.

A more challenging task is testing independence of two random variables when data is obtained through a general biased sampling mechanism. The most familiar example is that of truncation, where observations are restricted to a certain `observable region'. All cross-sectional samples are subject to some form of truncation, and the standard way of analysing such data is to assume independence between the truncation mechanism and the variables of interest (see \cite{chiou2018permutation} for more discussion and references). The results of the analysis can be highly biased if the independence assumption is violated. Yet, another important problem that exploits tests for truncated data is testing the Markov assumption in the Illness-Death model \citep{rodriguez2012nonparametric}.
The problem of quasi-independence was first dealt with in the framework of contingency tables (e.g., \cite{goodman1968analysis}) and was studied more recently in the framework of survival analysis where data are restricted by the condition $X\le Y$ (see, \cite{tsai1990testing} and the discussion below). \cite{bickel1991large} consider general biased regression models for which independence is equivalent to a zero regression coefficient. Another common framework where the problem naturally arises is in cross-sectional sampling designs;  Section \ref{sec:real_data} provides several examples.
Biased sampling in general, and truncation in particular, may imply dependence in the sample that does not exist in the population. This fact was acknowledged more than a century ago by \cite{elderton1913correlation} who studied the correlation between an  intellectually disabled child’s place in the family and the size of that family. They noticed that ``the size of the family must be as great or greater than the imbecile's place in it ... and there would certainly be correlation, if we proceeded to find it by the usual product moment method, but such correlation is, or clearly may be, wholly spurious".

In this paper, we study the problem of detecting dependency from general biased samples. We are given a sample of $n$ independent and identically distributed (i.i.d.) observations $\{(x_i, y_i)\}_{i=1}^n$ drawn from a joint distribution having a density $F^{(w)}_{X,Y}(dx,dy) \propto {\wfun(x,y)\jointcdf(dx,dy)}$, where $\wfun$ is a known non-negative function having a positive finite expectation with respect to $\jointcdf$. Here $\jointcdf$ is the joint distribution of the pair $(X,Y)$ in the population, and $F^{(w)}_{X,Y}$ is the distribution of observed pairs, tilted by $\wfun$, the sampling mechanism. The aim is to test the null hypothesis $H_0: \jointcdf(x,y)=\xcdf(x)\ycdf(y)$ for all $(x,y)$. However, because $\jointcdf$ is not identifiable on $\{(x,y):\wfun(x,y)=0\}$, the goal is restricted to testing quasi-independence defined as $\jointcdf(dx,dy)=\tilde{F}_{X}(dx)\tilde{F}_{Y}(dy)$ for all $(x,y)\in \{(x',y'):\wfun(x',y')>0\}$, for some functions $\tilde{F}_{X},\tilde{F}_{Y}$ \citep{tsai1990testing}. 

Testing quasi-independence under a biased-sampling regime is challenging and previous works mainly focused on simple truncation models. \cite{tsai1990testing} considered the problem of testing quasi-independence under left-truncation (and right censoring) based on the conditional Kendall's tau correlation coefficient. \cite{efron1999nonparametric} and \cite{martin2005testing} extended this method to the settings of double-truncation.
\cite{chen2007sequential} suggested an importance sampling algorithm to estimate the P-value under truncation models. \cite{emura2010testing} constructed a log-rank type statistic for the left-truncation setting. 
\cite{chen1996product} proposed a conditional version of Pearson's product-moment correlation. These tests are powerful for monotone alternatives, but generally less powerful against non-monotone dependencies frequently encountered in real-life applications.

Some recent works accommodated non-monotone alternatives by utilizing local versions of Kendall's tau test \citep{rodriguez2012nonparametric, de2012markov}, and a weighted version of the local Kendall's tau \citep{rodriguez2016methods}. These tests are inefficient for small sample sizes, and their performance depends on the choice of pre-selected grids.

Testing quasi-independence by utilizing permutations was previously proposed by \cite{tsai1990testing}, \cite{efron1999nonparametric} and \cite{chiou2018permutation}. The test of \cite{chiou2018permutation} is based on a scan statistic that partitions the sample space into two groups according to a threshold value of $X$, and compares the distributions of $Y$ in the two groups. Although no formal results are established, the simulations indicate that the procedure has comparable power to the test of \cite{martin2005testing} for monotone alternatives, and performs much better for non-monotone alternatives. Yet, similarly to other aforementioned works, these works consider only the cases of one and two-sided truncation.

The current paper describes a new family of tests of independence for data from a general biased sampling design. The tests are not restricted to monotone alternatives and are based on a scan statistic similar to the one suggested by \cite{heller2016consistent}. As in \cite{heller2016consistent}, P-values are calculated using permutations, but here the permutation distribution under the null hypothesis is not uniform and generating permutations is a challenging task. We consider two approaches for calculation of P-values, the first employs a Markov-Chain Monte Carlo (MCMC) algorithm to sample from the resulting weighted permutation distribution, and the second uses importance sampling.

In addition, an alternative bootstrap test of independence is considered that requires consistent estimation of the univariate marginal distributions under the alternative hypothesis. We identify two settings under which consistent estimators of the marginals are attainable (under both the null and the alternative hypotheses).

The rest of the paper is organized as follows. Section \ref{sec:preliminaries} introduces the problem of testing quasi-independence and its relation to previous works. Section \ref{sec:permutations} derives the permutation tests and studies some of their theoretical properties. Section \ref{sec:bootstrap} presents an alternative bootstrap-based approach. Section \ref{sec:teststat} introduces the adjusted Hoeffding Statistic and describes its computation. An inverse weighting test for the case of strictly positive $w$, and implementation of the tests to left-truncated right-censored data are also discussed. The new methods are compared in simulations and applied to real-life data sets in Sections \ref{sec:simulations} and \ref{sec:real_data}, respectively. Section \ref{sec:discussion} completes the paper with a discussion.

\section{Preliminaries}
\label{sec:preliminaries}
Let $\jointcdf$ be a bivariate distribution function with a density $\jointpdf$ and univariate marginals $\xpdf, \ypdf$. We consider $n$ independent pairs $(X_i,Y_i) \sim \jointcdf^{(w)}(x,y)$ of scalar continuous random variables, sampled from the joint density
\be
f^{(w)}_{X,Y}(x,y)=\wfun(x,y) \jointpdf(x,y)/\mathbb{E}_{\jointpdf} \{\wfun(X,Y)\},
\label{eq:weighted_density}
\ee
where $\wfun : \mathbb{R}^2 \to \mathbb{R}^+$ is a non-negative weight function such that $0 < \mathbb{E}_{\jointpdf}\{\wfun(X,Y)\} < \infty$. The marginals of the observed data are denoted by $\xpdf^{(w)}(x) = \int_{-\infty}^{\infty} \jointpdf^{(w)}(x,y) dy$ and $\ypdf^{(w)}(y) = \int_{-\infty}^{\infty} \jointpdf^{(w)}(x,y) dx$.
The weighted independent density is defined as $[\xpdf \ypdf]^{(w)}(x,y) = {\wfun(x,y) \xpdf(x) \ypdf(y)}/{\mathbb{E}_{\xpdf \ypdf} \{\wfun(X,Y)\}}$, with the corresponding weighted distribution $[\xcdf \ycdf]^{(w)}$.

An important special case is that of a truncated sample in which
\begin{align}
	\wfun(x,y) = \setindicator{\truncregion}(x,y) = \left\{ \begin{array}{lll}
	1  &   \quad (x,y) \in \truncregion  \\
	0 & \quad otherwise, \\
	\end{array} \right.
	\label{eq:truncation_function}
\end{align}	
for some set $\truncregion \subset \mathbb{R}^2$. This special form of $\wfun(x,y)$ arises frequently in practice and was previously investigated by several authors, as discussed in Section \ref{sec:introduction}.


A strongly related concept to our problem is that of \emph{quasi-independence} in truncation models \citep{tsai1990testing}, which can be naturally extended to a general weight function $\wfun(x,y)$:
\begin{definition}(quasi-dependence)
	We say that the joint distribution $\jointcdf^{(w)}$ is quasi-independent with respect to the weight function $\wfun$, if there exist density functions $\tilde{\xpdf}$ and $\tilde{\ypdf}$, such that
\be \label{eq:quasi_independence}
	\jointcdf^{(w)}(x,y) \propto \int\displaylimits_{-\infty}^{x}\int\displaylimits_{-\infty}^{y} \wfun(s,t) \tilde{\xpdf}(s) \tilde{\ypdf}(t) dt ds, \quad \forall x,y \in \mathbb{R} \: .
\ee

Otherwise, we say that $\jointcdf^{(w)}$ is quasi-dependent with respect to the weight function $\wfun$.
\label{def:quasi_independence}
\end{definition} 

Based on the sample $\sample = \{(x_i,y_i)\}_{i=1}^n$, we aim at performing the following hypothesis testing for quasi-dependence:
\be
\begin{array}{rl}
H_0: & \jointcdf^{(w)} \textrm{\; is \; quasi-independent} \\
H_1: & \textrm{otherwise}
\end{array}
\label{eq:main_hypothesis_testing}
\ee

\begin{remark}
Quasi-dependence implies dependence. When $\wfun$ is strictly positive, quasi-dependence is simply dependence of $X$ and $Y$.
\end{remark}

\begin{remark}
Quasi-independence does not imply independence.  If $\wfun(x,y) = 0$ for some $(x,y) \in \mathbb{R}^2$, it is possible to have quasi-independence without independence (and then we must have either $\tilde{\xcdf} \neq \xcdf$ or $\tilde{\ycdf} \neq \ycdf$). For the important case of $w(x,y)=\mathbbm{1}(x,y)$, \cite{cheng2007nonparametric} discuss the identifiability problem and its implications.
\end{remark}

We denote by $\sample_x, \sample_y$ the unordered samples comprising of $\{x_1,\ldots,x_n\}$ and $\{y_1,\ldots,y_n\}$, respectively. For convenience, we often keep the indices of the original data, but not the coupling between $x,y$, and to this end we use
the unordered sample $(x, \sample_y)$ - i.e. we keep the original ordering of the $x_i$'s but only the marginal empirical distribution of the $y_i$'s.

\section{Permutation Test}
\label{sec:permutations}
\subsection{The Distribution of Permutations}
Under biased-sampling, different permutations are not equally likely under the null model, thus should not be uniformly sampled as in standard permutation tests. Therefore, the sampling mechanism should account for the discrepancy in weight of distinct permutations based on the data. Let $\pi(y)$ be the vector $y$ rearranged according to a permutation $\pi$, i.e. $\pi(y) = (y_{\pi(1)}, .., y_{\pi(n)})$, and
let $\pi(\sample)$ be the permuted sample: $\pi(\sample) \equiv \big((x_1, y_{\pi(1)}), .., (x_n, y_{\pi(n)})\big)$.
For a sample $\sample=\{(x_i, y_i)\}_{i=1}^n$, let $\wmat \in \mathbb{R}_{n\times n}$ be a weight-matrix defined by
$\wmat(i,j) \equiv \wfun(x_i, y_j).$
Let $S_n$ be the set of all permutations of $n$ elements, and for $\pi\in S_n$ consider the probability 
\be
P_{\wmat}(\pi) \equiv \frac{1}{per(\wmat)} \prod_{i=1}^n \wmat(i,\pi(i)),
\label{eq:prob_perm}
\ee
where $per(\wmat) = \sum_{\pi \in S_n} \prod_{i=1}^n \wmat(i, \pi(i))$ is the normalizing constant, given by the permanent of the matrix $\wmat$.

\setcounter{claim}{0}
\begin{claim}
	\label{claim:P_W_conditional}	
	Under $H_0$, the probability $P_{\wmat}(\pi)$ represents the probability of observing permuted datasets conditional on the marginal sets, that is, $P_{\wmat}(\pi(\sample)) = P_0(\pi \mid x, \sample_y)$, where $P_0$ denotes the probability under quasi-independence.	
\end{claim}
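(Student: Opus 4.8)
The plan is to evaluate the conditional probability $P_0(\pi \mid x, \sample_y)$ directly from the i.i.d.\ structure of the sample, and then simplify it using the quasi-independence factorization of the sampling density. Since $X$ and $Y$ are continuous, the values $y_1,\dots,y_n$ are almost surely distinct, so the $Y$-vectors compatible with the ordered vector $x$ and the unordered sample $\sample_y$ are exactly the $n!$ reorderings $\big(y_{\sigma(1)},\dots,y_{\sigma(n)}\big)$, $\sigma\in S_n$; under quasi-independence, $P_0(\pi\mid x,\sample_y)$ is the probability that the coupling equals the one given by $\pi$, and identifying each reordering with the corresponding permuted dataset, this is the quantity written $P_{\wmat}(\pi(\sample))$ in the claim. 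Because the pairs are i.i.d.\ with density $f^{(w)}_{X,Y}$, the conditional law of $(Y_1,\dots,Y_n)$ given $(X_1,\dots,X_n)=x$ has density $\prod_{i=1}^n f^{(w)}_{X,Y}(x_i,u_i)\big/\prod_{i=1}^n \xpdf^{(w)}(x_i)$, and disintegrating further onto the event $\{Y_1,\dots,Y_n\}=\sample_y$ produces a distribution on the $n!$ reorderings given by
\[
P_0(\pi\mid x,\sample_y)=\frac{\prod_{i=1}^n f^{(w)}_{X,Y}(x_i,y_{\pi(i)})}{\sum_{\sigma\in S_n}\prod_{i=1}^n f^{(w)}_{X,Y}(x_i,y_{\sigma(i)})},
\]
the common factor $\prod_i \xpdf^{(w)}(x_i)$ cancelling.

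Next I would insert the quasi-independence form $f^{(w)}_{X,Y}(x,y)=\wfun(x,y)\,\tilde{\xpdf}(x)\,\tilde{\ypdf}(y)/C$, with $C=\int\!\!\int \wfun(s,t)\tilde{\xpdf}(s)\tilde{\ypdf}(t)\,ds\,dt\in(0,\infty)$ the normalizing constant implied by Definition~\ref{def:quasi_independence}. Each product in the display then factors as
\[
\prod_{i=1}^n f^{(w)}_{X,Y}(x_i,y_{\pi(i)})=C^{-n}\Big(\prod_{i=1}^n \tilde{\xpdf}(x_i)\Big)\Big(\prod_{i=1}^n \tilde{\ypdf}(y_{\pi(i)})\Big)\prod_{i=1}^n \wfun(x_i,y_{\pi(i)}),
\]
and the key observation is that $\prod_{i=1}^n \tilde{\ypdf}(y_{\pi(i)})=\prod_{j=1}^n \tilde{\ypdf}(y_j)$ is merely a reindexed product, hence independent of $\pi$, as are $\prod_i \tilde{\xpdf}(x_i)$ and $C^{-n}$. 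Cancelling these $\pi$-free factors between numerator and denominator leaves
\[
P_0(\pi\mid x,\sample_y)=\frac{\prod_{i=1}^n \wfun(x_i,y_{\pi(i)})}{\sum_{\sigma\in S_n}\prod_{i=1}^n \wfun(x_i,y_{\sigma(i)})}=\frac{\prod_{i=1}^n \wmat(i,\pi(i))}{per(\wmat)}=P_{\wmat}(\pi),
\]
which is the assertion.

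Two points remain to be handled carefully. First, $per(\wmat)$ is almost surely positive, so the ratios above are well defined: the observed sample is drawn from $f^{(w)}_{X,Y}\propto \wfun\, f_{XY}$, whence $\wfun(x_i,y_i)>0$ for every $i$ with probability one, and the identity permutation $\permid$ already contributes the strictly positive term $\prod_i\wmat(i,i)$ to $per(\wmat)$. Second, the passage from the continuous conditional density of $(Y_1,\dots,Y_n)$ given $(X_1,\dots,X_n)$ to a discrete law on $S_n$ should be phrased as a disintegration onto the finite orbit $\{(y_{\sigma(1)},\dots,y_{\sigma(n)}):\sigma\in S_n\}$ of the observed $y$-values (the relevant Jacobian being $1$, since the map only permutes coordinates). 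I expect this conditioning step to be the only genuinely delicate part of the argument; the algebraic cancellation that produces the permanent is immediate once the quasi-independence factorization has been inserted.
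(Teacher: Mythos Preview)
Your argument is correct and follows essentially the same route as the paper's proof: write the conditional probability of a permutation given $(x,\sample_y)$ as the ratio of products of the weighted densities over all permutations, substitute the quasi-independence factorization $f^{(w)}_{X,Y}\propto \wfun\,\tilde f_X\,\tilde f_Y$, and cancel the $\pi$-free factors to obtain $P_{\wmat}(\pi)$. Your additional justifications (distinct $y$-values, $per(\wmat)>0$ via the identity permutation, and the disintegration step) only make the paper's terse proof more rigorous.
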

\begin{proof}
	For a general weighted model, we have
	\be
	\label{eq:generalperm}
	P(\pi(\sample) \mid x, \sample_y) = \frac{\prod_{i=1}^n \jointpdf^{(w)}(x_i, y_{\pi(i)}) }{\sum_{\pi'\in S_n} \prod_{i=1}^n \jointpdf^{(w)}(x_i, y_{\pi'(i)}) }.
	\ee
	Under the null, $\jointpdf^{(w)}(x,y)\propto w(x,y)\tilde{\xpdf}(x)\tilde{\ypdf}(y)$, hence
	\begin{eqnarray*}	
		P_0(\pi(\sample) \mid x, \sample_y) = \frac{\prod_{i=1}^n \tilde{\xpdf}(x_i) \tilde{\ypdf}(y_{\pi(i)}) \wmat(i,\pi(i))}{\sum_{\pi' \in S_n} \prod_{i=1}^n \tilde{\xpdf}(x_i) \tilde{\ypdf}(y_{\pi'(i)}) \wmat(i,\pi'(i))}
		= P_{\wmat}(\pi).
	\end{eqnarray*}	
\end{proof}

When $\wfun(x,y)$ is a truncation function, $P_{\wmat}(\pi)$ is simply the uniform distribution over the set of valid permutations, i.e. permutations $\pi$ yielding permuted datasets $\pi(\sample)$ which are consistent with the truncation.
The next lemma shows that permuted data points drawn from Equation \eqref{eq:generalperm} follow the distribution of $n$ independent copies of $(X,Y) \sim \jointcdf^{(w)}$.
\begin{lemma}
\label{lemma:doulbe_randomization}	
Let $\dataset \sim  [\jointcdf^{(w)}]^n$ and conditionally on $\dataset$ let $\pi$ be a permutation having the conditional probability law given in Equation \eqref{eq:generalperm}. Then $\pi(\dataset) \sim  [\jointcdf^{(w)}]^n$.

\begin{proof}
Using Equation \eqref{eq:generalperm} and the law of total probability, we have
\begin{eqnarray}
f_{\pi(\dataset)}\Big((x_1,y_1),\ldots,(x_n,y_n)\Big)& =&
 \sum_{\pi \in S_n} \frac{\prod_{i=1}^n \jointpdf^{(w)}(x_i, y_i)}{\sum_{\pi'} \prod_{i=1}^n \jointpdf^{(w)}(x_i, y_{\pi'(i)}) } \prod_{i=1}^n \jointpdf^{(w)}(x_i, y_{\pi(i)}) \nonumber \\
 & =& \prod_{i=1}^n \jointpdf^{(w)}(x_i, y_i).
\end{eqnarray}
\end{proof}
\end{lemma}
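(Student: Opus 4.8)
The plan is to compute the joint (Lebesgue) density of the permuted sample $\pi(\dataset)$ at an arbitrary point $\big((x_1,y_1),\ldots,(x_n,y_n)\big)\in(\mathbb{R}^2)^n$ and check that it coincides with $\prod_{i=1}^n \jointpdf^{(w)}(x_i,y_i)$, the density of $[\jointcdf^{(w)}]^n$. Since the $y$-coordinates of $n$ i.i.d.\ continuous draws are almost surely distinct, I may restrict to points with distinct $y_i$ (tied configurations form a null set). The first step is to decompose over the ``source'' arrangement of the $y$-labels: $\pi(\dataset)$ equals the prescribed point exactly when the underlying sample is $\big((x_1,y_{\sigma(1)}),\ldots,(x_n,y_{\sigma(n)})\big)$ for some $\sigma\in S_n$ and the drawn permutation equals $\sigma^{-1}$. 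The law of total probability then gives
\[ f_{\pi(\dataset)}\big((x_1,y_1),\ldots,(x_n,y_n)\big)=\sum_{\sigma\in S_n}\Big[\prod_{i=1}^n \jointpdf^{(w)}(x_i,y_{\sigma(i)})\Big]\,P\big(\pi=\sigma^{-1}\mid\dataset=((x_1,y_{\sigma(1)}),\ldots,(x_n,y_{\sigma(n)}))\big). \]

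The second step is to evaluate this conditional probability using Equation \eqref{eq:generalperm}. Abbreviating the $y$-labels of the conditioning sample by $u_j:=y_{\sigma(j)}$, Equation \eqref{eq:generalperm} reads $P(\pi=\rho\mid\dataset)=\prod_{i=1}^n \jointpdf^{(w)}(x_i,u_{\rho(i)})\,/\,\sum_{\rho'\in S_n}\prod_{i=1}^n \jointpdf^{(w)}(x_i,u_{\rho'(i)})$ for every $\rho\in S_n$. Setting $\rho=\sigma^{-1}$ collapses the numerator to $\prod_{i=1}^n \jointpdf^{(w)}(x_i,y_i)$, and the key point is that the denominator is a symmetric function of the multiset $\{y_1,\ldots,y_n\}$: reindexing by $\pi'=\sigma\circ\rho'$ shows it equals $\sum_{\pi'\in S_n}\prod_{i=1}^n \jointpdf^{(w)}(x_i,y_{\pi'(i)})$, independently of $\sigma$.

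Plugging this back, the conditional probability is the $\sigma$-free ratio $\prod_{i=1}^n \jointpdf^{(w)}(x_i,y_i)\,/\,\sum_{\pi'\in S_n}\prod_{i=1}^n \jointpdf^{(w)}(x_i,y_{\pi'(i)})$, so it pulls out of the sum over $\sigma$, leaving
\[ f_{\pi(\dataset)}\big((x_1,y_1),\ldots,(x_n,y_n)\big)=\frac{\prod_{i=1}^n \jointpdf^{(w)}(x_i,y_i)}{\sum_{\pi'\in S_n}\prod_{i=1}^n \jointpdf^{(w)}(x_i,y_{\pi'(i)})}\;\sum_{\sigma\in S_n}\prod_{i=1}^n \jointpdf^{(w)}(x_i,y_{\sigma(i)})=\prod_{i=1}^n \jointpdf^{(w)}(x_i,y_i), \]
because the $\sigma$-sum reproduces the denominator. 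This is the asserted equality of densities, hence $\pi(\dataset)\sim[\jointcdf^{(w)}]^n$.

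I expect the main (indeed essentially the only) obstacle to be careful permutation bookkeeping: Equation \eqref{eq:generalperm} specifies the law of $\pi$ in terms of the \emph{observed, pre-permutation} sample, so one has to track precisely how this law re-expresses itself when the conditioning sample is an arbitrary rearrangement $\sigma$ of the $y$'s, and then verify that its normalizing sum is invariant under relabeling the $y$'s --- once this symmetry is noted, the cancellation is immediate. A minor secondary point is the measure-theoretic framing: the displayed identity holds as an equality of Lebesgue densities off the null set of tied $y$'s, which is what delivers the distributional conclusion.
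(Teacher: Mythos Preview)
Your proof is correct and follows essentially the same route as the paper's: decompose the density of $\pi(\dataset)$ via the law of total probability over the possible source permutations, use Equation \eqref{eq:generalperm} for the conditional law of $\pi$, and observe that the normalizing sum is symmetric in the $y$'s so that it cancels against the outer sum. Your version simply makes the bookkeeping explicit (introducing $\sigma$ and checking $\pi=\sigma^{-1}$, noting the reindexing $\pi'=\sigma\circ\rho'$, and restricting to distinct $y_i$), whereas the paper compresses all of this into a single displayed equation.
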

Recalling that under the null, Equation \eqref{eq:generalperm} reduces to \eqref{eq:prob_perm} (by Claim \ref{claim:P_W_conditional}), we have
\begin{corollary}
	\label{cor:alpha}	
Under the null, permuted data points drawn according to Equation \eqref{eq:prob_perm} follow the distribution of $n$ independent copies of $(X,Y) \sim [\tilde\xcdf \tilde\ycdf]^{(w)}$.
\end{corollary}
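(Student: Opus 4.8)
The plan is to obtain the statement as an immediate specialization of Lemma~\ref{lemma:doulbe_randomization}, with Claim~\ref{claim:P_W_conditional} used to identify the sampling law \eqref{eq:prob_perm} with the conditional permutation law \eqref{eq:generalperm}. No new computation is needed beyond matching hypotheses; the entire content is bookkeeping about what ``under the null'' means.

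First I would make explicit that under $H_0$ the data-generating distribution is exactly the weighted independent distribution. By Definition~\ref{def:quasi_independence}, quasi-independence of $\jointcdf^{(w)}$ means $\jointcdf^{(w)}(x,y) \propto \int_{-\infty}^{x}\int_{-\infty}^{y}\wfun(s,t)\tilde{\xpdf}(s)\tilde{\ypdf}(t)\,dt\,ds$; differentiating gives $\jointpdf^{(w)}(x,y)\propto \wfun(x,y)\tilde{\xpdf}(x)\tilde{\ypdf}(y)$, and since both sides are normalized densities proportional to the same function this is precisely $\jointpdf^{(w)} = [\tilde{\xpdf}\tilde{\ypdf}]^{(w)}$, i.e. $\jointcdf^{(w)} = [\tilde{\xcdf}\tilde{\ycdf}]^{(w)}$. (This is the same identification already carried out inside the proof of Claim~\ref{claim:P_W_conditional}.) Consequently the observed sample $\dataset$ under the null satisfies $\dataset \sim [\jointcdf^{(w)}]^n = \big([\tilde{\xcdf}\tilde{\ycdf}]^{(w)}\big)^n$.

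Next I would invoke Claim~\ref{claim:P_W_conditional}: conditionally on the marginal sets $(x,\sample_y)$, the null permutation law \eqref{eq:generalperm} reduces to $P_{\wmat}$ of \eqref{eq:prob_perm}. Hence drawing $\pi\sim P_{\wmat}$ and forming $\pi(\dataset)$ is exactly the two-stage (``double randomization'') construction of Lemma~\ref{lemma:doulbe_randomization} applied with the weighted law taken to be $\jointcdf^{(w)} = [\tilde{\xcdf}\tilde{\ycdf}]^{(w)}$. Lemma~\ref{lemma:doulbe_randomization} then yields $\pi(\dataset)\sim \big([\tilde{\xcdf}\tilde{\ycdf}]^{(w)}\big)^n$, which, reading off the product form of the resulting density, says the permuted pairs are $n$ independent copies of $(X,Y)\sim[\tilde{\xcdf}\tilde{\ycdf}]^{(w)}$; this is the claim. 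Since the argument is just a chaining of the two preceding results, I do not expect a genuine obstacle --- the only point needing care is spelling out that ``quasi-independence'' is literally the identity $\jointcdf^{(w)} = [\tilde{\xcdf}\tilde{\ycdf}]^{(w)}$, so that Lemma~\ref{lemma:doulbe_randomization}, stated for an arbitrary weighted law, legitimately applies with that particular law, and that the statement is understood after averaging over the (data-dependent) draw of $\pi$.
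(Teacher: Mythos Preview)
Your proposal is correct and matches the paper's approach exactly: the corollary is presented there as an immediate consequence of Lemma~\ref{lemma:doulbe_randomization} together with Claim~\ref{claim:P_W_conditional}, with the only proof text being the sentence ``Recalling that under the null, Equation~\eqref{eq:generalperm} reduces to~\eqref{eq:prob_perm} (by Claim~\ref{claim:P_W_conditional}), we have\ldots''. You simply spell out more carefully the identification $\jointcdf^{(w)} = [\tilde{\xcdf}\tilde{\ycdf}]^{(w)}$ that the paper leaves implicit.
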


\subsection{The Weighted-Permutation Test of Independence}

Let $T(\sample)$ be any test statistic. The permutation test consists of comparing $T(\sample)$ to its null distribution over all permuted samples $\pi(\sample)$, and calculating the P-value by the proportions of permutations $\pi_i$ with
test statistic $T(\pi_i(\sample))$ exceeding $T(\sample)$. In practice, a large number of permutations, $B$, is sampled, and the P-value is approximated by
\be
{P}_{value} = \frac{1}{\nperm+1} \sum_{i=0}^{\nperm} \indicator{ T(\pi_i(\sample)) \geq T(\sample) },
\label{eq:empirical_pval}  
\ee
where  $\pi_0$ is the identity permutation corresponding to $T(\sample)$. We formalize our weighted-permutation test of independence as shown in Algorithm \ref{alg:weighted_permutations}.

\begin{algorithm}
	\caption{{\small Weighted Permutation Test of Quasi-Independence}} {\small
	\begin{algorithmic}[1]
		\Statex {\bf Input:} $\sample$ - sample, $\wfun(x,y)$ - bias function, $T : \mathbb{R}^{2n} \to \mathbb{R}$ - a test statistic
		\Statex {\bf Parameters:} $\nperm$ - number of permutations
		\State Generate $\nperm$ permutations $\pi_1,..,\pi_{\nperm} \sim P_{\wmat}$. 
		\State Compute the test statistic $T_0 \equiv T(\sample)$.
		\For{$i=1$ to $\nperm$}		
			\State Compute the test-statistic for the permuted dataset, $T_i = T(\pi_i(\sample))$.
		\EndFor
		\State {\bf Output:} $P_{value} \equiv \frac{1}{\nperm+1} \sum_{i=0}^{\nperm} \indicator{T_i \geq T_0}$.
		\end{algorithmic}
\label{alg:weighted_permutations}}
\end{algorithm}

Corollary \ref{cor:alpha} assures that under the null distribution $P_{value} \sim U[\frac{1}{B+1},...,\frac{B}{B+1},1]$ the type-1 error probability of the weighted permutation test is at most $\alpha$. (The addition of $1$ to the denominator and numerator in stage $6$, i.e. including the original sample, is necessary to ensure type-1 error below $\alpha$, but can be neglected in practice for large $B$.)

In order to implement Algorithm \ref{alg:weighted_permutations}, a method to sample weighted permutations $\pi_i \sim P_{\wmat}$ is required. An MCMC algorithm that generates such permutations is discussed next.

\subsection{Sampling Permutations using MCMC}
\label{sec:mcmc_permutations}
The case of sampling uniformly from a restricted set of permutations, i.e., permutations $\pi$ with $P_{\cal W}(\pi)>0$, was considered by \cite{diaconis2001statistical}. Their algorithm deals with the important special case of truncation with a 0/1 weight function. To enable sampling from a general distribution, we utilize the Metropolis-Hasting (MH) algorithm \citep{metropolis1953equation,hastings1970monte}. Let $\pi_t = (\pi_t(1),..,\pi_t(n))$ be the permutation at step $t$. Define the neighbours of $\pi_t$ to be all permutations obtained from $\pi_t$ by a single swap, that is,
\[
Neig(\pi_t) \equiv\{\pi_t^{i \leftrightarrow j} \equiv \big(\pi_t(1),..\pi_t(j),..,\pi_t(i), .., \pi_t(n)\big),\hspace{0.04in} \forall i < j\}.
\]
We then proceed according to the standard MH algorithm: at each iteration we sample a permutation uniformly from this set $\pi_t^{i \leftrightarrow j} \sim U [Neig(\pi_t)]$, as well as generate a uniform random number $u$ on $[0,1]$. Finally, we accept the new permutation only if
\[
u \leq \frac{P(\pi_t^{i \leftrightarrow j})}{P(\pi_t)} = \frac{\wmat(i,\pi_t(j)) \wmat(j,\pi_t(i))}{\wmat(i,\pi_t(i))\wmat(j,\pi_t(j))}.
\]
A similar algorithm was suggested by \cite{efron1999nonparametric} for doubly truncated data. However, for truncated data the weights are all 0 or 1, making the problem much simpler.
Algorithm \ref{alg:mcmc_permutations} describes our MCMC approach step-by-step.
\begin{algorithm}
	\caption{{\small MCMC for Biased Sampling of Permutations}}  {\small
	\begin{algorithmic}[1]
		\Statex {\bf Input:} $\sample$ - sample, $\wfun(x,y)$ - bias function
		\Statex {\bf Parameters:} $\nperm$ - number of permutations, $M_0$ - 'burn-in' number of steps, $M$ - number of steps between two permutations.
		\State Compute $\wmat(i,j) = \wfun(x_i,y_j), \quad \forall i,j=1,..,n$.
		\State Set $\pi_0$ the identity permutation $\pi_0(i) = i$.
		\For{$t=0$ to $M_0 + \nperm M - 1$}
			\State Sample $\pi_t^{i \leftrightarrow j} \sim U [Neig(\pi_t)]$, and $u \sim U[0,1]$.		
			\If {$u \leq \frac{\wmat(i,\pi_t(j)) \wmat(j,\pi_t(i))}{\wmat(i,\pi_t(i))\wmat(j,\pi_t(j))}$}
				\State set $\pi_{t+1} \leftarrow \pi_t^{i \leftrightarrow j}$.
			\Else
				\State set $\pi_{t+1} \leftarrow \pi_t$.
			\EndIf
		\EndFor		
		\State {\bf Output:} The resulting $\nperm$ permutations $\pi_{M_0},\pi_{M_0+M},..,\pi_{M_0+\nperm M}$.
	\end{algorithmic}
	\label{alg:mcmc_permutations}}
\end{algorithm}

\subsection{Importance Sampling}
\label{sec:importance_sampling}
Due to the difficulty of sampling directly from the weighted permutations distribution $P_{\wmat}$, we propose here another approach: sample permutations $\pi_1,\ldots,\pi_B$ according to an importance probability law $P_{IS}$ such that $P_{IS}(\pi)>0$ whenever $P_{\wmat}(\pi)>0$, and calculate the P-value by
\begin{equation}
P_{value} \equiv \frac{\sum_{i=0}^{\nperm} \frac{ P_{\wmat}(\pi_i)}{ P_{IS}(\pi_i)} \indicator{ T_i \geq  T_0 } } {\sum_{i=0}^{\nperm}\frac{P_{\wmat}(\pi_i)}{P_{IS}(\pi_i)}}.
\label{eq:IS_pval}
\end{equation}
The unknown term $per(\wmat)$ appearing in $P_{\wmat} $ (see Equation \eqref{eq:prob_perm}) is cancelled in the above equation, thus enabling us to compute the $P_{value}$ even when $P_{\wmat}$ is known only up to a normalizing constant.
\cite{chen2007sequential} suggest this importance sampling algorithm for statistical inference under truncated data. It is based on a simple sequential method to generate permutations under $P_{IS}$. \cite{kou2009approximating} have generalized one of the approaches proposed in \cite{chen2007sequential}
to estimate the permanent of general weight functions. We have derived several sequential importance sampling approaches, similar to those of \cite{chen2007sequential} and \cite{kou2009approximating}, that are applicable for general weight functions $\wmat$, and investigated their performances in testing.

\cite{harrison2012conservative} shows that for any test statistic satisfying mild invariance properties, the test that includes the identity permutation
in the P-value calculation in Equation \eqref{eq:IS_pval} controls the type-1 error at level $\alpha$ (see his Theorem 1). 
This result applies directly to our case by considering our approach as testing conditionally on the data $\sample$.

Although the correction above ensures validity, the importance sampling approach can perform very poorly if the importance distribution $P_{IS}$ is far from $P_{\wmat}$, for example when  $P_{IS}$ is taken to be the uniform distribution over $S_n$, because in such cases the $B$ sampled permutations have very low probability under $P_{\wmat}$. It is thus challenging to suggest a distribution $P_{IS}$ that is both easy to calculate and sample from, as well as close enough to $P_{\wmat}$ for general $\wmat$ - see Supp. Materials, Section \ref{sec:IS_appendix} for more details.  
\section{Bootstrap-Based Test}
\label{sec:bootstrap}
\subsection{The Bootstrap Algorithm}
The permutation test bypasses the need to estimate the (unbiased) marginal distributions $\xpdf, \ypdf$, which can be difficult and even impossible when $\wfun$ vanishes on part of the support of $\jointpdf$. Nevertheless, when we can estimate the univariate marginals consistently from the data, a bootstrap test is a viable alternative. Briefly, we generate independent samples from the estimated null distribution and compute the test statistic for each such sample. We then reject the null hypothesis if the observed test statistic is greater than the $1-\alpha$ quantile of the resulting bootstrap distribution. The test is summarized by Algorithm \ref{alg:bootstrap}.
\begin{algorithm}
	\caption{{\small Bootstrap-Based Test of Quasi-Independence}} {\small
	\begin{algorithmic}[1]
		\Statex {\bf Input:} $\sample$ - sample, $\wfun(x,y)$ - bias function, $T : \mathbb{R}^{2n} \to \mathbb{R}$ - a test statistic
		\Statex {\bf Parameters:} $B$ - number of bootstrap samples
		\State Estimate the marginals $\xcdfhat, \ycdfhat$. 	
		\State Compute the test statistic $T_0 = T(\sample)$.
		\For{$i=1$ to $B$}
			\State Generate a bootstrap sample $\sample_i$ by sampling with replacement $n$ i.i.d examples from $[\xcdfhat \ycdfhat]^{(w)}$. 
			\State Estimate the marginals $\hat{F}_{X,i}, \hat{F}_{Y,i}$ of the bootstrap sample $\sample_i$.
			\State Compute the test statistic $T_i = T(\sample_i)$.
		\EndFor		
		\State {\bf Output:} $P_{value} \equiv \frac{1}{B+1} \sum_{i=0}^{B} \indicator{ T_0 \ge T_i}$.
	\end{algorithmic}
\label{alg:bootstrap}}
\end{algorithm}

As an alternative of estimating the marginal distributions, samples can be drawn under the null from the unbiased conditional distribution of $X$ given the observed $Y$ values $y_1,\ldots,y_n$; \cite{efron1999nonparametric} apply this approach to doubly truncated data.

\subsection{Estimating the Marginal Distributions}
\label{sec:bootstrap_marginal_estimation_null}

The next challenge is implementing step $1$ of Algorithm \ref{alg:bootstrap}, namely estimating the univariate marginals, given a known bias function $\wfun(x,y)$.
Naturally, under a bias-sampling regime, the underlying marginals may not be identifiable, unless additional modelling assumptions, either on $\jointcdf$ or $\wfun(x,y)$, are made.

\subsubsection{Case 1: Estimating the Marginal Distributions Under Quasi-independence}
\label{sec:marginal_estimation_under_null}
For a valid test, it is enough to estimate $\xcdf$ and $\ycdf$ in the observable region under the null hypothesis of quasi-independence.  A general algorithm for estimation of the marginal densities in \eqref{eq:quasi_independence} under quasi-independence is developed next. \cite{bickel1991large} provide a somewhat similar algorithm for the case where $X$ is discrete, which reduces to the selection bias model of \cite{vardi1985empirical}.

Let $\tilde{X}\sim \tilde{F}_X, \tilde{Y}\sim \tilde{F}_Y$, where $\tilde{F}_X,\tilde{F}_Y$ are the cumulative distribution functions of $\tilde{f}_X,\tilde{f}_Y$ in \eqref{eq:quasi_independence}. Under quasi-independence, by the law of total probability, the density of $X$ can be written as ${f}_{X}(x)=\mathbb{E}\{w(x,\tilde{Y})\}\tilde{f}(x)/\mathbb{E}\{w(\tilde{X},\tilde{Y})\}$. Thus, an estimate of $\tilde{F}_Y$ yields an estimate for $\mathbb{E}\{w(x,\tilde{Y})\}$, which can be used to build an inverse weighting estimate for $\tilde{F}_X$:
\begin{equation} \label{eq:estim_update}
\widehat{\tilde{F}}_X(x)= \frac{\sum_{i=1}^n \indicator{x_{i}\le x}
\hat{\mathbb{E}}\{w(x_i,\tilde{Y})\}^{-1}} {\sum_{i=1}^n \hat{\mathbb{E}}\{w(x_i,\tilde{Y})\}^{-1}}.
\end{equation}
This estimate can be used in turn to estimate $\tilde{F}_Y$, suggesting an iterative procedure as described in Algorithm \ref{alg:est_marg}.
For the important case $\wfun(x,y) = \indicator{x<y}$, the algorithm reduces to the standard product-limit (PL) estimator for left and right truncated data, implemented, for example, in the DTDA package of R \citep{moreira2010dtda}. For more details and an extension to more than two variables see Supp. Materials, Section \ref{sec:IterativeAlgorithm_appendix}.

\begin{algorithm}
	\caption{{\small Estimation of Marginal Distributions Under Quasi-independence}} {\small
	\begin{algorithmic}[1]
		\Statex {\bf Input:} $\sample$ - sample, $\wfun(x,y)$ - bias function, $d(F_1,F_2)$ - distance function.
		\Statex {\bf Parameters:} $\epsilon$ - convergence criterion.
		\State Generate initial estimates $\tilde F^{new}_X,\tilde F^{new}_Y$ and set $\tilde F^{old}_X=\tilde F^{old}_Y\equiv 0$.
\While{$d(\tilde F^{old}_X,\tilde F^{new}_X)+d(\tilde F^{old}_Y,\tilde F^{new}_Y)>\epsilon$}
\State Set $\tilde F^{old}_X=\tilde F^{new}_X$ and $\tilde F^{old}_Y=\tilde F^{new}_Y$.
		\State Calculate $\mathbb{E}_{\tilde F^{new}_{Y}}\{w(x,{Y})\}$, and update $\tilde F^{new}_X$ using \eqref{eq:estim_update}.
\State Calculate $\mathbb{E}_{\tilde F^{new}_{X}}\{w({X},y)\}$, and update $\tilde F^{new}_Y$ using the equivalent for $\tilde{F}_Y$ of \eqref{eq:estim_update}.
\EndWhile
\State {\bf Output:} $\tilde F^{new}_X$ and $\tilde F^{new}_Y$.
	\end{algorithmic}
\label{alg:est_marg}}
\end{algorithm}

While Algorithm \ref{alg:est_marg} provides a general procedure to estimate a distribution under independence, for testing purposes it may result in low power. Consider the truncation model $\wfun(x,y) = \indicator{x<y}$. The PL estimators are consistent under the null hypothesis, but using them in our test leads to low power, even in seemingly very extreme situations of a strong dependence. For a test to perform reasonably well for moderate sample sizes, $\xcdf$ and $\ycdf$ should be estimated well not only under the null, but also under the alternative hypothesis (see Section \ref{sec:simulations} and Supp. Materials, Section B). The next example demonstrates this claim.

\noindent \textbf{Example: Difficulties in Detecting Quasi-Independence}

Consider the case where data is generated from a uniform
bivariate distribution over $\{0 < x, y < 1: \hspace{0.03in} |x − y| < 0.3\}$, and let $\wfun(x,y)=\indicator{x<y}$ be the standard truncation model.

We drew $n=500$ samples from this model and estimated the univariate marginals CDFs using the PL estimators. The left panel of Figure \ref{example} shows the sampled data points. The green and red curves in the right panel are the resulting PL estimates of $\tilde{F}_X$ and $\tilde{F}_Y$, respectively. Because the unbiased variables $\tilde{X}$ and $\tilde{Y}$ are exchangeable, they share the same underlying marginal distribution, depicted by the blue line. The product-limit estimates differ considerably from the true marginal distribution. When such CDFs generate the truncated data, the probability of a selection $(\tilde X < \tilde Y)$
is small, and when it happens, the values of $\tilde X$ and $\tilde Y$ tend to be close, yielding a scatter plot somewhat similar to the observed data. Indeed, the middle panel of Figure \ref{example} shows pairs obtained by generating independent
variables from the estimated product-limit curves $\xcdfhat$, $\ycdfhat$ and retaining only observations satisfying $\tilde X< \tilde Y$.
This example shows that independent variables under selection bias can produce data similar to that obtained by strongly dependent variables. Applying the bootstrap test using estimates of the marginal that are consistent only under the null independence assumption may result in a test with low power.

%

\begin{figure}[!h]
			\begin{tabular}{ccc}
					\includegraphics[trim={0 2 0 0}, width=0.3\columnwidth]{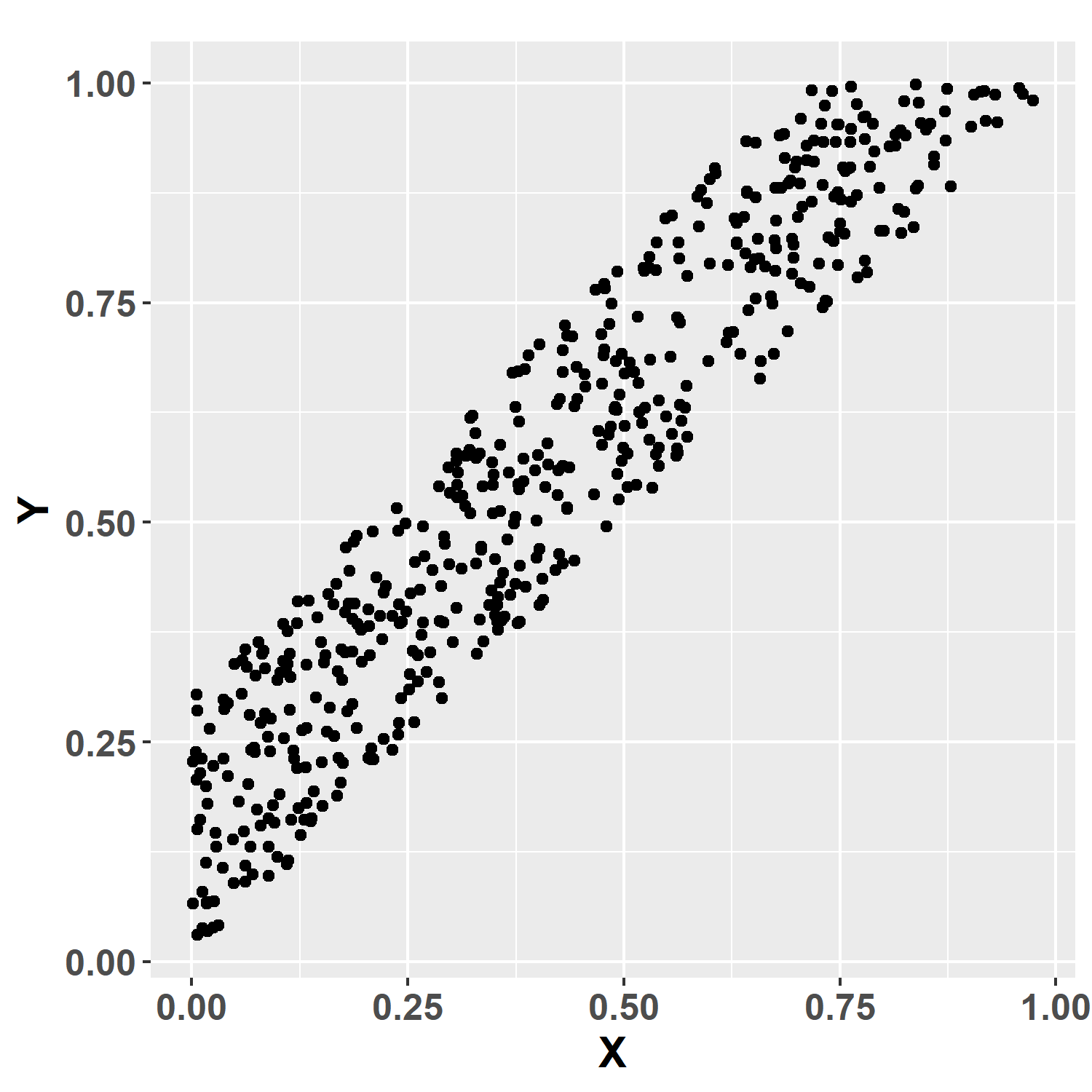}
				&
				\hspace{-0.1in}
				\includegraphics[trim={0 0 0 1.5cm}, width=0.3\columnwidth]{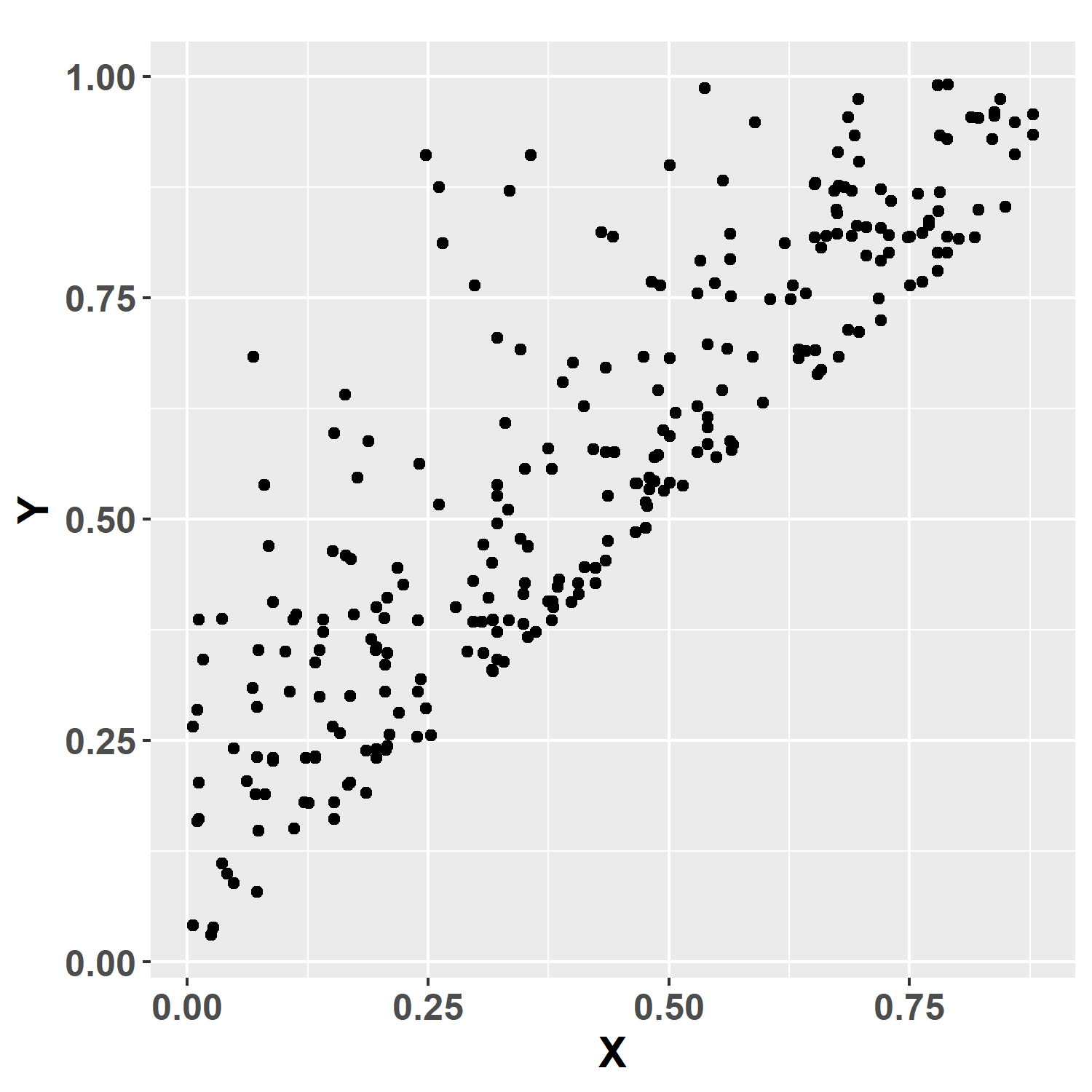}
				&	
					\hspace*{-0.1in}
					\includegraphics[width=0.3\columnwidth]{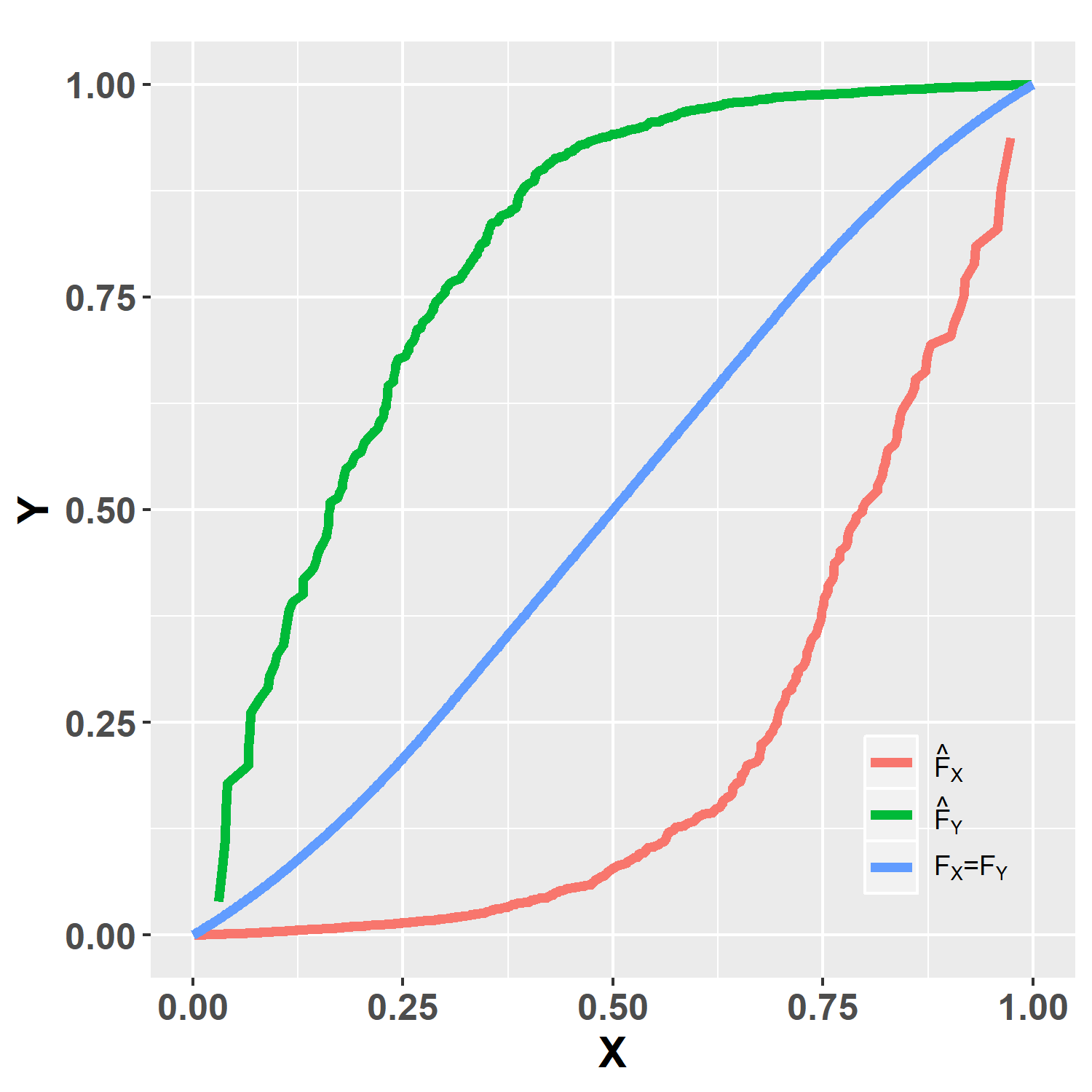}
		\end{tabular}								

	\caption{(left): a scatterplot of samples drawn from the true underlying distribution. (middle): a scatterplot of samples drawn from the truncated independence distribution, using the PL estimates. (right): the PL estimates of the univariate marginal CDFs $\xcdfhat$ and $\ycdfhat$ are biased and do not resemble the true underlying CDF $\xcdf=\ycdf$.}
\label{example}		
\end{figure}

%

A possible solution is to find estimators for the marginal CDFs that are consistent also under the alternative hypothesis of quasi-dependence. However, as the model is not identifiable under the alternative \citep{cheng2007nonparametric}, such estimators can be calculated only under additional assumptions, either on $\wfun(x,y)$, or on the underlying joint distribution (or both). We next demonstrate this through two different settings.

\subsubsection{Case 2: Strictly Positive $\wfun(x,y)>0$}

The problem of estimating non-parametrically a general multivariate distribution $F$ using weighted data is well known (e.g., \cite{vardi1985empirical}) and for $w>0$ the non-parametric maximum likelihood estimator (NPMLE) is given by:
\be
\jointcdfhat(t, s) = \frac{\sum_{i=1}^{n} \indicator{X_i\leq t, Y_i\leq s}w(X_i, Y_i)^{-1}}{\sum_{i=1}^{n}w(X_i, Y_i)^{-1}}.
\label{eq:non_parametric_MLE}
\ee
Estimators for $\xcdf$ and $\ycdf$ can be then obtained by marginalization of Equation \eqref{eq:non_parametric_MLE},
\be
\xcdfempir(t) =\jointcdfhat(x, \infty) = \frac{\sum_{i=1}^{n} \indicator{X_{i}\leq t}w(X_i,Y_i)^{-1}}{\sum_{i=1}^{n}w(X_i, Y_i)^{-1}}.
\label{eq:marginal_estimation_under_positive_weight_function}
\ee

The estimator above can be used whenever $w>0$ in the entire support of $\jointcdf$.
%
By the law of large numbers, $n^{-1}{\sum_{i=1}^{n} \indicator{X_{i}\leq t}w(X_i, Y_i)^{-1}}\to F_X(t)/\mathbb{E}_{\jointpdf} \{\wfun(X,Y)\}$ a.s. and $n^{-1}{\sum_{i=1}^{n}w(X_i, Y_i)^{-1}} \to 1/\mathbb{E}_{\jointpdf} \{\wfun(X,Y)\}$ a.s. so by the continuous mapping theorem $\xcdfempir(t) \to \xcdf(t)$ a.s. By similar arguments, $\ycdfempir(s) \to \ycdf(s)$ a.s.

\subsubsection{Case 3: Left Truncation $\wfun(x,y) = \indicator{x<y}$}
In contrast to the former case, when $\wfun(x,y)$ is a truncation function, estimating the marginals under quasi-independence is more challenging due to the actual loss of data and identifiability issues. Nevertheless, for certain types of truncation mechanisms, additional assumptions on the joint distribution may allow to reconstruct the marginals. In particular, the most familiar type of truncation in the statistical literature is left (or right) truncation, described by $\wfun(x,y) = \indicator{x<y}$.
The next proposition shows that under exchangeability, the empirical distribution function of the joint sample $X_1,\ldots,X_n,Y_1,\ldots,Y_n$ is a consistent estimator for the marginals (the proof is in the Supp. Methods, Section \ref{sec:SI_proofs}).
\begin{proposition}
Let $\jointcdf(x,y)$ be an exchangeable joint distribution having a density $\jointpdf(x,y) \equiv \jointpdf(y,x)$ and let $\sample \sim [\jointcdf^{(w)}]^n$ be a sample with the truncation weight function $\wfun(x,y) = \indicator{x<y}$. Let $\hat{F}^{(w),n}_X, \hat{F}^{(w),n}_Y$ be the empirical CDFs of $\xcdf^{(w)}, \ycdf^{(w)}$, respectively. Define:
\be
\xcdfempir(x) = \ycdfempir(x) = \frac{\hat{F}^{(w),n}_X(x)+\hat{F}^{(w),n}_Y(x)}{2}=\frac{1}{2n}\Big[\sum_{i=1}^n \indicator{X_i\le x} + \sum_{i=1}^n \indicator{Y_i\le x} \Big].
\label{eq:exchange_estimator}
\ee
Then $\xcdfempir, \ycdfempir \to \xcdf=\ycdf$  a.s.
\label{prop:consistent_estimator_under_exchangeability}
\end{proposition}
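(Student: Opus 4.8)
The plan is to reduce the statement to a single population-level identity --- that the symmetric average of the two observed (truncated) marginal CDFs equals the common untruncated marginal --- and then obtain almost sure convergence by applying the Glivenko--Cantelli theorem to each coordinate of the i.i.d.\ sample $\sample \sim [\jointcdf^{(w)}]^n$.

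First I would record the elementary consequences of exchangeability together with $\jointpdf$ being a density: $P(X=Y)=0$, and $P(X<Y)=P(Y<X)=\tfrac12$, so the normalizing constant in \eqref{eq:weighted_density} is $\mathbb{E}_{\jointpdf}\{\wfun(X,Y)\}=\tfrac12$. The crux is the observation that if $(X,Y)\sim \jointcdf$ then the reordered pair $\big(\min(X,Y),\max(X,Y)\big)$ has exactly the observed law $\jointcdf^{(w)}$: for any Borel set $A\subseteq\{x<y\}$, exchangeability gives $P\big((\min,\max)\in A\big)=P\big((X,Y)\in A\big)+P\big((Y,X)\in A\big)=2P\big((X,Y)\in A\big)=P\big((X,Y)\in A\mid X<Y\big)$. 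Consequently $\xcdf^{(w)}(t)=P\big(\min(X,Y)\le t\big)$ and $\ycdf^{(w)}(t)=P\big(\max(X,Y)\le t\big)$.

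Next I would use the pointwise identity $\indicator{\min(X,Y)\le t}+\indicator{\max(X,Y)\le t}=\indicator{X\le t}+\indicator{Y\le t}$ (both sides simply count how many of $X,Y$ lie at or below $t$); taking expectations and using $\xcdf=\ycdf$ (exchangeability again) yields $\xcdf^{(w)}(t)+\ycdf^{(w)}(t)=2\xcdf(t)$ for every $t$, i.e.\ $\tfrac12\big(\xcdf^{(w)}+\ycdf^{(w)}\big)=\xcdf=\ycdf$. Finally, since the first coordinates $X_1,\dots,X_n$ of $\sample$ are i.i.d.\ with CDF $\xcdf^{(w)}$ and the second coordinates $Y_1,\dots,Y_n$ are i.i.d.\ with CDF $\ycdf^{(w)}$, Glivenko--Cantelli gives $\hat{F}^{(w),n}_X\to\xcdf^{(w)}$ and $\hat{F}^{(w),n}_Y\to\ycdf^{(w)}$ a.s.\ uniformly; averaging and invoking the identity above gives $\xcdfempir=\ycdfempir=\tfrac12\big(\hat{F}^{(w),n}_X+\hat{F}^{(w),n}_Y\big)\to\xcdf=\ycdf$ a.s.

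The only non-routine step is the identification $\tfrac12\big(\xcdf^{(w)}+\ycdf^{(w)}\big)=\xcdf$; everything else (the value $\tfrac12$, the min/max representation, Glivenko--Cantelli, and averaging two a.s.\ convergent sequences) is standard. It is worth being careful that exchangeability is invoked twice for genuinely different reasons: once to obtain both $P(X<Y)=\tfrac12$ and the min/max representation of $\jointcdf^{(w)}$, and once more to identify the two untruncated marginals $\xcdf$ and $\ycdf$.
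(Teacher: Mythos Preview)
Your proof is correct and follows the same overall architecture as the paper: establish $P(X<Y)=\tfrac12$ from exchangeability and continuity, prove the population identity $\tfrac12\big(\xcdf^{(w)}+\ycdf^{(w)}\big)=\xcdf=\ycdf$, and then invoke Glivenko--Cantelli on each coordinate.

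The one genuine difference is in how the key identity is obtained. The paper works at the density level: it writes $\xpdf^{(w)}(t)=2\int_t^\infty \jointpdf(t,y)\,dy$ and $\ypdf^{(w)}(t)=2\int_{-\infty}^t \jointpdf(x,t)\,dx$, uses the symmetry $\jointpdf(t,y)=\jointpdf(y,t)$ to swap arguments in the first integral, and then adds the two to recover the full marginal density $\ypdf(t)$. You instead give a probabilistic argument: identify the observed law $\jointcdf^{(w)}$ with the law of $\big(\min(X,Y),\max(X,Y)\big)$ and use the counting identity $\indicator{\min\le t}+\indicator{\max\le t}=\indicator{X\le t}+\indicator{Y\le t}$. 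Your route is slightly more conceptual and avoids any integration or even the existence of a density for the marginals; the paper's route is a two-line density computation that makes the role of exchangeability (the swap $\jointpdf(t,y)=\jointpdf(y,t)$) visually explicit. Both arrive at the same identity, and the closing Glivenko--Cantelli step is identical.
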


\section{Test Statistics}
\label{sec:teststat}

\subsection{The Adjusted Hoeffding Statistic}

While the permutation and bootstrap approaches can be applied with any test statistic, our goal is to modify an existing omnibus test to weighted models in general and to truncation in particular. For the latter, most tests used to date are tailored to specific alternatives, such as monotone dependence. A recent new approach studied by \cite{chiou2018permutation} can test against a general alternative, but uses either significant computational resources or permutations with different sample sizes, so its significance level is not guaranteed. We are inspired by some popular non-parametric tests of independence such as \cite{thas2004nonparametric}, \cite{heller2012consistent}, and \cite{heller2016consistent},  and for concreteness, we describe the approach of the latter, which generalizes a modified version of \cite{hoeffding1948non}. Our problem requires two major modifications. First, biased sampling should be taken into account when computing the null distribution of the test statistic using a bootstrap or permutations resampling approach; this was addressed in the previous sections. Second, the test statistic compares observed counts  with their expectations under the null, and the computation of these expectations needs to be modified to accommodate biased sampling.

The test belongs to a family of tests which compare the observed counts $o_{\truncregion}$ to the expected counts $e_{\truncregion}$ for different sets $\truncregion \in \mathbb{R}^2$.
As in \cite{heller2016consistent}, our test statistic is based on Pearson's Chi-squared statistics, and we consider all partitions defined by the data $\sample$. Specifically, each data point $(x_i,y_i) \in \sample$ defines a partition of $\mathbb{R}^2$ into four quadrants:
\be
Q_{i}^{jk} \equiv \Big\{(x',y') \in \mathbb{R}^2 \: : \: \indicator{x'>x_i} = j, \indicator{y'>y_i} = k \Big\} \quad   j,k \in \{0,1\}.
\label{eq:data_quartiles_def}
\ee
For example, $Q_{i}^{00} = (-\infty, x_i] \times (-\infty, y_i]$ and $P\big((X,Y)\in Q_{i}^{00}\big) = \jointcdf^{(w)}(x_i,y_i)$.

Let $(x_i, y_i)$ be a point in the sample $\sample$. For a quadrant $Q_i^{jk}$, we denote the observed number of points by $o_{i}^{jk} \equiv o_{Q_i^{jk}}$ and the expected number of points under the null by $e_{i}^{jk} \equiv e_{Q_i^{jk}}$.  We then compute, for each quadrant, the scaled squared difference between the observed and expected number of points under $H_0$. Finally, we sum over all the sample points to get our test statistic: 
\be
T = \sum_{i=1}^n \sum_{j,k \in \{0,1\}} \frac{(o_{i}^{jk}-e_{i}^{jk})^2}{e_{i}^{jk}}.
\label{eq:modifed_hoeffding}
\ee

Estimating the expected values $e_{i}^{jk}$ requires the estimation of the null distribution, which may become highly non-trivial in the biased sampling setting. First, $\xcdf$ and $\ycdf$ may be un-identifiable, and therefore using plug-in estimators of $\xcdf, \ycdf$ may give a poor approximation of the distribution of the test statistic under the null. Second, evaluation of expectations or probabilities under the null may require computationally costly integration of the null distribution, $[\xpdf \ypdf]^{(w)}$, as opposed to a simple multiplication of the empirical marginals in the standard setting. If the expectations $e_{i}^{jk}$ are not estimated correctly, the WP test is still valid according to Corollary \ref{cor:alpha} and the bootstrap approach can be also applied, but power can be severely reduced, as is shown in the Supp. Materials, Section \ref{sec:fast_bootstrap}.

\subsubsection{Computing Expectations under Biased Sampling}
%
A natural approach is to estimate the marginals $\xcdf$ and $\ycdf$ under the null independence model and use them to calculate the expected count in a certain cell. However, as discussed in Section \ref{sec:bootstrap}, this approach works well only for special models. We therefore suggest here an alternative method that directly estimates the expected counts.

Let $P_{ij} \equiv P_{\wmat}(\pi(i)=j) =
\sum_{\pi \in S_n} P_{\wmat}(\pi) \indicator{ \pi(i)=j }$, and define the  Bernoulli random variables $\xi^\pi_{ij} \equiv \indicator{\pi(i) = j }$,
so $P_{ij} = \mathbb{E}(\xi^\pi_{ij})$.
Let $\truncregion \subset \mathbb{R}^2$ be an arbitrary set. Given a sample $\sample$, for any permutation $\pi$ of the data, denote the number of points in $\truncregion$ under $\pi$ (i.e., after permuting the data set $\sample$) by
$\obsset_{\truncregion}(\pi)= \sum_{i=1}^n \indicator{ (x_i,y_{\pi(i)}) \in {\truncregion} }$,
and let $\expectset_{\truncregion} \equiv \mathbb{E}_{P_{\wmat}} \{\obsset_{\truncregion}(\pi)\}$ be the expected number of data points in $\truncregion$, under the permutations distribution $P_{\wmat}$.


The $P_{ij}$ values determine the expected number $\expectset_{\truncregion}$ for any set $\truncregion$ via the following claim:
\begin{claim}
For any $\truncregion \subset \mathbb{R}^2$, the expected number of points $\expectset_{\truncregion}$ under the permutations distribution $P_{\wmat}$ is given by:
\be
\expectset_{\truncregion} = \sum_{i,j=1}^n \indicator{(x_i,y_j) \in \truncregion} \mathbb{E}_{P_{\wmat}}(\xi^\pi_{ij}) = \sum_{i,j=1}^n  \indicator{(x_i,y_j) \in \truncregion} P_{ij} .
\label{eq:expected_region}
\ee
\label{claim:expected_permutations}
\begin{proof}
By definition, we have:
\be
\expectset_{\truncregion} = \mathbb{E}_{P_{\wmat}} \big[\sum_{i=1}^n \indicator{ (x_i,y_{\pi(i)}) \in \truncregion }\big] =
\sum_{i,j=1}^{n} \indicator{(x_i,y_j) \in \truncregion} P_{\wmat}\big( \pi(i)=j \big) =
\sum_{i,j=1}^{n} \indicator{(x_i,y_j) \in \truncregion} P_{ij}.
\ee
\end{proof}
\end{claim}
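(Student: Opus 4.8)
The plan is to compute $\expectset_{\truncregion}$ straight from its definition as $\mathbb{E}_{P_{\wmat}}\{\obsset_{\truncregion}(\pi)\}$, using only linearity of expectation together with the bookkeeping identity that relates the ``permuted membership'' indicator to the Bernoulli variables $\xi^\pi_{ij}$. First I would write out $\obsset_{\truncregion}(\pi) = \sum_{i=1}^n \indicator{(x_i,y_{\pi(i)}) \in \truncregion}$ and take the expectation under $P_{\wmat}$. Since this is a finite sum of bounded random variables, interchanging expectation and summation is immediate, giving $\expectset_{\truncregion} = \sum_{i=1}^n \mathbb{E}_{P_{\wmat}}\{\indicator{(x_i,y_{\pi(i)}) \in \truncregion}\}$.

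The second step is the only substantive observation: for each fixed $i$, exactly one value $j \in \{1,\ldots,n\}$ satisfies $\pi(i)=j$, so that $\indicator{(x_i,y_{\pi(i)}) \in \truncregion} = \sum_{j=1}^n \indicator{\pi(i)=j}\,\indicator{(x_i,y_j) \in \truncregion} = \sum_{j=1}^n \xi^\pi_{ij}\,\indicator{(x_i,y_j) \in \truncregion}$, where the coefficients $\indicator{(x_i,y_j) \in \truncregion}$ are deterministic given the sample. Taking expectations, pulling out the deterministic factors, and recalling that $\mathbb{E}_{P_{\wmat}}(\xi^\pi_{ij}) = P_{\wmat}(\pi(i)=j) = P_{ij}$ by definition yields $\mathbb{E}_{P_{\wmat}}\{\indicator{(x_i,y_{\pi(i)}) \in \truncregion}\} = \sum_{j=1}^n \indicator{(x_i,y_j) \in \truncregion} P_{ij}$.

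Summing over $i$ then gives $\expectset_{\truncregion} = \sum_{i,j=1}^n \indicator{(x_i,y_j) \in \truncregion} P_{ij}$, which is exactly the claimed identity (the middle expression in the statement being just the same display before substituting $\mathbb{E}_{P_{\wmat}}(\xi^\pi_{ij}) = P_{ij}$). There is no real obstacle here: the argument is a one-line application of linearity of expectation, and the only point requiring a moment's care is the disjoint decomposition of the membership indicator over the $n$ possible images of $i$ under $\pi$ — everything else is routine. Note that this is essentially the proof already sketched in the displayed equation following the claim; a fuller write-up would simply insert the intermediate $\xi^\pi_{ij}$ step to make the passage from $\obsset_{\truncregion}(\pi)$ to $P_{ij}$ explicit.
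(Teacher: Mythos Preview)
Your proposal is correct and follows exactly the same approach as the paper: start from the definition of $\expectset_{\truncregion}$, apply linearity of expectation, and decompose each term over the possible values of $\pi(i)$ to obtain the sum $\sum_{i,j} \indicator{(x_i,y_j)\in\truncregion} P_{ij}$. The paper compresses this into a single displayed line, while you have spelled out the intermediate $\xi^\pi_{ij}$ step, but there is no substantive difference.
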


The probabilities $P_{ij}$ can be easily estimated using the MCMC scheme described in Algorithm \ref{alg:mcmc_permutations}: let $\pi_0$ be the identity permutation and $\pi_1, .., \pi_{\nperm}$ be the sampled permutations, and define the following estimator:
\be
\hat{P}_{ij} \equiv \frac{1}{B+1} \sum_{b=0}^B \indicator{\pi_b(i)=j}.
\label{eq:P_ij_estimator}
\ee
When we sample permutations using an importance distribution $P_{IS}$ (see Section \ref{sec:importance_sampling}),
the above estimator for ${P}_{ij}$ is replaced by:
\be
\hat{P}_{ij}^{(IS)} \equiv \frac{\sum_{b=0}^B \indicator{\pi_b(i)=j} \frac{P_{\wmat}(\pi_b)}{P_{IS}(\pi_b)} }{\sum_{b=0}^B  \frac{P_{\wmat}(\pi_b)}{P_{IS}(\pi_b)}}.
\label{eq:P_ij_estimator_IS}
\ee

Plugging Equation \eqref{eq:P_ij_estimator} (or Equation \eqref{eq:P_ij_estimator_IS}) into Equation \eqref{eq:expected_region} gives an estimator of $\expectset_{\truncregion}$,
\be
{\hat{\expectset}}_{\truncregion} = \sum_{i,j=1}^{n} \indicator{(x_i,y_j) \in \truncregion} \hat{P}_{ij},
\label{eq:estimate_expected}
\ee
 which can be used in the Chi-squared statistic.

For the bootstrap approach, the estimate of the null distribution is used in a straightforward manner. Consider, for example, the bottom-left quadrant with respect to a point $(x_i,y_i)$, $Q_{i}^{00}$; given estimators $\xcdfhat, \ycdfhat$ of the univariate CDFs, a natural estimator for the mass (up to a normalizing constant) that the null puts on $Q_{i}^{00}$ is given by
\be
\widehat{e_{i}^{00}} = n[\xcdfhat \ycdfhat]^{(w)} (x_i,y_i) .
\label{eq:expected_quardant}
\ee

\subsection{An Inverse Weighting Statistic for Strictly Positive $\wfun$}
\label{sec:inverse_weight_stat}
When $\wfun$ is strictly positive, a test can utilize an inverse weighting approach.
For a set  $\truncregion \in \mathbb{R}^2$, define the {\it inverse weighted} observed and expected counts
$o_{\truncregion}^{(\wfun)}$ and $e_{\truncregion}^{(\wfun)}$, respectively:
\begin{align}
o_{\truncregion}^{(\wfun)} &= \sum_{\ell=1}^n \setindicator{\{(x_{\ell}, y_{\ell}) \in \truncregion\}} \wfun(x_{\ell}, y_{\ell})^{-1}, \nonumber \\
e_{\truncregion}^{(\wfun)} &= n E_{[\xcdf \ycdf]^{(w)}} \{ \setindicator{\truncregion} \wfun(X,Y)^{-1} \}.
\label{eq:weighted_obs_exp}
\end{align}

For the quadrants $Q_i^{jk}$, we can compute estimates for the expected weighted counts ${e_i^{jk}}^{(\wfun)} \equiv e_{Q_i^{jk}}^{(\wfun)}$ by multiplying the corresponding marginal inverse weighted counts. For example, for $Q_i^{00}$:
\be
{e_i^{00}}^{(\wfun)} = \frac{(\sum_{\ell=1}^n \wfun(x_{\ell}, y_{\ell})^{-1} \indicator{x_{\ell} \le x_i}) (\sum_{\ell=1}^n \wfun(x_{\ell}, y_{\ell})^{-1} \indicator{y_{\ell} \le y_i}) }{\sum_{\ell=1}^n \wfun(x_{\ell}, y_{\ell})^{-1}} .
\label{eq:weighted_counts}
\ee
Similarly to Equation \eqref{eq:modifed_hoeffding}, the weighted statistic is given by
\be
T^{(w)} = \sum_{i=1}^n \sum_{j,k \in \{0,1\}} \frac{({o_{i}^{jk}}^{(w)}-{e_{i}^{jk}}^{(w)})^2}{{e_{i}^{jk}}^{(w)}}.
\label{eq:weighted_statistic}
\ee

\subsection{Unknown $\wfun$: Left Truncated Right Censored Data}
\label{sec:censoring}
In some applications the biased sampling function $\wfun$ is unknown. However, our methodology is still applicable when  $\wfun$ can be estimated consistently. In particular, we can tackle the important case of left truncation with censoring.
Consider the standard left-truncation right-censoring model where $(X_i,Y_i, C_i)$ are $n$ independent triplets, the joint density of $(X_i,Y_i)$ is proportional to $\jointpdf(x,y)\setindicator{\{x<y\}}$ for some density $\jointpdf$, and $(X_i,Y_i)$ are independent of the censoring variables $C_i$. We observe triplets $(X_i,\min(Y_i,X_i+C_i),\Delta_i$), where $\Delta_i=\setindicator{\{Y_i< X_i+C_i\}}$ are the censoring indicators with $\Delta_i=0$ and $1$ for censored and uncensored observations, respectively.  We suggest testing independence based on the uncensored observations, where censored observations are used only for estimation of the weight function.
Specifically, the conditional density of an uncensored observation is simply
\be
\frac{S(y-x)\jointpdf(x,y)\setindicator{\{x<y\}}}{\mathbb{E}_{\jointpdf}[S( Y- X)\setindicator{\{ X< Y\}}]},
\label{eq:censoring}
\ee
where $S(t)=P(C>t)$ is the survival function of $C$. Thus, the density of uncensored observations has exactly the form of Equation \eqref{eq:weighted_density}, with $w(x,y)=\setindicator{\{x<y\}} S(y-x)$, a function involving a continuous and a truncated part. The methods developed in previous sections are flexible enough to accommodate such functions. The survival function $S$ can be estimated by the standard Kaplan-Meier estimator applied to the data $\{(\min(C_i,Y_i-X_i),1-\Delta_i), \: i=1,\ldots,n \}$.

\section{Simulation Studies}
\label{sec:simulations}
We investigated, using simulation, the performances of the weighted permutation (WP) and bootstrap tests, and compared them to that of \cite{tsai1990testing}'s and the minimum $P_{value}$ (minP2) test of \cite{chiou2018permutation}.
The latter are applicable only to truncated data of the form $\wfun(x,y)=\indicator{x<y}$.

We implemented the simulations in $R$, with time consuming parts implemented in {\it c++} using the $rcpp$ package (\cite{eddelbuettel2011rcpp}). Scripts reproducing all figures and tables are available online at \url{https://github.com/YanivTenzer/TIBS}. P-values for the minP2 test were calculated using the package permDep in R \citep{permDep}, version $1.0.3$ (Aug. 14th, 2019) from \url{https://github.com/stc04003/permDep}. 

We calculated the rejection rate (power) at a significance level $\alpha=0.05$ by averaging results of $500$ replications. As the tests are computationally demanding, we used small sample sizes of $n=100$ and $n=200$ observations for uncensored and censored settings, respectively, in order to perform extensive simulations under different settings. We used $\nperm=1000$ permuted or bootstrap null datasets for all tests except  minP2 for which we used only $\nperm=100$ null datasets and $100$ replications, as it was much slower.
The average running times of the tests on a standard laptop with an i7 2.8Ghz dual core Intel processor were $\approx\!0.02$ seconds for Tsai's test,  $\approx\!0.11$ seconds for the new weighted permutation test, $\approx\!2.86$ seconds for the bootstrap  test
and $\approx\!93.45$ seconds for the minP2 test. 

\subsection{Truncation, $\wfun(x,y)=\indicator{x<y}$}
\label{sec:truncation}

We study the performances of the tests under truncation for various dependence models with and without censoring. The censoring variable, $C$, was sampled from Gamma distributions, with the shape and scale parameters set such that roughly $25 \%-30\%$ of the observations were censored for each model.

We first simulated data under monotone dependence models with an exchangeable joint distribution, where consistent estimators of the marginals exist (as shown in Section \ref{sec:bootstrap}) and we expect the bootstrap procedure to perform well.  We generated $X$ and $Y$ from a standard bivariate Gaussian distribution with different correlations $\rho$ (Norm($\rho$)). In addition, we generated $X$ and $Y$ with  standard Gaussian marginal distributions under two copula models: (i)
The Gumbel copula ({\bf GC}) with dependence parameter $\theta=1.6$ (Kendall's $\tau = 0.375$), and (ii) The Clayton copula ({\bf CC}) with dependence parameter $\theta=0.5$ ($\tau = 0.2$).
Although both the Gumbel and Clayton copulas produce monotone dependence structures, the two are different in nature - while the former provides upper tail dependence structure the latter produces lower tail dependence \citep{nelsen2007introduction}.
Figure \ref{fig:monotone_exchangeable} in the Supp. Materials presents scatterplots of simulated pairs from the three models.

The results are summarized in Table \ref{tab:simulation_study}, with the test having the highest power shown in boldface. As expected, in the Gaussian settings, under the null distribution (i.e., $\rho=0$), all tests achieve the correct $\alpha=0.05$ error rate. Under the alternative, the bootstrap procedure demonstrates favorable performance in all three settings. In the Gaussian settings, for $\rho<0$ the WP test consistently outperforms minP2. The minP2 has the lowest power in this setting. For $\rho>0$, the WP test has poorer performance, probably due to the difficulty of sampling permutations consistent with the truncation, while Tsai's test shows the second highest power, after the bootstrap.

\begin{table}
	{\tiny	
		\begin{center}
			\scalebox{1}{
				\begin{tabular}{clccccccc}
					\hline
                    & & \multicolumn{4}{c}{Uncensored ($n=100$)} & \multicolumn{3}{c}{Censored ($n=200$)} \\
                    \hline
					Setting  & Model & Tsai & minP2 & WP & Bootstrap & WP & MinP2 & Tsai \\
					\hline
					\multirow{10}{*}{\parbox{3.5cm}{\scriptsize Monotone \\ Exchangeable}}
					& Norm($-0.9$) & {\bf 1}      & {\bf 1} & 	{\bf 1}	& {\bf 1}	 & \bf 1 & \bf 1   & \bf 1    \\
					& Norm($-0.7$) & 0.998	& 0.960	  & 0.998	&  {\bf 1}	  & \bf 1 & 0.828 & \bf 1  	    \\
					& Norm($-0.5$) & 0.764	& 0.510	  & 0.742	&  {\bf 0.998}	& 0.813 & 0.284 & \bf 0.895 \\
					& Norm($-0.3$) &  0.284	& 0.130	  & 0.278	& {\bf 0.828}	& 0.273 & 0.096 & \bf 0.366	 \\
					& Norm($0.0$) &  0.056	& 0.050	  & 0.064	& 0.058	    &  0.052 & 0.042  & 0.046 \\
					& Norm($0.3$) & 0.178	& 0.110	  & 0.118	&  {\bf 0.780}	& 0.097 & 0.083 &\bf 0.202		 \\
					& Norm($0.5$) & 0.352	& 0.140	  & 0.194	&  {\bf 1}	& 0.211 & 0.103 & \bf 0.404\\
					& Norm($0.7$) & 0.498	& 0.290	  & 0.262	&  {\bf 1}	  & 0.389 & 0.097 & \bf 0.729	     \\
					& Norm($0.9$) & 0.658	& 0.260	  & 0.236	&  {\bf 1}	 & 0.495 & 0.076 & \bf 0.907	    \\
					& GC ($\theta=1.6$) & 0.196 & 0.130 & 0.104 & {\bf 1} & 0.079 & 0.080 & \bf 0.192\\
					& CC ($\theta=0.5$) & 0.110 & 0.130 & 0.074 & {\bf 0.782} & 0.126 & 0.127 & \bf  0.181 \\
					\hline
					\multirow{2}{*}{\parbox{3.5cm}{\scriptsize Monotone \\ Non-Exchangeable}}& LD ($\rho=0.0)$ & 0.042 & 0.010 & 0.046  & \textcolor{gray}{0.996} & 0.005 &0.048 &0.051 \\
					& LD ($\rho=0.4)$ & {\bf 0.634} & 0.330 & 0.578  & \textcolor{gray}{0.704} &  0.060 & 0.056 & \bf 0.250 \\
					\hline
					\parbox{3.5cm}{\scriptsize Non-monotone \\ Exchangeable}
					& CLmix($0.5$) & 0.278 & {0.140} & {\bf 0.412} &  0.338 & \bf 0.398& 0.120 & 0.308 \\ 	
					\hline
					\multirow{12}{*}{\parbox{3.5cm}{\scriptsize Non-monotone, \\  Non-exchangeable}}
					& CNorm($-0.9$)&   0.992  & {\bf 1} & {\bf 1} & \textcolor{gray}{1} &  \bf 1 & 0.987 & 0.897 \\
					& CNorm($-0.7$)&  0.844 & 0.950 & {\bf 0.992} & \textcolor{gray}{0.998} &  \bf 0.972 & 0.609 & 0.635 \\
					& CNorm($-0.5$)& 0.514 & 0.600 & {\bf 0.794} & \textcolor{gray}{0.988}  &   \bf 0.626 & 0.262 & 0.321 \\
					& CNorm($-0.3$)& 0.176 & 0{\bf .290} & 0.272 & \textcolor{gray}{0.982} &    \bf 0.212 & 0.138 & 0.157 \\
					& CNorm($0.0$)& 0.042 & 0.020 & 0.046 & \textcolor{gray}{0.986} &     0.047  & 0.053 &  0.055 \\
					& CNorm($0.3$)& 0.132  & 0.170 & {\bf 0.216} & \textcolor{gray}{0.998} &    \bf 0.228 & 0.035 & 0.089 \\
					& CNorm($0.5$)& 0.266 & 0.370 & {\bf 0.654} & \textcolor{gray}{1} & \bf 0.753 & 0.037 & 0.163 \\
					& CNorm($0.7$)& 0.490 & 0.790 & {\bf 0.992} & \textcolor{gray}{1} & \bf 0.999 & 0.080 & 0.268 \\
					& CNorm($0.9$)& 0.698 & {\bf 1} & {\bf 1} & \textcolor{gray}{1} & \bf 1 & 0.298 & 0.367 \\
					\hline
					\hline
				\end{tabular}	
			}
		\end{center}				
		\caption{Power at a significance level of $\alpha=0.05$ for left-truncated data ($\wfun(x,y)=\indicator{x<y}$) for uncensored (left 4 columns) and censored (right 3 columns) models.  Norm($\rho$) - Bivariate normal distribution with correlation $\rho$. GC - The Gumbel Copula, CC - the Clayton Copula, with dependence parameter $\theta$. LD - Lifetime Distribution with correlation $\rho$, CLmix - an exchangeable mixture of two Clayton copulas, CNorm($\rho$) - non-exchangeable joint distribution with dependence parameter $\rho$. \label{tab:simulation_study}}
	} 
\end{table}

Following \cite{chiou2018permutation}, we next simulated data from a lifetime distribution (LD) where the joint distribution is non-exchangeable having marginal distributions $X \sim exp(5)$ and $Y \sim Weibull(3,8.5)$. We specified the dependence of $(X,Y)$ through a normal copula, where the strength of dependence is determined by the correlation parameter $\rho$. We simulated data under independent ($\rho=0$) and dependent ($\rho=0.4$) $\jointcdf$ before truncation.  Figure \ref{monotone_non_exchangeable} in the Supp. Materials displays scatterplots of simulated pairs from both models.

The results are shown in the second part of Table \ref{tab:simulation_study}.
Although $X$ and $Y$ are not exchangeable, we applied the bootstrap procedure (shown in light gray) as well, using marginal estimates according to Equation \eqref{eq:exchange_estimator}, in order to investigate the impact of model miss-specification on its performance. The devastating impact of model miss-specification under the null distribution for the bootstrap procedure is now apparent: the test does not retain the desired rejection rate $\alpha=0.05$ under the null hypothesis ($\rho = 0$). For $\rho=0.4$, Tsai's test has the highest power, followed by the WP test and then minP2.

The third simulation study evaluates the performance of the tests under non-monotone dependence. Starting with a non-monotone exchangeable model, we simulated data from a mixture of two Clayton copulas with dependence parameters $\theta=0.5 \hspace{0.03in}(\tau = 0.2)$ and $\theta=-0.5 \hspace{0.03in}(\tau=-0.333)$, respectively, and equal population proportions. Figure \ref{non_monotone_exchangeable} in the Supp. Materials presents scatterplots of simulated pairs from the model. The third part of Table \ref{tab:simulation_study} presents the results. Our bootstrap approach has the highest power, followed by the WP test, Tsai's test and lastly minP2. 
It is somewhat surprising that Tsai's test outperforms here minP2, as the former is tailored to monotone alternatives. %

Finally, we considered a non-monotone and non-exchangeable model.
We used a normal copula, CNorm($\rho$), with varied correlation coefficient $\rho$ to specify the joint distribution of $X$ and $Y$, where $X \sim Weibull(0.5,4)$ and $Y \sim U[0,16]$,  set such that $\mathbb{E}(X)=\mathbb{E}(Y)=8$, and retained only pairs satisfying $Y \geq X$. Figure \ref{fig:non_monotone_non_exchangeanle} in the Supp. Materials displays scatterplots of simulated pairs from the models.
The last part of Table \ref{tab:simulation_study}  presents the results. As expected, the bootstrap procedure (light gray) does not retain the desired rejection rate under the null hypothesis. Under the alternative, both the WP and minP2 tests outperform Tsai's procedure across the entire range of the dependence parameter $\rho$. This behaviour is expected because both tests were designed to detect non-monotone dependency. Our method has the highest power for all values of $\rho$. minP2 is more powerful than Tsai's test for strong (absolute) correlations and less powerful for weaker correlations.


\subsection{Strictly Positive Bias Functions}
\label{sec:strictly_positive}
Our new tests can be used to detect dependency in a general weighted model. To study the performance of our approach, we consider two cases of positive biased sampling.
In the first case we took $w(x,y)=x+y$, where $X$ and $Y$ were sampled from the log-normal bivariate distribution with zero mean, unit variance and correlation $\rho$. Recall that for a strictly positive $\wfun$, the inverse weighted Hoeffding statistic from Section \ref{sec:inverse_weight_stat} can be used as an alternative to the adjusted Hoeffiding statistic. We also compared here the bootstrap and MCMC approaches to the importance sampling approach from Section \ref{sec:importance_sampling}, with four different importance sampling distributions, described in the Supp. Methods, Section \ref{sec:IS_appendix}.
We applied six different sampling methods for p-value calculation for each of the two test statistics, resulting in twelve different tests. 
When applying the Bootstrap, we estimated the univariate marginals using the weighted estimators in Equation \eqref{eq:marginal_estimation_under_positive_weight_function}.
Table \ref{tab:strictly_positive} shows the rejection rates of the various tests at a significance level $\alpha=0.05$, for sample size $n=100$ and for $\rho=0$ (independence) and $\rho=0.2$. All tests except the bootstrap seem to maintain the significance level at approximately $\alpha =0.05$ under the null. Under the alternative, the importance sampling approach with a uniform and 'grid' importance distribution is most powerful, but the MCMC approach is not far behind.

In the second example, we considered a bivariate Gaussian distribution $Norm(\rho)$ for $X$ and $Y$ as in Section \ref{sec:truncation}, and with $\wfun(x,y) \propto Norm(-\rho)$. The biased sampling function here masks the dependence so the observed pairs, $(X_i,Y_i)$, are independent Gaussian random variables. This example shows that biased sampling can not only create spurious dependencies, but can also mask true dependencies. Nevertheless, knowing $\wfun$ we can apply our tests and detect the dependence, as is shown in Table \ref{tab:strictly_positive}.
As expected, for all tests, as $\rho$ increases (we used only $\rho>0$ due to symmetry) the power increases quite rapidly. The importance sampling with a uniform distribution is usually the most powerful, with the MCMC approach very close or superior for strong correlation. The bootstrap approach shows poor performance for this case. As in the previous example, here too the inverse weighting statistic is inferior to the adjusted Hoeffding statistic. The relative success of the importance sampling schemes for both examples can be explained by the small sample size. As shown in the Supp. Materials, Section \ref{sec:IS_appendix}, when the sample size increases the importance sampling distributions become unrepresentative of the distribution $P_{\wmat}$, resulting in poor performance, whereas the MCMC approach is much more robust to changes in sample size.

\begin{table}[tb]
	\centering
	\begingroup\tiny
	\begin{tabular}{rrrrrrr}
		\hline
Model		& WPIS & WPIS & WPIS & WPIS & WP & Bootstrap \\
IS-Dist.		& Kou-McCullagh & Uniform & Monotone & Grid &  &  \\
		\hline
	LogNormal($\rho=0$)	 & 0.048 & 0.050 & 0.016 & 0.050 & 0.056 & 0.072 \\
	IW	 & 0.050 & 0.038 & 0.016 & 0.036 & 0.060 & 0.080 \\
	LogNormal($\rho=0.2$) & 0.606 & {\bf 0.676} & 0.004 & {\bf 0.676} & 0.602 &  0.632 \\
	IW	 & 0.384 & 0.432 & 0.000 & 0.428 & 0.382 & 0.414 \\	
	\hline
	Norm(0.0) & 0.046 & 0.048 & 0.032 & 0.040  & 0.046 & 0.004 \\
		IW    & 0.056 & 0.058 & 0.032 & 0.054 & 0.050 & 0.004 \\
	Norm(0.1) & {\bf 0.084} & {\bf 0.084} & 0.052 & 0.080 & 0.082 &  0.006 \\
	IW        & 0.076 & 0.076 & 0.032 & 0.076 & 0.078 &  0.008 \\
	Norm(0.3) & 0.296 & 0.322 & 0.134 & {\bf 0.350} & 0.302 & 0.052 \\
	IW        & 0.282 & 0.296 & 0.136 & 0.318 & 0.298 & 0.024 \\
	Norm(0.5) & 0.582 & {\bf 0.640} & 0.298 & 0.630 &  0.582 &  0.166 \\
	IW        & 0.502 & 0.526 & 0.228 & 0.532 & 0.492 & 0.038 \\
	Norm(0.7) & 0.856 & {\bf 0.860} & 0.516 & 0.850 & 0.852 &  0.334  \\
	IW        & 0.730 & 0.706 & 0.456 & 0.684 & 0.726 &  0.078 \\
	Norm(0.9) &  {\bf 0.964} & 0.914 & 0.712 & 0.918 & {\bf 0.964} &  0.672 \\
	IW        & 0.876 & 0.800 & 0.662 & 0.786 & 0.878 & 0.114 \\
		\hline
	\end{tabular}
	\caption{Strictly positive bias functions; estimated power at a significance level of $\alpha=0.05$, and sample size $n=100$ of the permutations (WP), uniform importance sampling permutations (WPIS), and bootstrap tests. For each parameter settings and sampling method two statistics were applied, shown in separate lines: The adjusted Hoeffding statistic, with expectations estimated using permutations/bootstrap samples, and the inverse weighting statistic (IW).  The top four rows represent a LogNormal distribution with $\wfun(x,y)=x+y$. The bottom rows represent a Gaussian distribution with correlation $\rho$, with $w(x,y)$ proportional to a Gaussian density with correlation $-\rho$. 	\label{tab:strictly_positive}}
	\endgroup	
\end{table}

\section{Real-Life Datasets}
\label{sec:real_data}

We applied the various tests to four data sets, shown in Figure \ref{fig:real_datasets}. P-values for the WP and bootstrap tests are based on $\nperm=10^5$ samples, and for the minP2 test on $\nperm=10^4$, due to computational restrictions.


\begin{enumerate}
\item \textbf{Time from Infection to AIDS in HIV Carriers} - A classical example of truncated data occur in AIDS retrospective studies, where the time from HIV infection to AIDS ($Y$) is restricted to be smaller than the time from HIV to sampling ($X$) \citep{lagakos1988nonparametric}. Here we analyze data on 295 AIDS cases, available in the DTDA package of R \citep{moreira2010dtda}. By design, the sample comprised only patients satisfying $0\leq X\leq Y$. The new WP, Tsai's and minP2 tests all obtained significant P-values of $0.001$, $0.005$, and $0.002$, respectively, suggesting that dependence exists between the two time variables. 
To examine the effect of considering the truncation mechanism when testing for independence, we also performed a WP test with a constant $\wfun$, and the P-value remained significant and in fact was reduced to $10^{-5}$.

\item \textbf{Survival in the Channing House Community} - \cite{hyde1977testing} analyzed survival data for residents of the Channing House retirement community in Palo Alto, CA; the full dataset is available in the boot $R$ package \citep{R_boot}. The survival time, $Y$, of a resident is left truncated by the entering age to the community, $X$, and is right censored by the age at the end of followup. After removing five observations that were not consistent with the criterion $X<Y$, the data consisted of $n=457$ individuals, 282 ($61.7\%$) of which were censored. Quasi independence of survival time and entering time was tested applying the approach described in Section \ref{sec:censoring}.
Tsai's and minP2 tests were calculated for comparison.
The WP, Tsai's and minP2 tests all obtained non-significant P-values of
$0.854$,  $0.099$ and $0.140$, respectively, showing no evidence for dependence between entering age and survival. As expected, when we ignored the truncation mechanism, a naive WP test of independence considering only the censoring and using $\wfun(x,y)=S(y-x)$ (see Equation \eqref{eq:censoring}) yielded a spurious signal of dependence, with a P-value of $10^{-5}$, demonstrating the need to account for biased sampling when testing.

\item \textbf{Time before and after infection in Intensive-Care-Units} -
Data on $137$ patients that were hospitalized in Intensive-Care-Units (ICUs) on a random day were collected in five Israeli hospitals as part of a national cross-sectional study \citep{mandel2010competing}. Infection data were collected from admission to the ICU until discharge or 30 days, whatever comes first. An important question was whether the time of infection is associated with the remaining time in the ICU.
To test this independence hypothesis, we use the sub-sample of patients who admitted to the ICU without delay and who acquired infections during their first 30 days of hospitalization in the ICU. Thus, we test independence of $X$, the time from admission to the ICU to infection and $Y$, the time from infection to discharge from the ICU. Due to the sampling mechanism, the data are length biased according to the total length of stay in the ICU, yielding the weight function $\wfun(x,y)=x+y$.
The new WP and bootstrap tests were both significant at the standard $\alpha=0.05$ level, with P-values $0.039$ and $0.031$, respectively, indicating that the time of acquiring infection shows a significant effect on prolonging the remaining time in the ICU.
When we used a WP test while ignoring the biased sampling function $\wfun$, the signal for dependence disappeared, and we got a P-value of $0.605$. Thus, for this dataset the biased sampling masks the true dependence between $X$ and $Y$, and our test that takes $\wfun$ into account was able to reveal it.


\item \textbf{Time to Promotion to the Rank of Full Professor} - The data consists of cross-sectional records on all faculty members of the Hebrew University of Jerusalem who were employed in $1998$. We tested whether the age at promotion to the associate professor rank depends on the service time in that rank. Let $A_{AP}$ and $A_{FP}$ denote the age at promotion to the ranks associate professor (AP) and full professor (FP) respectively; we test independence between $X = A_{AP}$ and $Y = A_{FP}−A_{AP}$ for associate professors promoted to full professor before the age of $65$, back then the retirement age in Israel. We used the sub-sample of $306$ faculty members who were promoted to the FP rank after 1980 and were younger than $65$ at sampling time (1998). As in \cite{mandel2012cross}, we assume that professors will stay in the university until age $65$. Assuming a stable entrance process to the AP rank, the cross-sectional study design leads to length biased sampling according to the length of service at the FP rank. The restriction of the data to professors who promoted after $1980$ resulted in the weight function $\wfun(x,y) = \min(65−x-y, 18) \indicator{x+y<65}$
Thus, the weight is neither a truncation function nor strictly positive, and the only test applicable is the permutation test of Section \ref{sec:permutations}.

We applied the WP test with $\nperm=10^5$ permutations, 
and obtained a very small P-value of $0.00002$, meaning that the age of promotion to associate professor rank does depend on the service time in that rank.
Ignoring the biased sampling function $\wfun$ still yielded a significant P-value of $0.00060$.
\end{enumerate}

\vspace{-0.3in}
\begin{figure}[h]
	\begin{center}		
		\begin{tabular}{cc} 
			\hspace{-0.4in}$AIDS$ & \hspace{-0.3in}$Channing$ $House$ \\  
			\hspace{-0.3in}
			\includegraphics[width=0.45\columnwidth]{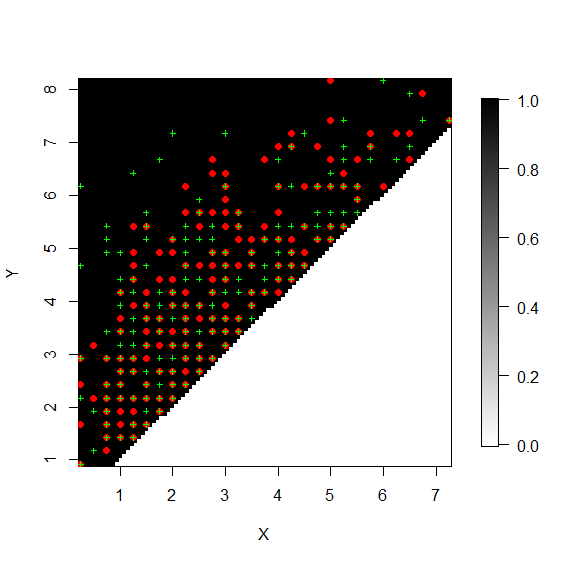}
			&
			\hspace{-0.2in}
			\includegraphics[width=0.45\columnwidth]{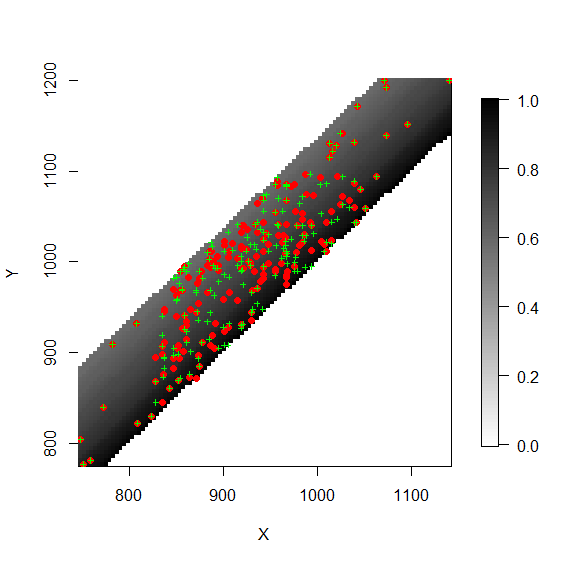} \\
			\hspace{-0.4in}	$Huji$ & \hspace{-0.5in} $Infection$ \\
			\hspace{-0.2in}
			\includegraphics[width=0.45\columnwidth]{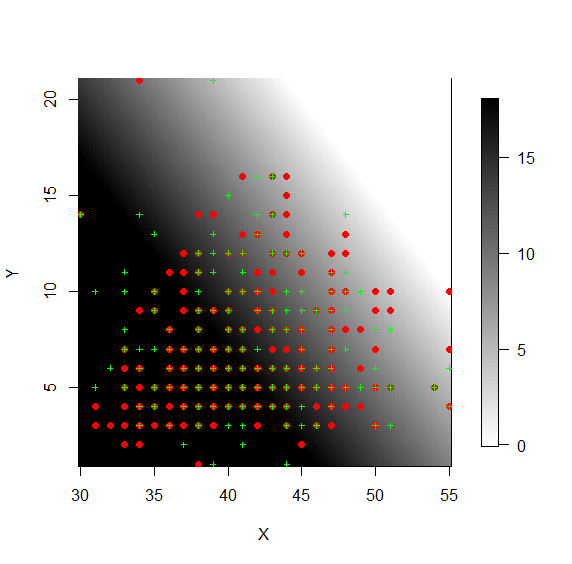}
			&
			\hspace{-0.2in}
			\includegraphics[width=0.45\columnwidth]{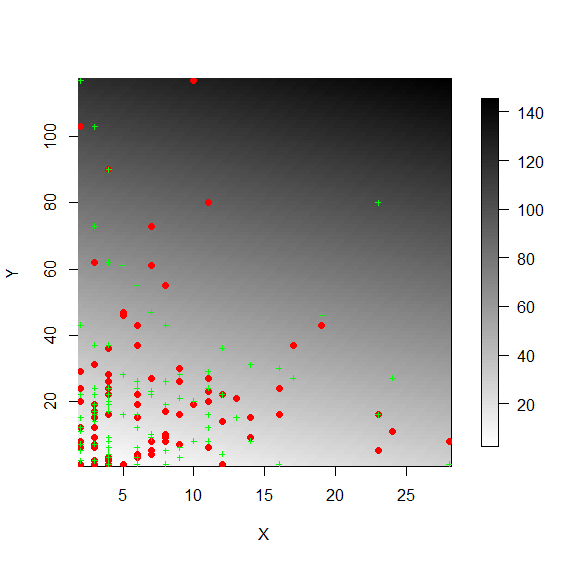}			
		\end{tabular}
	\end{center}
	\vspace{-0.4in}
	\caption{\footnotesize \footnotesize Scatterplots of four real datasets. Red points indicate the true data. Green '+' signs represent points $(x_i, y_{\pi(i)})$ for randomly drawn permutations $\pi$ under biased sampling. The biased sampling functions are shown as background in grayscale, with $\wfun(x,y)= \indicator{x<y}$ for the AIDS dataset, $\wfun(x,y)=\setindicator{\{x<y\}} S(y-x)$ for the Channing housing dataset, $\wfun(x,y)=min(65−x-y, 18) \indicator{x+y<65}$ for the Huji dataset and $\wfun(x,y)=x+y$ for the Infection dataset.}	
	\label{fig:real_datasets}	
\end{figure}

\section{Discussion}
\label{sec:discussion}




In this paper we address the problem of testing quasi-independence in the presence of a general bias function and in possibly
non-monotone settings. 
As demonstrated using real-life data sets, testing independence naively, while ignoring the bias function, can either create spurious dependence or mask true dependence.

We introduce two general machineries to simulate samples under the biased null distribution, namely, the permutations and bootstrap approaches, and examine the challenges possessed by both.
Concretely, the former requires drawing permutations from a (general) non-uniform distribution, while the latter requires consistent estimation of the univariate marginals under the alternative hypothesis.
We tackle the first challenge by utilising an MCMC scheme and an importance sampling methodology.
Our simulation study indicates that for large sample size, the latter approach may suffer some degradation in performance, due to the difficulty in finding a proposal distribution suited for a general bias function.

On the bootstrap front, we identify two settings in which consistent estimators can be derived, both under the null and the alternative.
We also introduce a new algorithm for estimating the marginal CDFs under the null, for a general bias function.
This is of independent interest in cases where the null hypothesis is not rejected.

Importantly, both the permutations and bootstrap approaches can be combined with different statistics, thus result in a different test.
As shown in simulations and real-life data sets, the choice of statistic can affect the power of the resulting test.

An appealing feature of the methodologies developed here is that they can be easily adapted to cases where the bias function is not known, but can be estimated from the data.
An important instance of such cases is that of censoring. In particular, in observational studies, left truncation right censoring settings are frequently encountered.
Our tests can accommodate such settings and it would be interesting to explore this approach further to more general cases of censoring and biased sampling.

We demonstrate the merit of our proposed tests, using simulated and real-life data sets, and showed that even for truncated data they attain similar and often higher power in most settings considered here, compared to the recent minP2 test.

Lastly, some theoretical aspects of the proposed algorithms are yet to be explored and left for future work.
In particular we conjecture that, under the assumption of quasi-dependence, both the WP and bootstrap test are consistent. 
\bibliographystyle{plainnat}
\newpage
\clearpage
\bibliography{bib_biased_sampling_R1}
\newpage
\clearpage
\section*{Supplementary Materials} 
\label{sec:supplement_material}

\renewcommand{\thesubsection}{\Alph{subsection}}

\subsection{Proofs}
\label{sec:SI_proofs}

The proof of Claim 	\ref{claim:P_W_conditional} is brought below:
\begin{proof}
	For a general weighted model, we have
	\be
	\label{eq:generalperm}
	P(\pi(\sample) \mid x, \sample_y) = \frac{\prod_{i=1}^n \jointpdf^{(w)}(x_i, y_{\pi(i)}) }{\sum_{\pi'\in S_n} \prod_{i=1}^n \jointpdf^{(w)}(x_i, y_{\pi'(i)}) }.
	\ee
	Under the null, $\jointpdf^{(w)}(x,y)\propto w(x,y)\tilde{\xpdf}(x)\tilde{\ypdf}(y)$, hence
	\begin{eqnarray*}	
		P_0(\pi(\sample) \mid x, \sample_y) = \frac{\prod_{i=1}^n \tilde{\xpdf}(x_i) \tilde{\ypdf}(y_{\pi(i)}) \wmat(i,\pi(i))}{\sum_{\pi' \in S_n} \prod_{i=1}^n \tilde{\xpdf}(x_i) \tilde{\ypdf}(y_{\pi'(i)}) \wmat(i,\pi'(i))}
		= P_{\wmat}(\pi).
	\end{eqnarray*}	
\end{proof}

\noindent The proof of Proposition \ref{prop:consistent_estimator_under_exchangeability} is brought below:
\begin{proof}
	Since $X$ and $Y$ are continuous exchangeable random variables, $P(X< Y)=1/2$ and therefore
	$$
	\jointpdf^{(w)}(x,y)=2\indicator{x < y} \jointpdf(x,y).
	$$
	Calculating the weighted marginal, we have
	$$
	\xpdf^{(w)}(t)=2\int_t^\infty \jointpdf(t,y)dy=2\int_t^\infty \jointpdf(y,t)dy,
	$$
	due to exchangeability. Similarly
	$$
	\ypdf^{(w)}(t)=2\int_{-\infty}^t \jointpdf(x,t)dx .
	$$
	Thus (due to the continuity assumption),
	$$
	\frac{\xpdf^{(w)}(t)+\ypdf^{(w)}(t)}{2}=\int_t^\infty \jointpdf(y,t)dy+\int_{-\infty}^t \jointpdf(x,t)dx = \ypdf(t).
	$$
	The result follows by applying the Glivenko–Cantelli Theorem on $\hat{F}^{(w),n}_X$ and $\hat{F}^{(w),n}_Y$ .
\end{proof}
%
%

\subsection{Simulated Data}
This section of the supplementary materials contains the estimated power and scatterplots for simulated data of the various settings presented in Section \ref{sec:simulations}.

\subsubsection*{Monotone-Exchangeable Joint Distribution}
\begin{figure}[!h]
	\begin{center}		
		\begin{tabular}{cccc}			
			\hspace{0.01in}$Norm(\rho=0.9)$ & \hspace{0.07in} $Norm(\rho=-0.9)$ & $GC(\theta=1.6)$ & \hspace{0.15in}$CC(\theta=0.5)$  \\
			\hspace{-0.3in}
			\includegraphics[width=0.26\columnwidth]{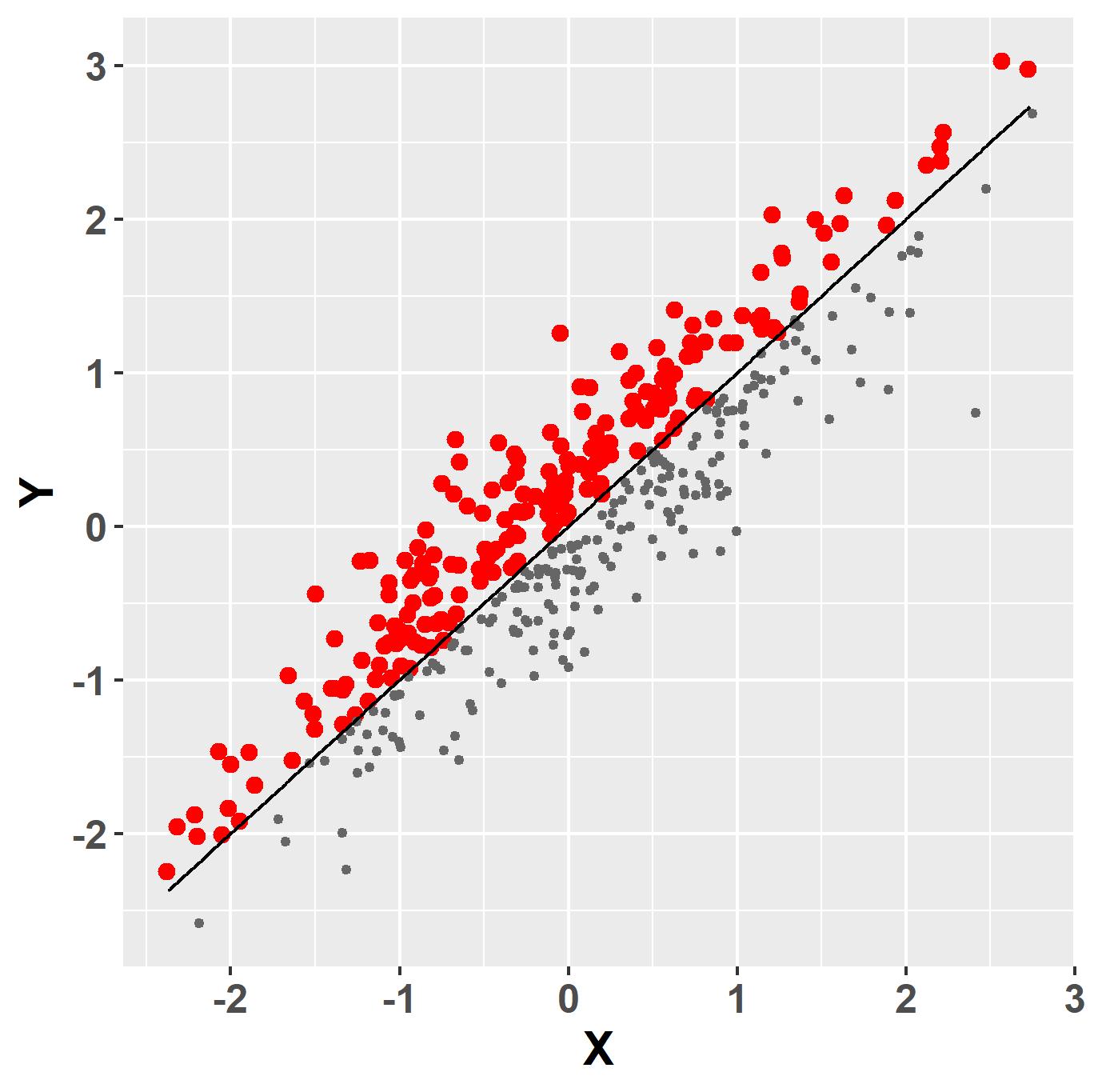}	
			&
			\hspace{-0.2in}
			\includegraphics[width=0.26\columnwidth]{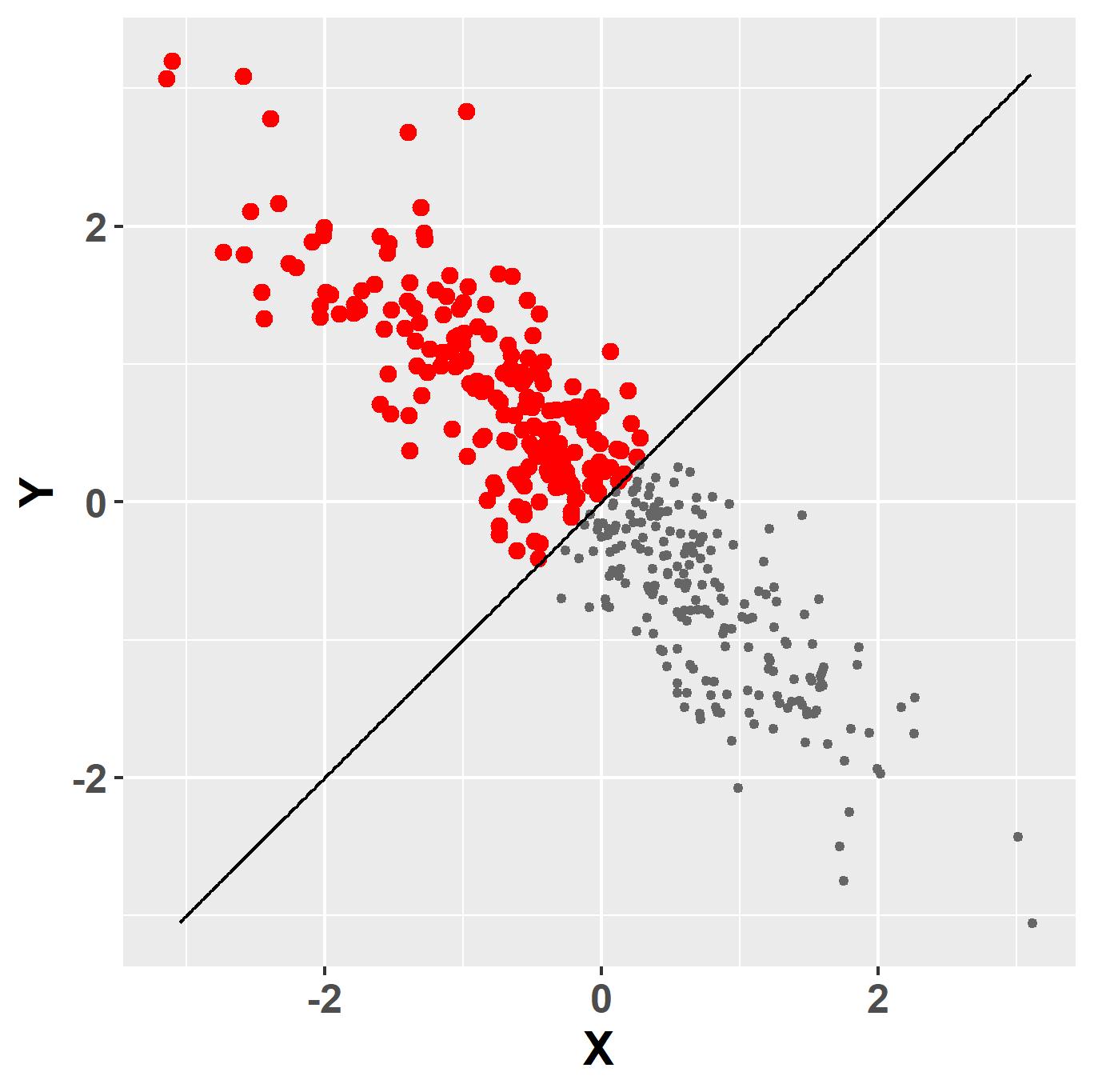}				
			&
			\hspace{-0.2in}
			\includegraphics[width=0.26\columnwidth]{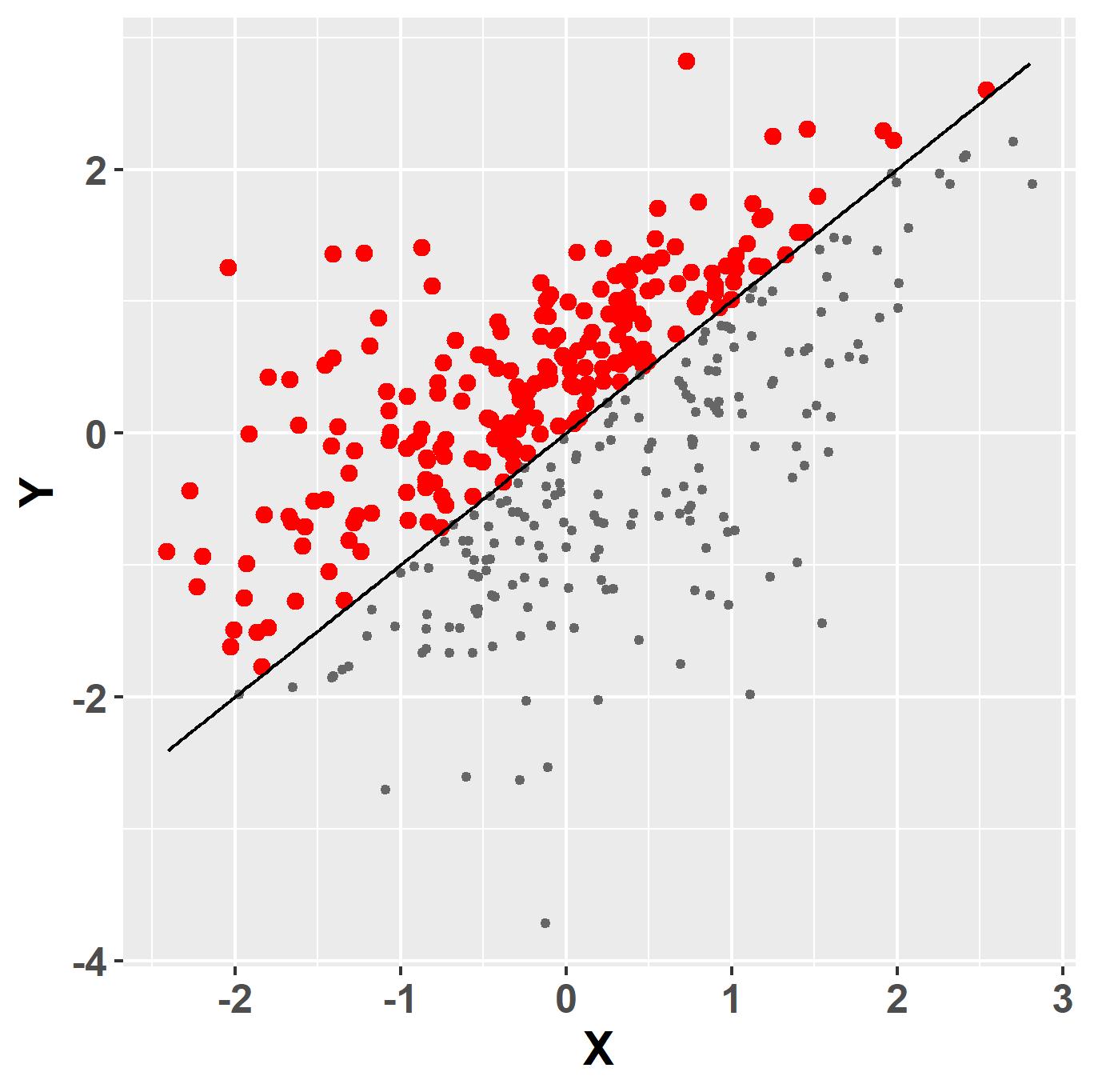}
			&
			\hspace{-0.2in}
			\includegraphics[width=0.26\columnwidth]{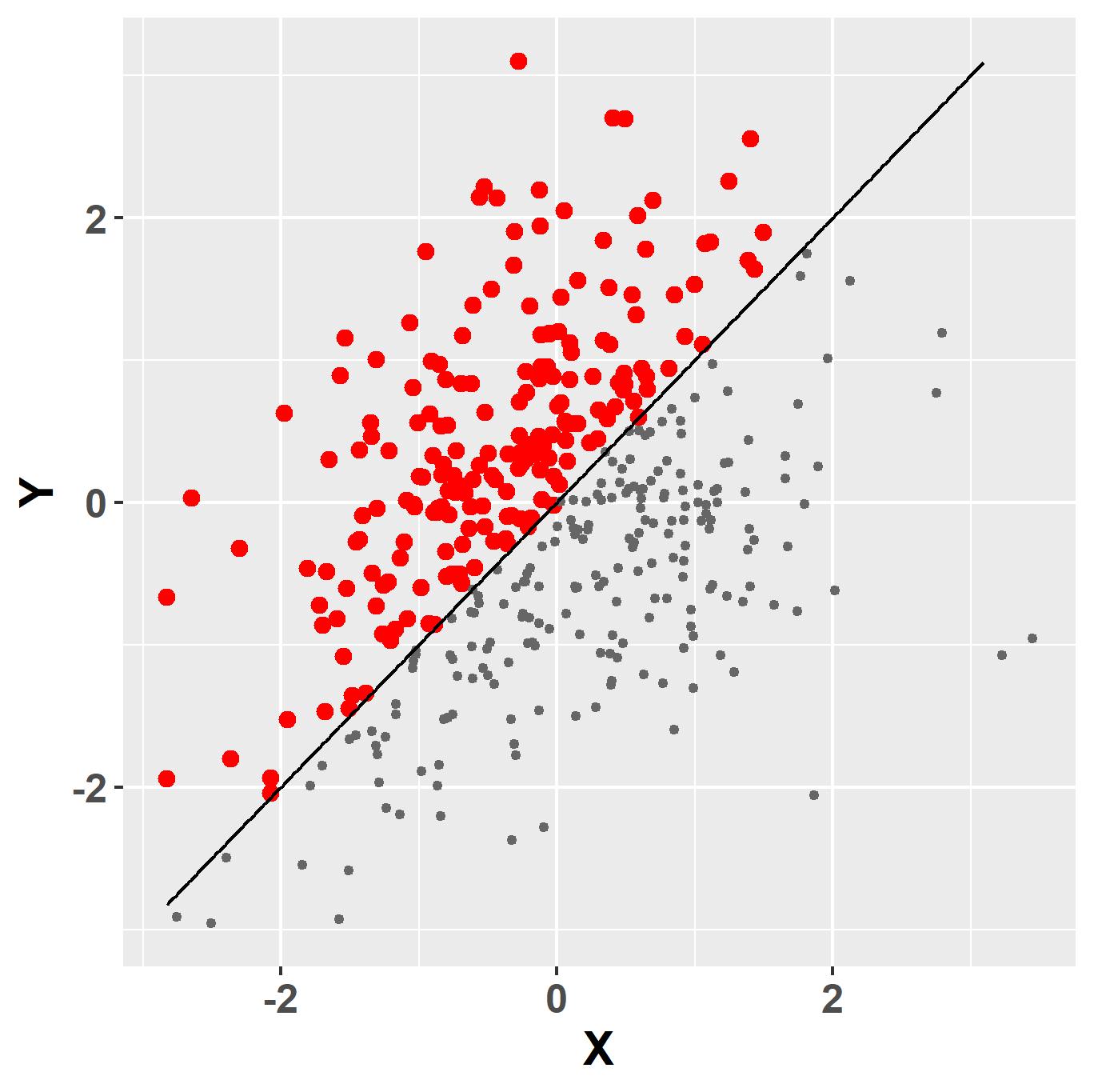} \\			
		\end{tabular}
	\end{center}
	\vspace{-0.2in}
	\caption{\footnotesize Scatterplots of $400$ samples generated from the monotone-exchangeable distributions described in Section \ref{sec:truncation} of the main manuscript, under truncation $\indicator{X\leq Y}$. Observed points are shown in red. Points excluded due to the truncation are shown in gray. All plots show data with standard Normal margins. The left two plots are for a bi-variate Gaussian distribution. For the negative correlation $(\rho=-0.9)$ the dependence between $X$ and $Y$ is immediately apparent even when observing only data after truncation. For the positive correlation $(\rho=0.9)$ the dependence is harder to detect by observing only such data, in similar to Figure \ref{fig:example} in the main text. The right two figures show data for the Gumbel copula (GC) with dependence parameter $\theta=1.6$ ,and the Clayton copula (CC) with dependence parameter $\theta=0.5$.}	
	\label{fig:monotone_exchangeable}
\end{figure}

Figure ~\ref{fig:monotone_exchangeable} displays the scatterplots of samples generated from the monotone-exchangeable distributions described in Section \ref{sec:truncation} of the main manuscript. $X, Y$ were generated from a standard Gaussian distribution. We examined three possible dependence structures, as specified by three different copulas:
\begin{itemize}
	\item Gaussian, with correlation parameter $\rho \in \{-0.9, -0.8,\ldots, 0.9\}$
	\item Gumbel, with dependence parameter $\theta=1.6$
	\item Clayton, with dependence parameter $\theta=0.5$
\end{itemize}

\subsubsection*{Monotone Non-Exchangeable Joint Distribution}
\begin{figure}[!h]
	\begin{center}		
		\begin{tabular}{cc}
			\hspace{0.2in}$LD(\rho=0)$ & \hspace{0.2in}$LD(\rho=0.4)$\\
			\includegraphics[width=0.302\columnwidth]{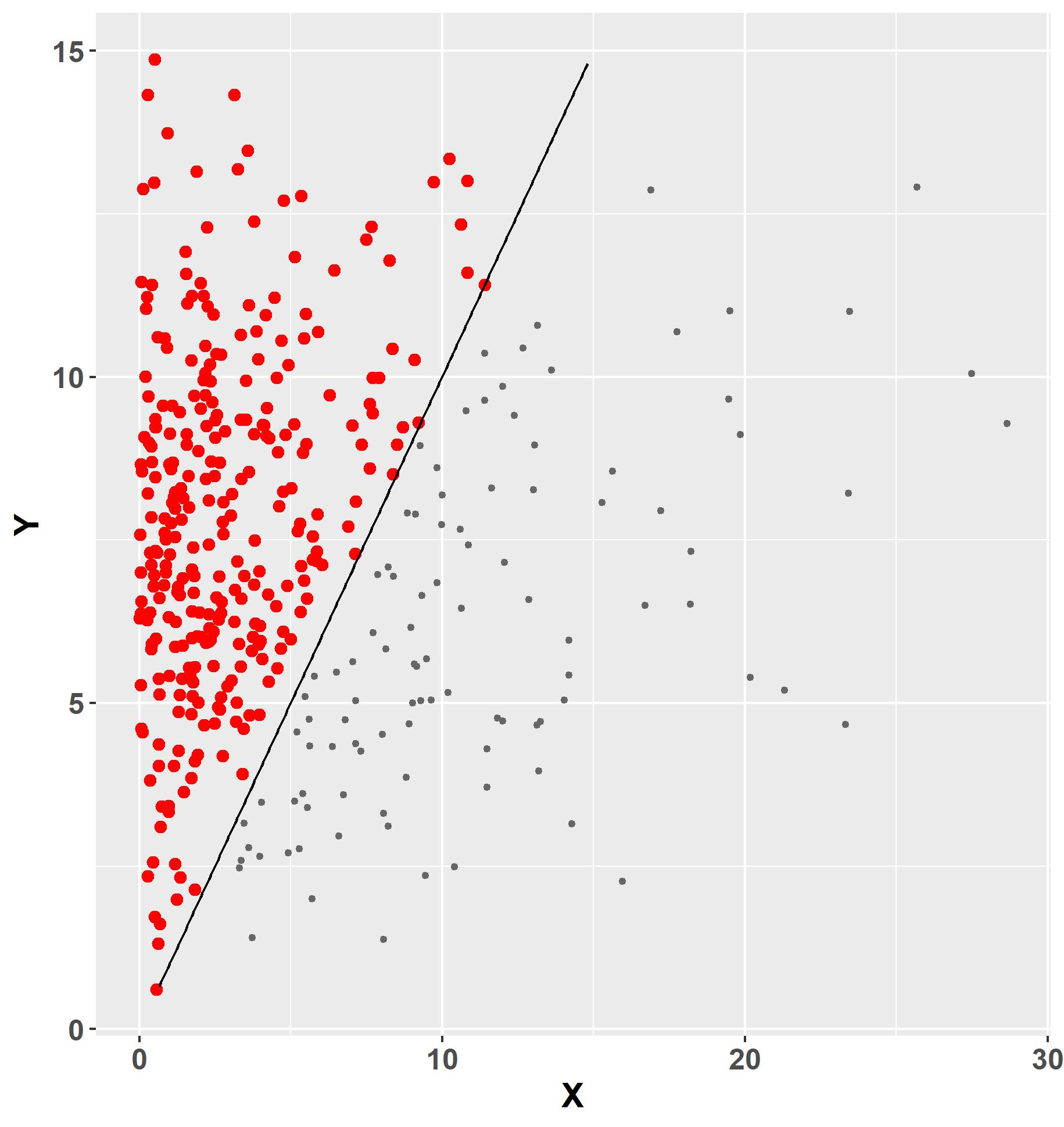} 
			&
			\hspace{-0.1in}
			\includegraphics[width=0.3\columnwidth]{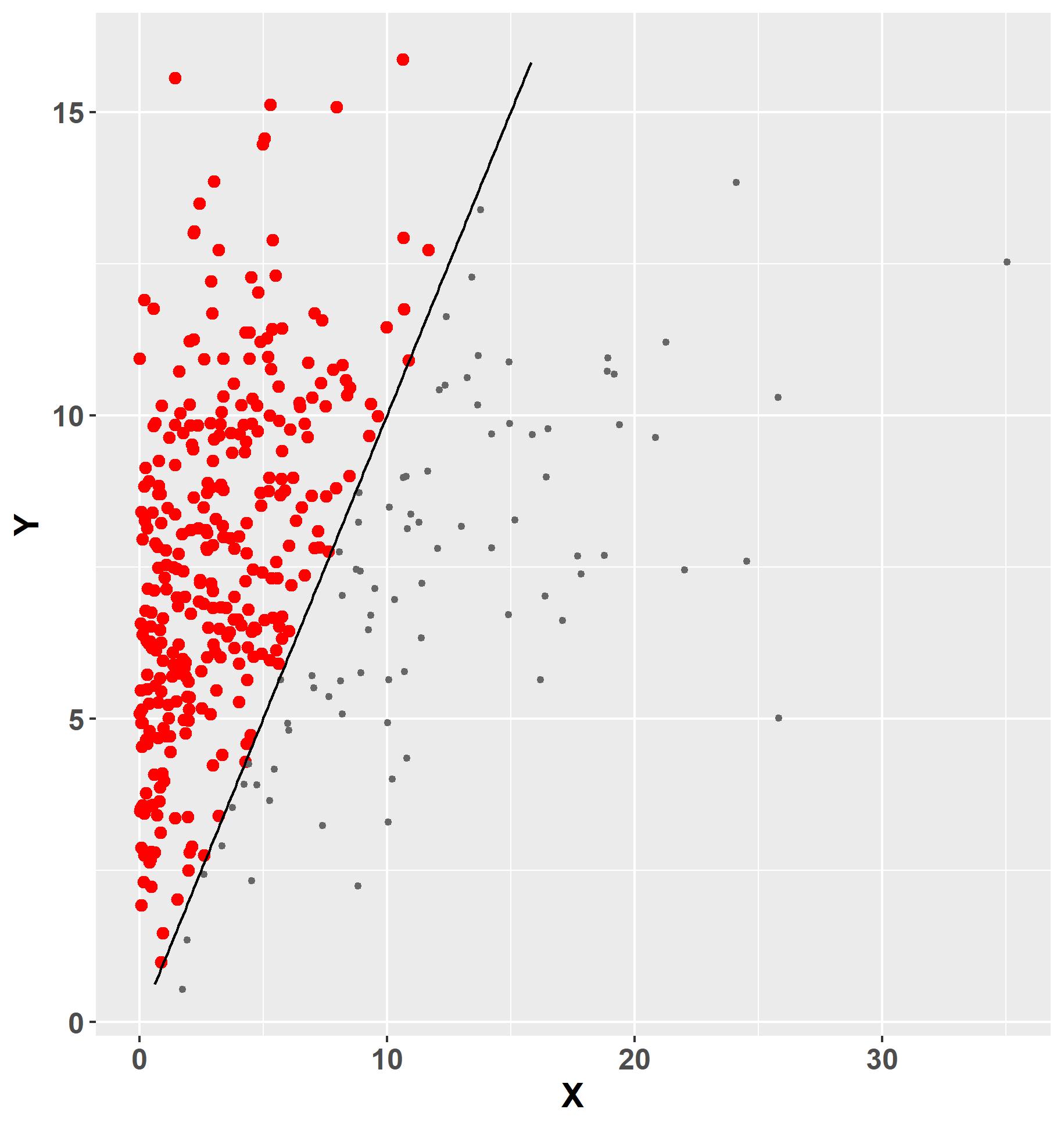} \\			
		\end{tabular}
	\end{center}
	\vspace{-0.2in}
	\caption{\footnotesize Scatterplots of samples generated from the monotone-non-exchangeable lifetime distribution (LD) described in Section \ref{sec:truncation} of the main manuscript, with otherwise the same settings as Figure \ref{fig:monotone_exchangeable}. We used a Gaussian copula with dependence parameter $\rho=0$ (left) and $\rho=0.4$ (right), where $X \sim Weibull(8.5, 3)$  and $Y \sim exp(0.2)$.}	
	\label{monotone_non_exchangeable}	
\end{figure}

Figure ~\ref{monotone_non_exchangeable} displays the scatterplots of samples generated from the monotone non-exchangeable distribution appear in Section \ref{sec:truncation} of the main manuscript. We generated pairs ($X,Y$) with $X \sim exp(0.2)$ and $Y \sim Weibull(3,8.5)$. We specified the dependence of $(X,Y)$ through a normal copula, where the strength dependence is determined by the correlation parameter $\rho$. We considered two levels of dependence as measured by the pre-truncation $\rho= 0$ and $\rho=0.4$.

\subsubsection*{Non-Monotone Exchangeable Joint Distribution}
\begin{figure}[!h]
	\begin{center}		
		\begin{tabular}{c}
				\hspace{0.15in}$CLmix(0.5)$\\
				\includegraphics[width=0.302\columnwidth]{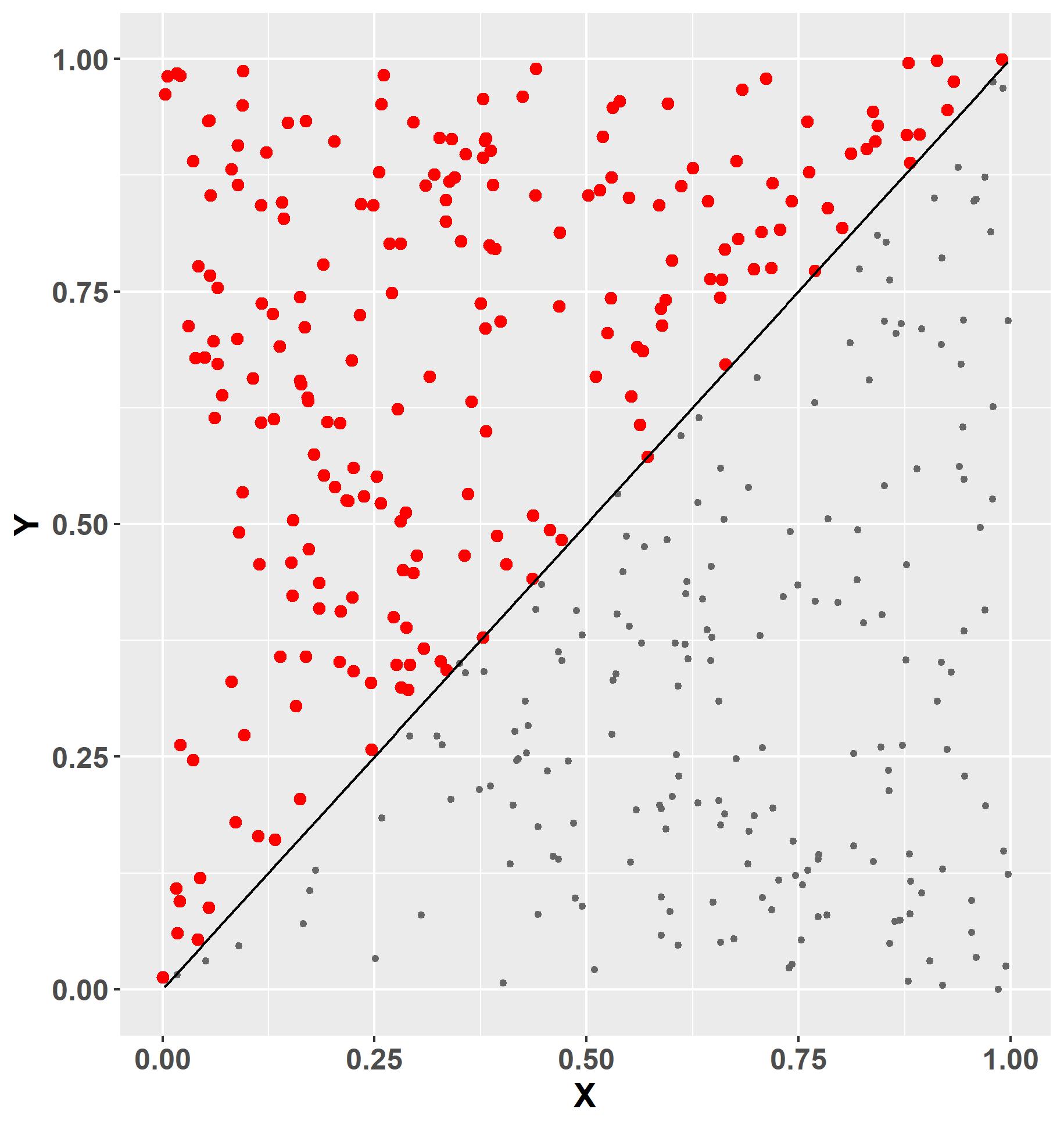} 	
		\end{tabular}
	\end{center}
\vspace{-0.2in}
\caption{\footnotesize Scatterplots of samples generated from the  non-monotone exchangeable distribution described in Section \ref{sec:truncation} of the main manuscript.  $X, Y$ were generated from a mixture of two Clayton copulas with dependence parameters $\theta=0.5$ and $\theta=-0.5$.}
\label{non_monotone_exchangeable}		
\end{figure}

Figure ~\ref{non_monotone_exchangeable} displays the scatterplot of samples generated from the non-monotone exchangeable distribution described in Section \ref{sec:truncation} of the main manuscript. $X, Y$ were generated from a mixture of two Clayton copulas with dependence parameter $\theta=0.5$ and $\theta=-0.5$.

\subsubsection*{Non-Monotone Non-Exchangeable Joint Distribution}
\begin{figure}[!h]
	\begin{center}		
		\begin{tabular}{cccc}			
			\hspace{0.01in}$CNorm(-0.9)$ & \hspace{0.07in} $CNorm(0)$ & \hspace{0.01in} $CNorm(0.5)$ & \hspace{0.15in}$CNorm(0.9)$ \\
				\hspace{-0.3in}
				\includegraphics[width=0.26\columnwidth]{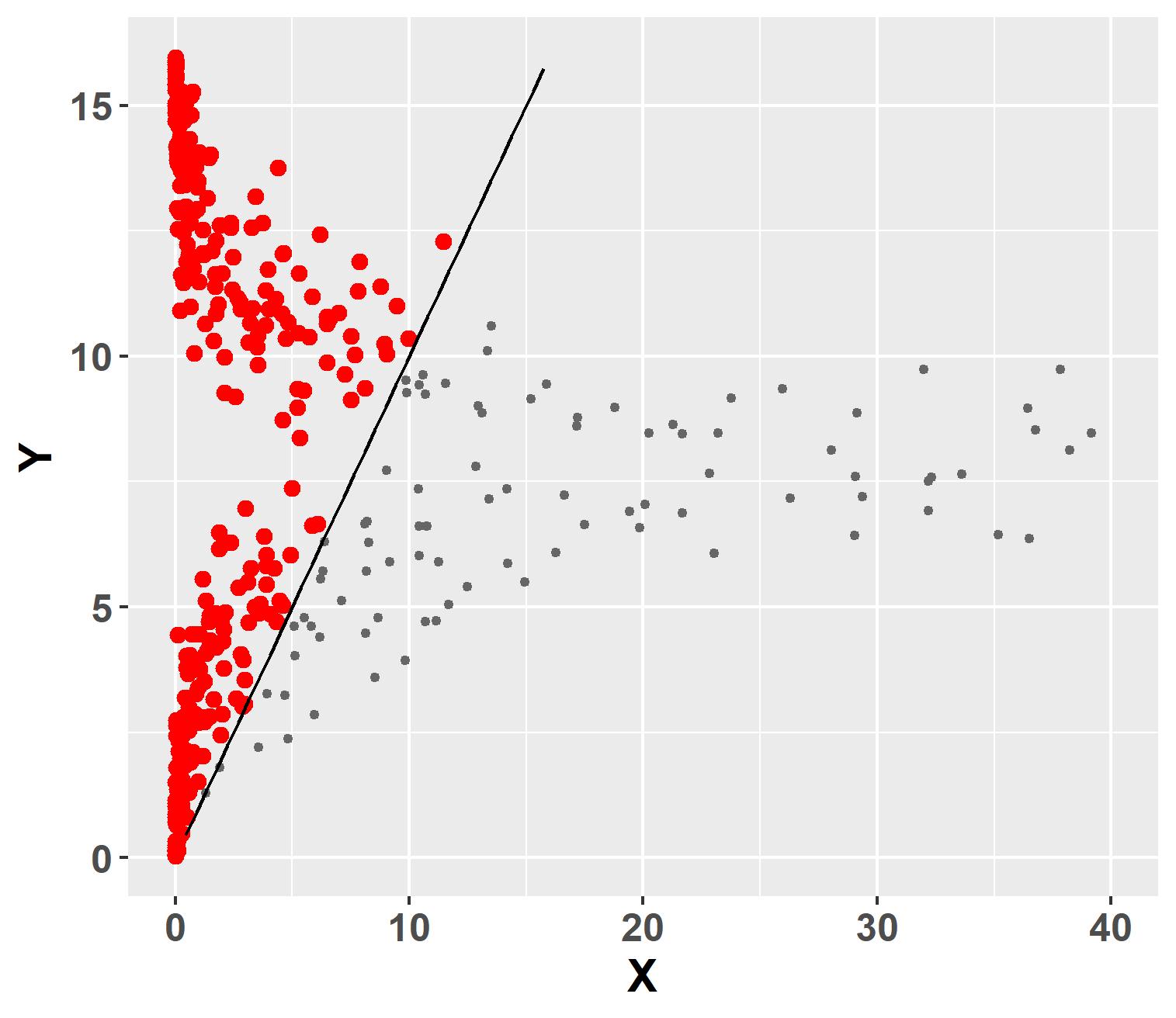}
				& \hspace{-0.2in}
				\includegraphics[width=0.26\columnwidth]{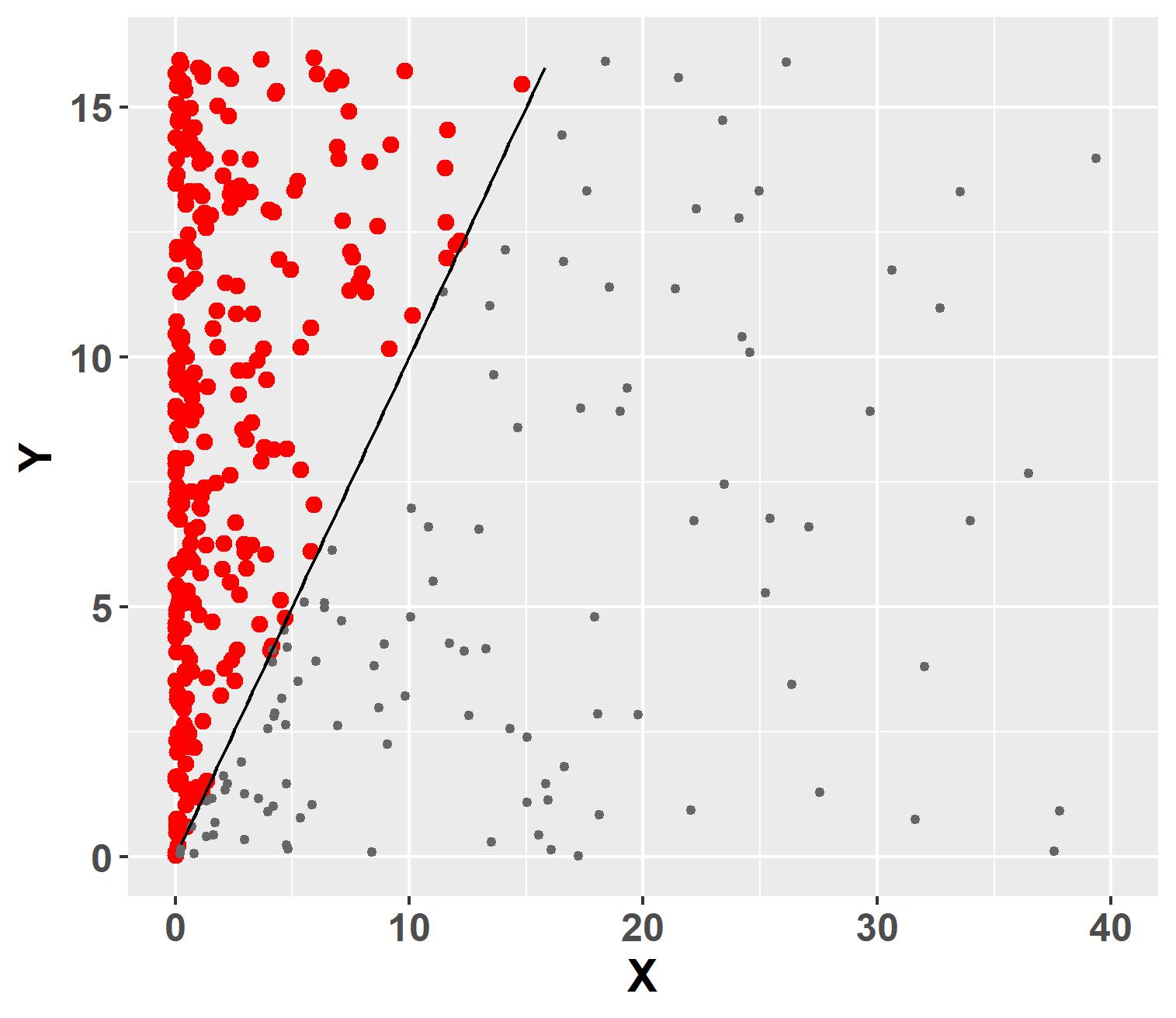}				
				& \hspace{-0.2in}
				\includegraphics[width=0.26\columnwidth]{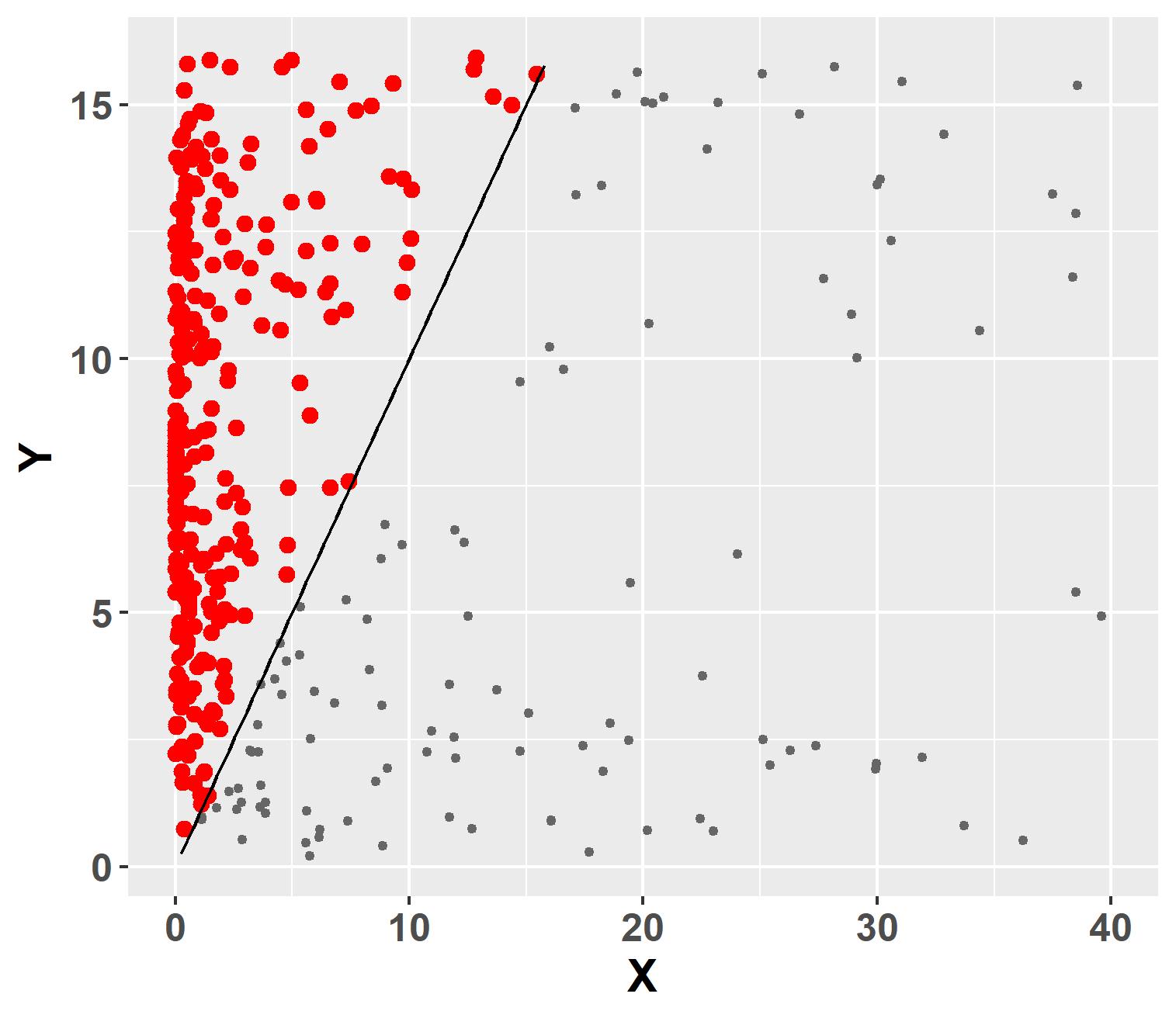}				
				& \hspace{-0.2in}
				\includegraphics[width=0.26\columnwidth]{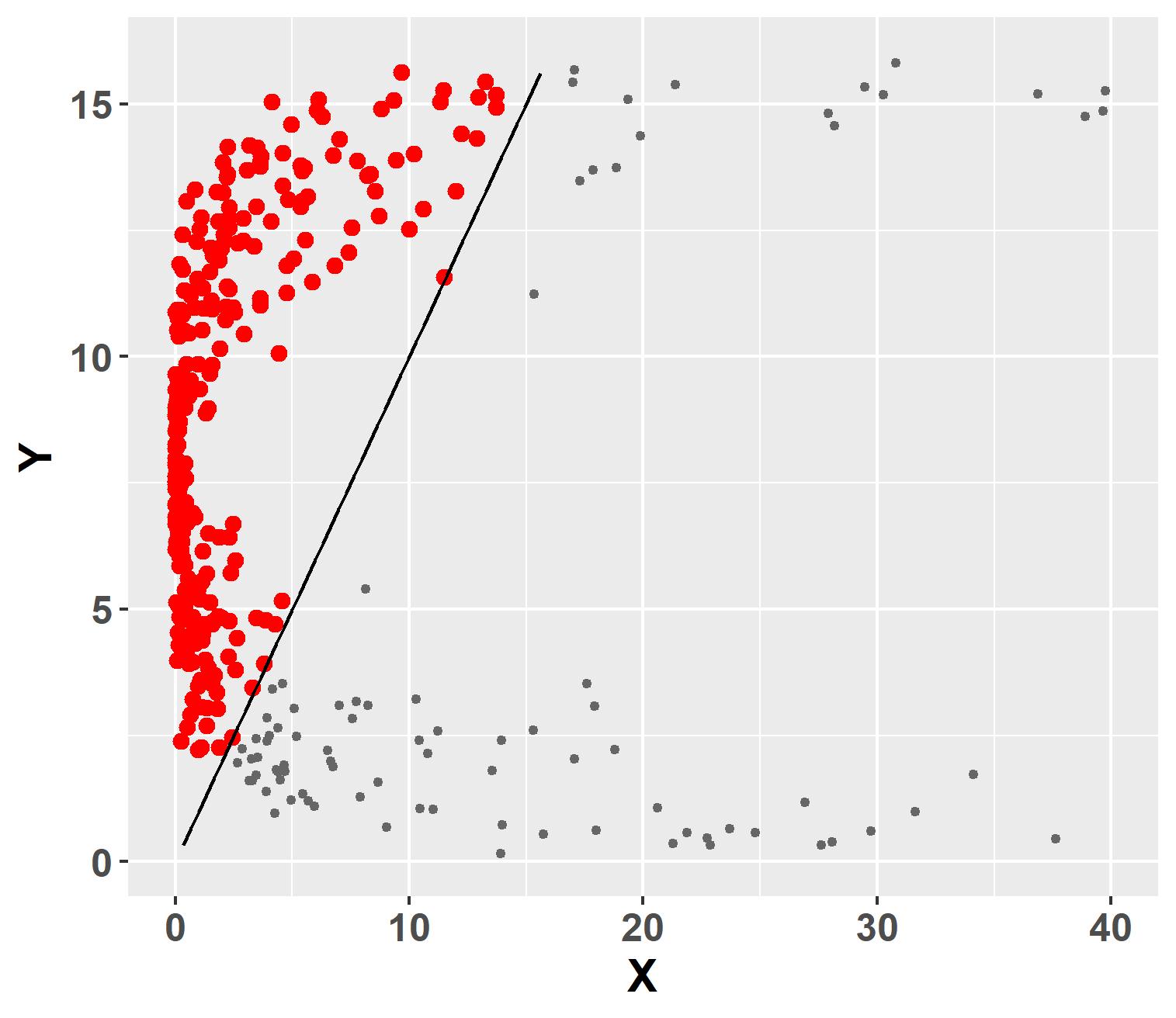}				
		\end{tabular}
	\end{center}
\vspace{-0.4in}
\caption{\footnotesize Scatterplots of samples generated from the non-monotone non-exchangeable distributions described in Section \ref{sec:truncation}. We used a normal copula with varied correlation coefficient $\rho$ to specify the joint distribution of $X, Y$, with $X \sim Weibull(0.5,4)$ and $Y \sim U[0,16]$. Due to the long tail of the Weibull distribution, we cut the $X$-axis at $40$, and points with $X_i>40$ are not shown.}	
\label{fig:non_monotone_non_exchangeanle}	
\end{figure}

Figure ~\ref{fig:non_monotone_non_exchangeanle} displays the scatter plot of samples generated from the non-monotone non-exchangeable distributions described in Section \ref{sec:truncation}. We used a normal copula with varied correlation coefficient $\rho$ to specify the joint distribution of $X, Y$, with $X \sim Weibull(0.5,4)$ and $Y \sim U[0,1]$.

\subsubsection{Strictly Positive Biased Sampling}
\begin{figure}[!h]
	\begin{center}		
		\begin{tabular}{ccc}
			\hspace{0.2in}$LogNormal(\rho=0)$ & \hspace{0.2in}$LogNormal(\rho=0.5)$ & \hspace{0.2in}$LogNormal(\rho=0.9)$ \\
			\includegraphics[width=0.3\columnwidth]{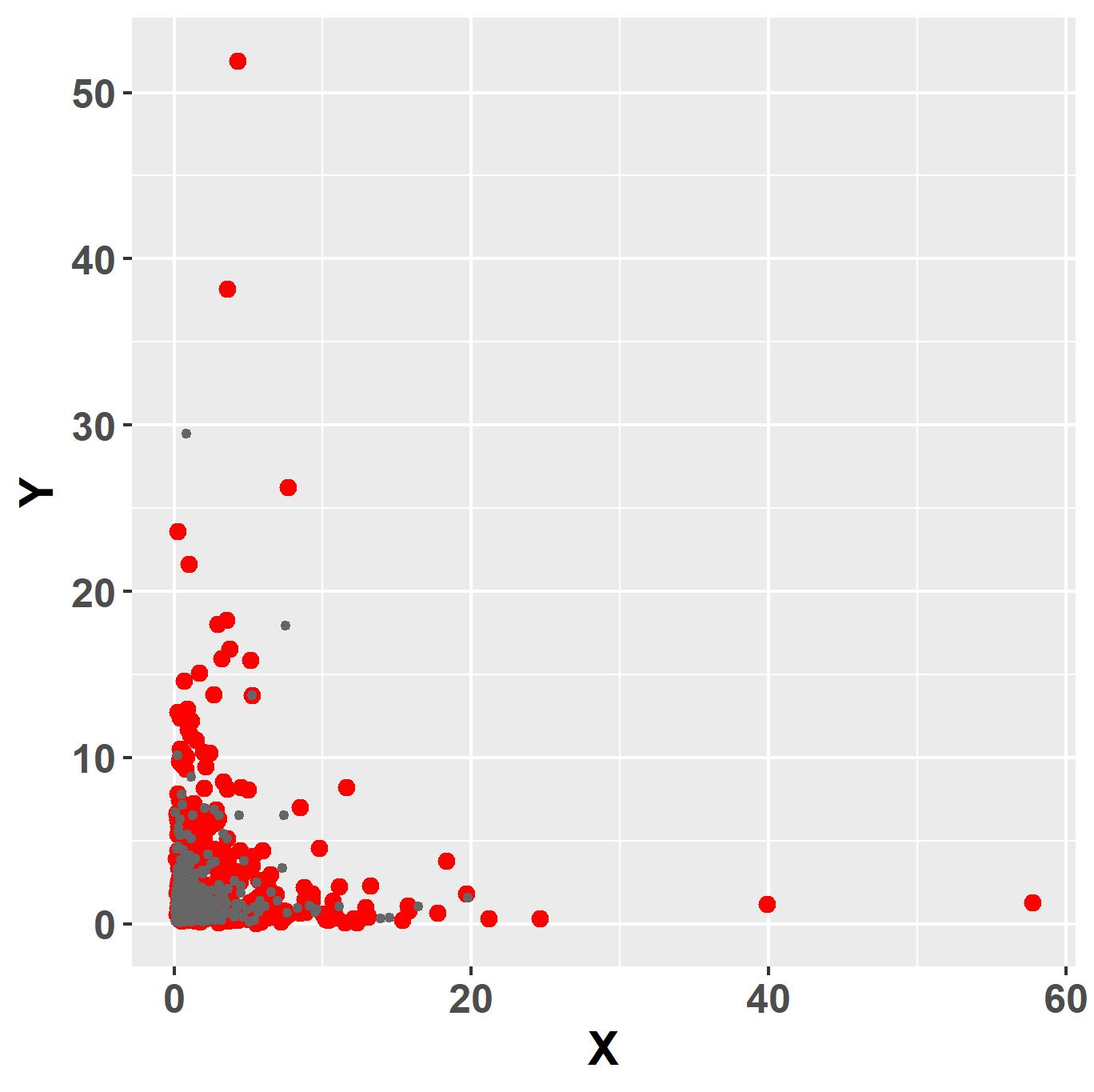}
			&
			\hspace{-0.1in}
			\includegraphics[width=0.3\columnwidth]{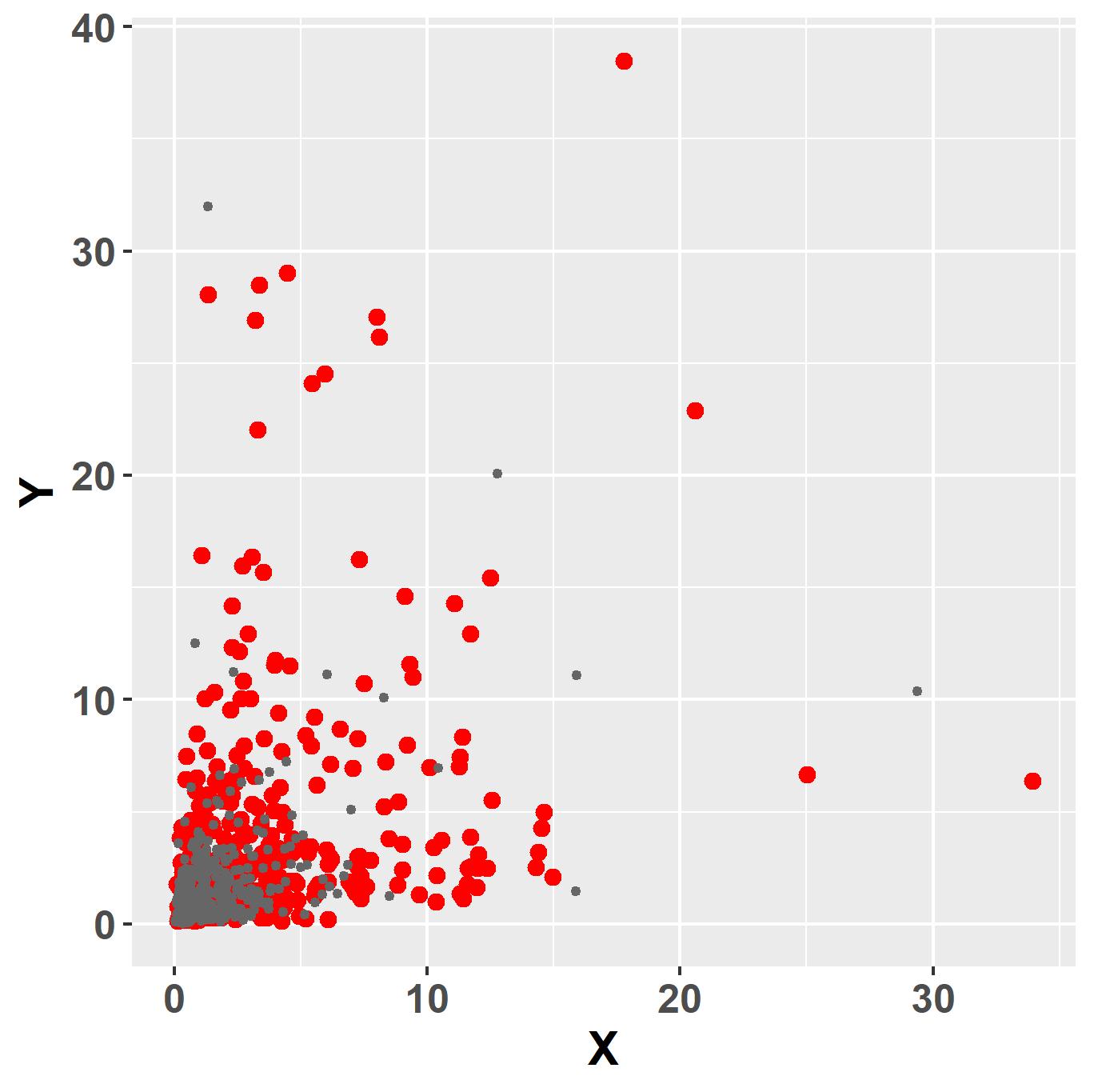}
			&
			\hspace{-0.1in}
			\includegraphics[width=0.3\columnwidth]{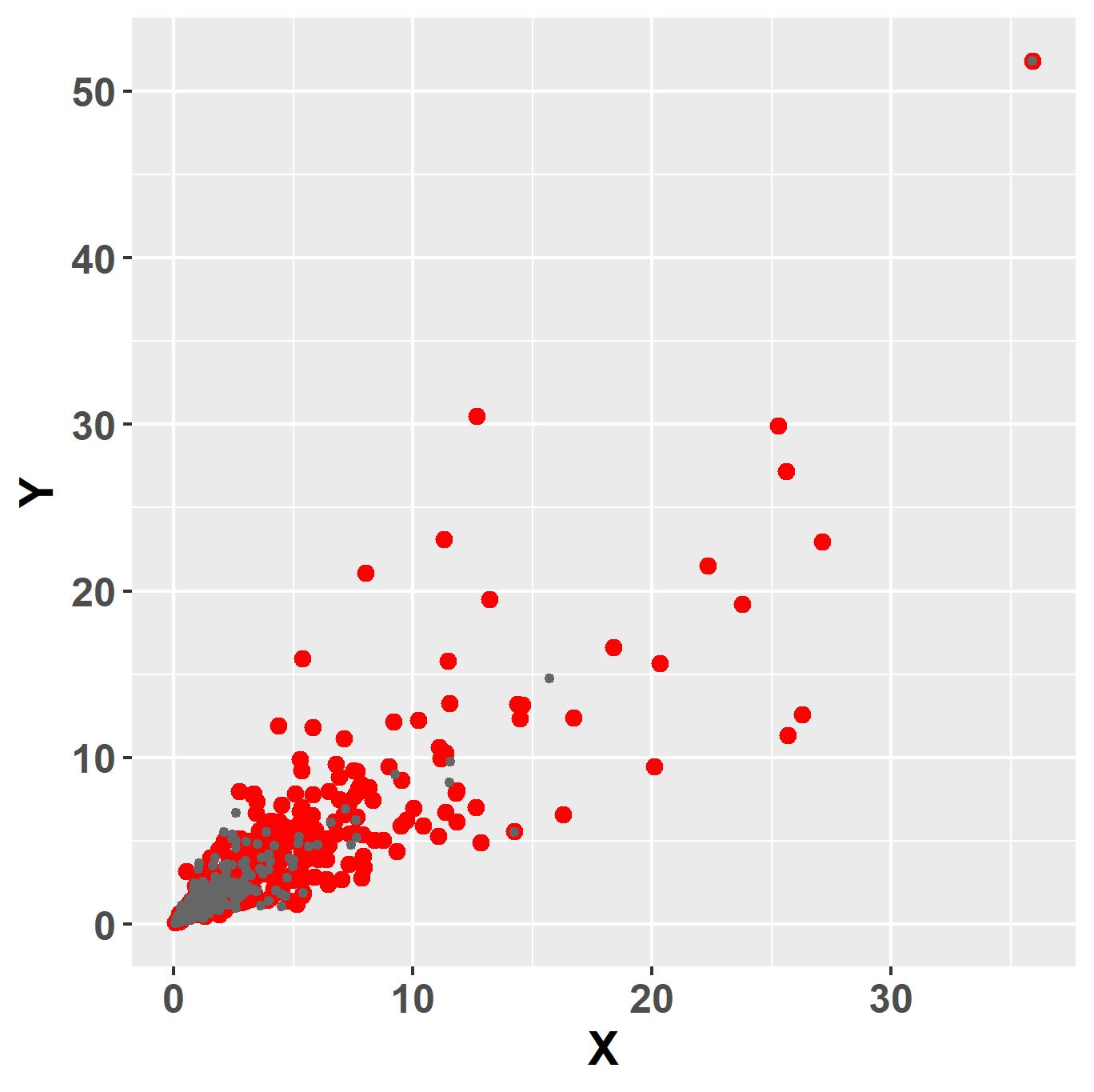} \\
			\hspace{0.2in}$Normal(\rho=0)$ & \hspace{0.2in}$Normal(\rho=0.5)$ & \hspace{0.2in}$Normal(\rho=0.9)$ \\			
			\includegraphics[width=0.3\columnwidth]{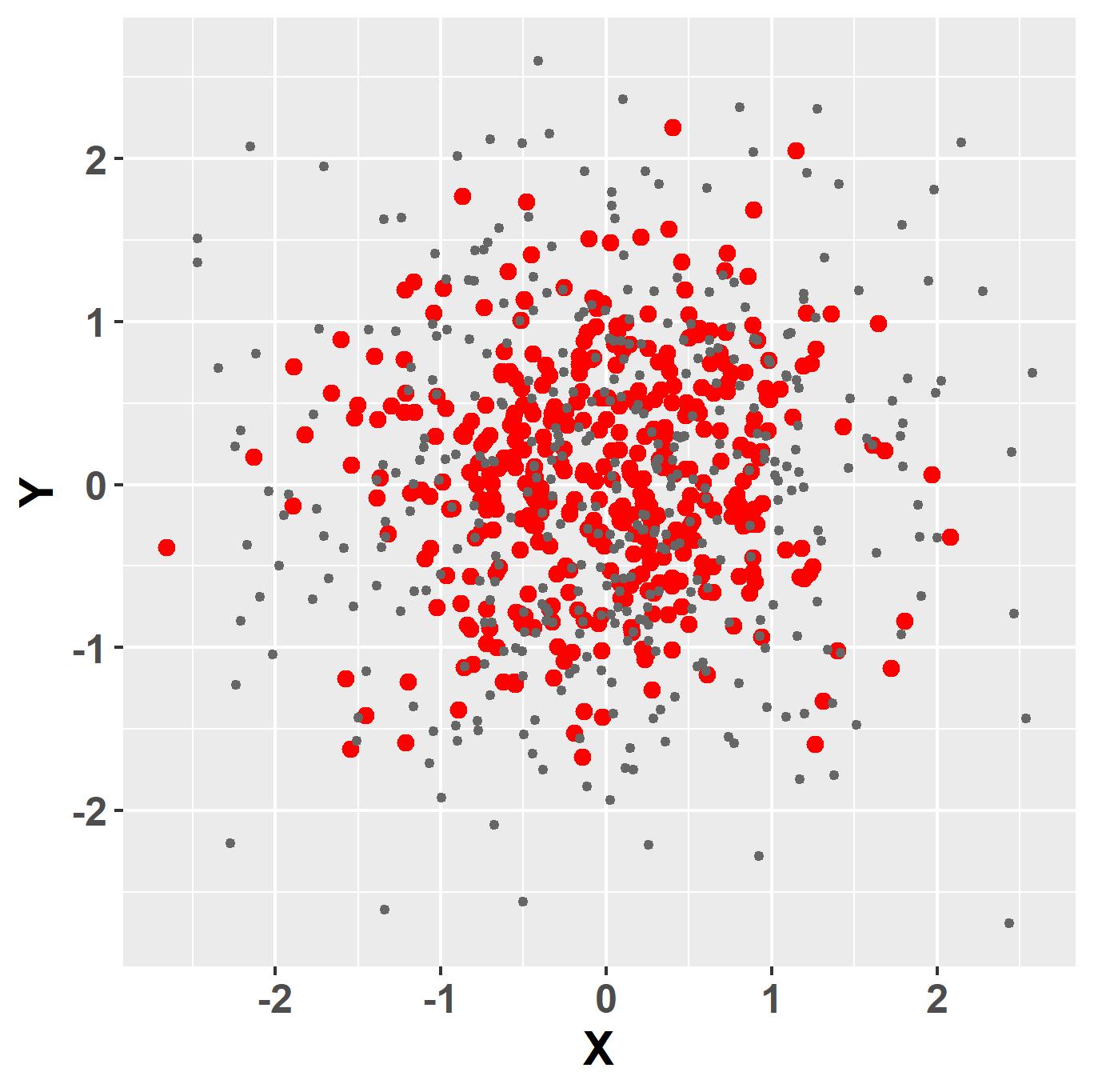}
			&
			\hspace{-0.1in}
			\includegraphics[width=0.3\columnwidth]{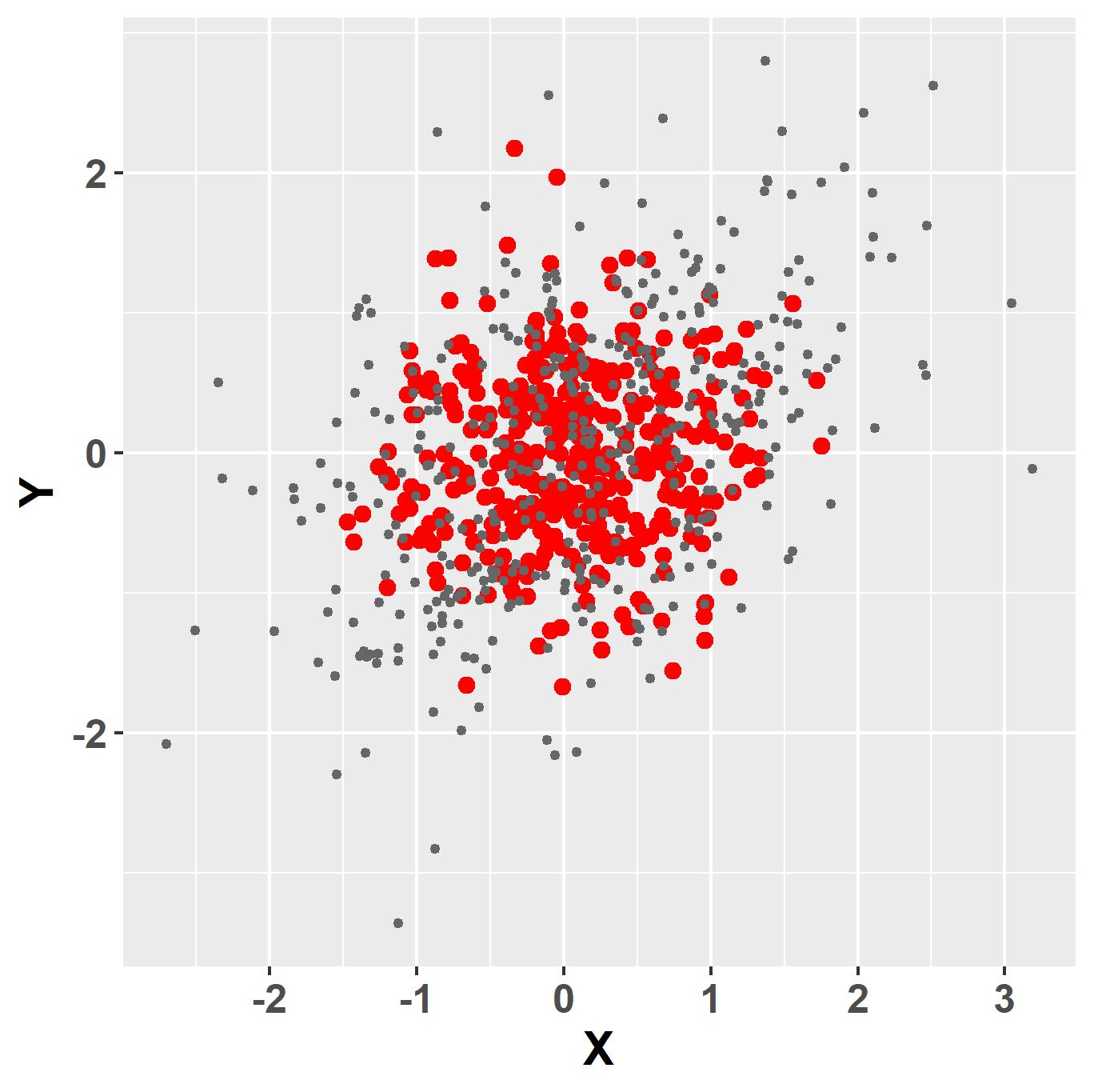}
			&
			\hspace{-0.1in}
			\includegraphics[width=0.3\columnwidth]{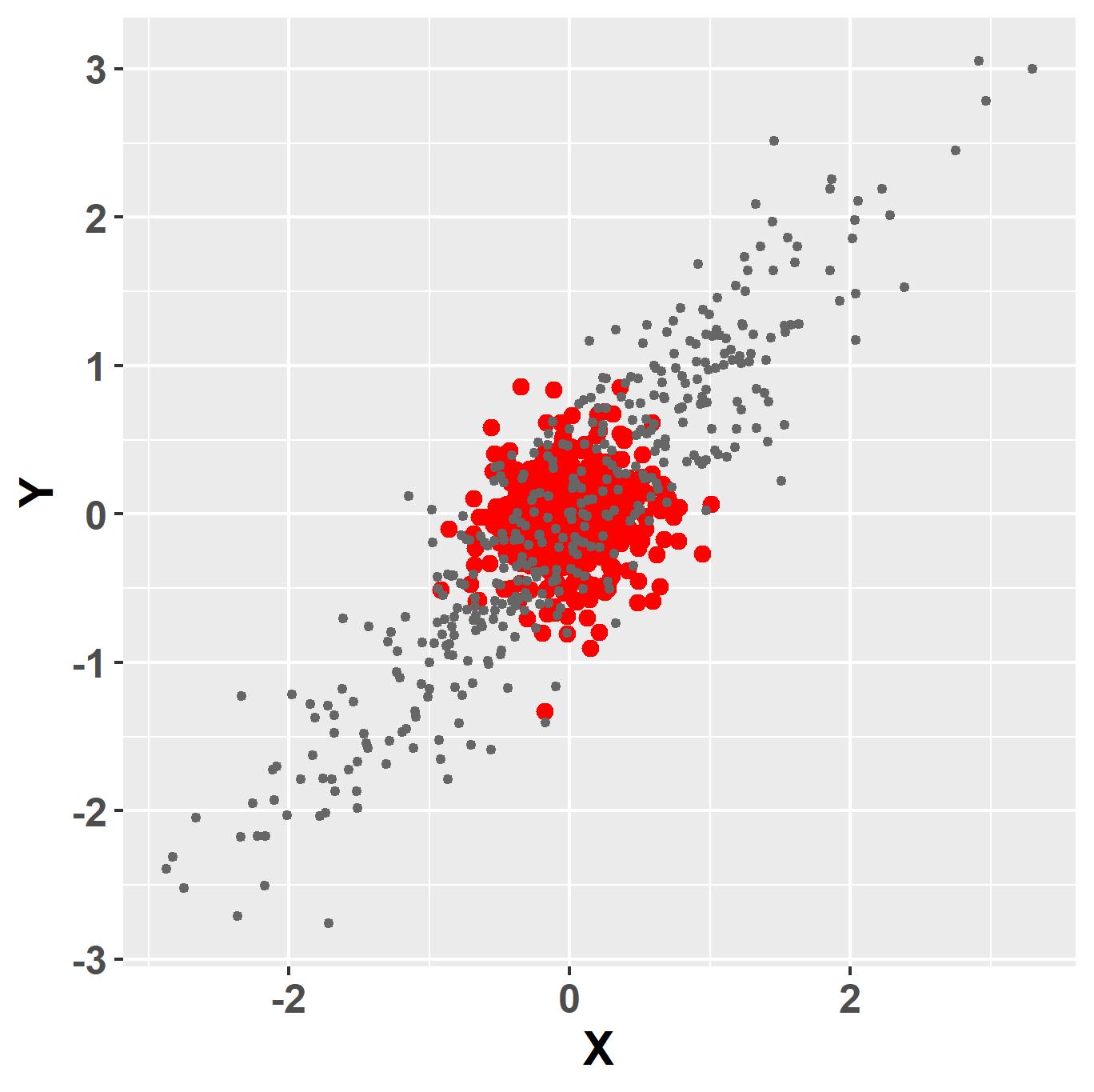}
			
		\end{tabular}
	\end{center}
	\vspace{-0.2in}
	\caption{\footnotesize Scatterplots of samples generated from distributions with positive biased sampling. Points sampled from $\jointpdf^{(\wfun)}$ are shown in red. Points sampled from $\jointpdf$ are shown in gray.
		The top panels show data for the LogNormal distribution described in Section \ref{sec:strictly_positive} of the main manuscript, sampled under the bias function $\wfun(x,y)=x+y$ for different values of $\rho$. The sampling mechanism increases the values of the observed $(x_i,y_i)$ and creates spurious dependence even under the null ($\rho=0$).  The bottom panels show data for the normal distribution with correlation $\rho$, sampled under a bias function proportional to the normal density with correlation $-\rho$. The sampling mechanism cancels the dependence between $X$ and $Y$ and the observed data is independent for all values of $\rho$.}	
	\label{fig:log_normal}	
\end{figure}
Figure ~\ref{fig:log_normal} displays the scatter plot of samples generated from the two distributions with positive biased sampling functions, described in Section \ref{sec:strictly_positive} in the main text: (1.) A LogNorml distribution with standard marginals and correlation $\rho$ with $\wfun(x,y)=x+y$, and (2.) a bi-variate Gaussian distribution with standard marginals and correlation $\rho$ with $\wfun$ proportional to a bi-variate Gaussian density with standard marginals and correlation $-\rho$.

\subsection{Computing the Expectations for the Statistic Under the Null}
\label{sec:fast_bootstrap}
Our two proposed tests are significantly faster than minP2, as shown in Section \ref{sec:simulations}. However, Tsai's test, which is specialized for truncation, is faster than our tests, especially compared to the bootstrap test which is about an order of magnitude slower than the weighted permutation test. It is thus natural to seek computational improvements for our tests, but this may have statistical, in addition to computational, consequences.
In particular, the computation of our test statistic $T$ in Equation \eqref{eq:modifed_hoeffding} requires estimation of the expected cell counts under the null $e_{i}^{jk}$. It is possible to use a different test statistic with different,
albeit wrong, expectations. The weighted permutation test with this statistic will still be valid, according to Corollary \ref{cor:alpha}, and a bootstrap test with this statistic can also be used. We examined two such modified statistics:
(i) A naive approach, where we ignore the biased sampling function $\wfun$ and
simply use the empirical marginal distributions $\xcdfhat, \ycdfhat$ as our estimators, resulting the $T$ statistic used by \cite{heller2012consistent}, and (ii) a fast bootstrap based approach, where
the marginals are estimated according to Equation \eqref{eq:expected_quardant}, but are not re-estimated
for each bootstrap sample. While these two approaches appear simpler and are faster,
they may suffer from significant loss of power. For example, for bivariate Gaussians with correlation coefficient $\rho=-0.4$,
our bootstrap approach achieves a power of $0.634$, while using the naive expectations yield only
power of $0.394$ and estimating the expectations only once via bootstrap achieves power of $0.406$. Similar trends of reduction in power were observed for other $\rho$ values.

\subsection{Algorithms for Importance Sampling}
\label{sec:IS_appendix}

We have investigated several algorithms for importance sampling, and in particular their accuracy and validity for testing.
In \cite{chen2007sequential}, a sequential importance sampling (SIS) algorithm for truncation was proposed.
\cite{kou2009approximating} generalized the algorithm for approximating the permanent and $\alpha$-permanent of general matrices. The validity of the importance sampling technique was established in \cite{harrison2012conservative}. 

We describe in Algorithm \ref{alg:sis_permutations} the details of the monotonic SIS scheme we have implemented. The idea behind this scheme is to order the rows $i$ of $\wmat$ in a decreasing order of the variance of the log-weights $\log(\wmat(i,j))$. Then, starting with the row with highest variance, for each such row $i$ select $\pi(i)$ maximizing $\log(\wmat(i, \pi(i)))$. This procedure will increase the value of $\log(\wmat(i, \pi(i)))$ for the first rows significantly beyond their expectation, where as we reach the last rows and must settle for lower values of $\log(\wmat(i, \pi(i)))$, the variance between different choices is small and the overall contribution to the product $P_{\wmat}(\pi)$ is not significant. As a result, this scheme samples permutations $\pi$ with high values of $P_{\wmat}(\pi)$. Additional algorithms can be suggested by small modifications of Algorithm \ref{alg:sis_permutations}. In 'uniform' importance sampling, we simply sample permutations based on the uniform measure $P_{IS}(\pi)=\frac{1}{n!}$. The 'uniform grid' method interpolates between the uniform method, that produce permutations with $P_{\wmat}(\pi)$ too low, and the monotonic method that produce permutations with $P_{\wmat}(\pi)$ that may be too high. This is achieved by taking a grid of values $0 = \alpha_1 < \alpha_0 < .. < \alpha_G = 1$ and for each $\alpha_k$ modifying step $9$ of the algorithm to sample with probability proportional to $\wmat(\sigma(i),j)^{\alpha_k}$. In our simulations we used $G=10$ with equidistant $\alpha_i$ values. 
Finally, the Kou-McCullagh algorithm skips the ordering steps 5,6 and computes the column-sums $C_i = \sum_j \wmat(i,j)$ over the remaining rows at each step. Step  $9$ is replaced by sampling $\pi(i,j)$ with probability proportional to $\frac{\wmat(i,j)}{C_j-\wmat(i,j)}$ (see \cite{kou2009approximating} for more details).

We have studied the empirical performance of the SIS algorithms. To illustrate, we show their empirical rejection rate under the null for the LogNormal example with $\wfun(x,y)=x+y$. While for $n=100$ the different sampling schemes perform reasonably well, as shown in Table \ref{tab:strictly_positive}, as the sample size $n$ grows, approximating the $P-value$ becomes more challenging. We illustrate the limitations of the algorithms in an example with $n=1000$. Empirically, the MCMC approach is robust and maintains an approximately
uniform $U[0,1]$ P-values distribution under the null in all cases we have tested, while the importance sampling approaches are all overly conservative.

Figure \ref{fig:IS_distribution} shows the probabilities of the sampled permutations under different methods, and the corresponding test statistic. Figure \ref{fig:IS_log_prob_ratio} shows the variability in the importance sampling weights
 $\frac{P_{\wmat}(\pi_i)}{P_{IS}(\pi_i)}$ of the different methods, causing a poor estimation of the P-value.

Figure \ref{fig:IS_pvals} shows the resulting performance. The Kou-McCullagh scheme is the least conservative scheme among all SIS methods, but is still far from a uniform $U[0,1]$ P-values distribution. The power of the SIS under the alternative will also be reduced due to their conservative behaviour. 

\begin{algorithm}
	\caption{{\small SIS Permutation Test of Quasi-Independence }} {\small
	\begin{algorithmic}[1]
		\Statex {\bf Input:} $\sample$ - sample, $\wfun(x,y)$ - bias function.
		\Statex {\bf Parameters:} $\nperm$ - number of permutations.
		\State Compute $\wmat(i,j) = \wfun(x_i,y_j), \quad \forall i,j=1,..,n$.
		\State Set $\pi_0$ the identity permutation $\pi_0(i) = i$.
		\For{$b=1$ to $\nperm$}
		\State Initialize the importance sampling and weighted (unnormalized) probabilities \hspace{1cm} $\quad \quad P_{IS}(\pi_b)=P_{\wmat}(\pi_b)=1$.				
		\State Compute $V_i = Var\Big(\log(\wmat(i,1)),..,\log(\wmat(i,n))\Big), \quad \forall i=1,..,n$.
		\State Order the variables by decreasing order of $V_i$:  $V_{\sigma(1)} \geq ... \geq V_{\sigma(n)}$. 		
		\For {$i=1$ to $n$}
		\State Sample $\pi_{b}(\sigma(i))$ with probabilities: \\ \hspace{3cm} $Pr(\pi_{b}(\sigma(i))=j) \propto \wmat(\sigma(i),j) \prod_{k=1}^{i-1} \indicator{\pi_b(\sigma(k)) \neq j}$.
		\State Update the probability $P_{IS}(\pi_{b}) = P_{IS}(\pi_{b}) \times \frac{\wmat(\sigma(i),\pi_{b}(i))}{\sum_{j=1}^n  \wmat(\sigma(i),j) \prod_{k=1}^{i-1} \indicator{\pi_b(\sigma(k)) \neq j}}$ .
		\State Update the (unnormalized) probability $P_{\wmat}(\pi_{b}) = P_{\wmat}(\pi_{b}) \times  \wmat(\sigma(i),\pi_{b}(i))$ .		
		\EndFor		
		\EndFor		
		\State {\bf Output:} $P_{value}$ computed from the $P_{IS}(\pi_b),P_{\wmat}(\pi_b)$'s according to Equation \eqref{eq:IS_pval}.
	\end{algorithmic}
	\label{alg:sis_permutations}}
\end{algorithm}

\newpage
\clearpage
\thispagestyle{empty}

\begin{figure}[!h]
	\begin{center}		
		\includegraphics[width=0.8\columnwidth, height=13cm]{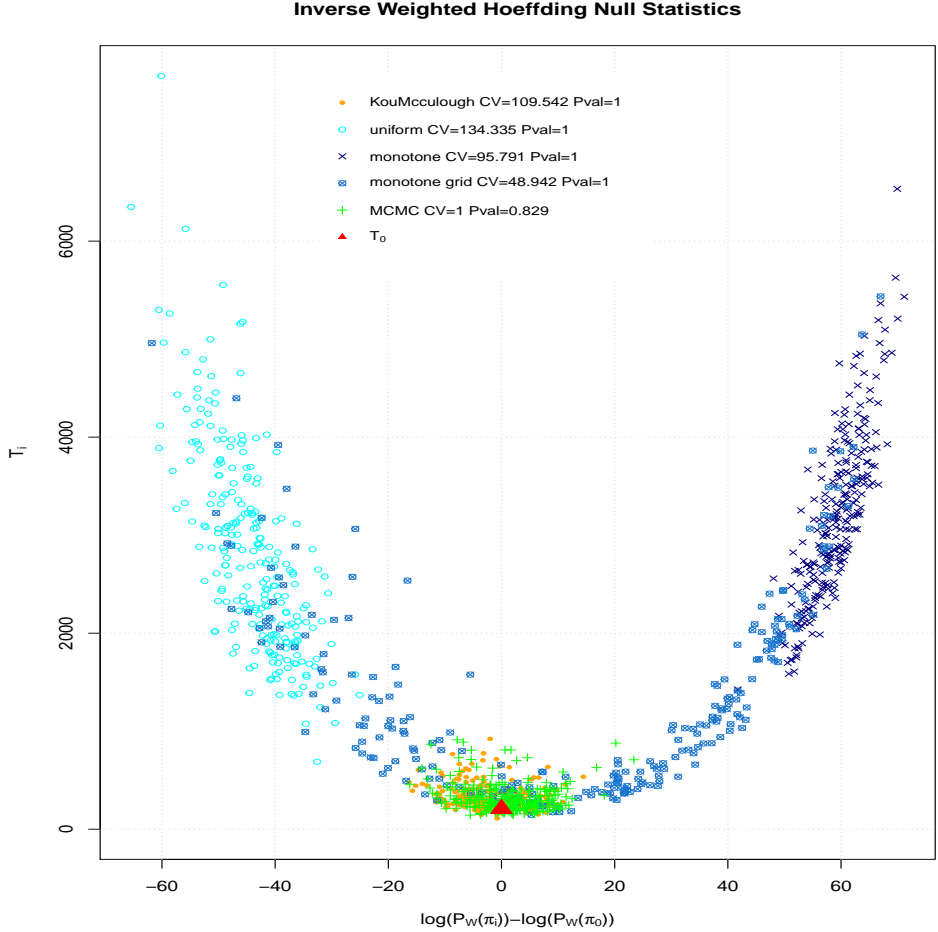}		
	\end{center}
	\vspace{-0.2in}
	\caption{\footnotesize Log probabilities under $P_{\wmat}$ for permutations sampled under different methods, vs. their computed statistic $T_i=T(\pi_i(\sample))$ for the inverse weighting statistic from Section \ref{sec:inverse_weight_stat}.
		For $n=1000$, a weighted sample was simulated from the independent LogNormal distribution, with $\wmat(x,y)=x+y$. Since $P_{\wmat}$ is known only up to a multiplicative constant, we show the log probabilities minus the log probability of the identity permutation.
		Shown are the scaled log probabilities for $B=250$ random permutations for four importance sampling methods, and the MCMC method (green '+' signs), in addition to the ID permutation (red triangle). For each method, we also display the coefficient of variation diagnostic
		of $\frac{P_{\wmat}(\pi)}{P_{IS}(\pi)}$, and the resulting p-value. 		
		According to the MCMC Method, the permutations with log-probability close to the identity cover most of the probability space under $P_{\wmat}$.
		The uniform (monotone) importance sampling methods sample permutations with too low (high) $P_{\wmat}$ values.
		The monotone grid method interpolates between the two and covers a large range of the log-probabilities, including log-probabilities sampled under the MCMC method. The Kou-McCullagh method \cite{kou2009approximating} also samples permutations with log-probabilities close to the MCMC method. However, the importance sampling probabilities $P_{IS}(\pi_i)$ for the permutations sampled under these methods are vastly different from the true underlying probabilities  $P_{\wmat}(\pi_i)$, as is evident by the large value of the coefficient of variation $CV(\frac{P_{\wmat}(\pi_i)}{P_{IS}(\pi_i)})$.
		Although the true P-value is $\approx 0.24$ (as confirmed by extensive MCMC simulations with higher $B$), all importance sampling methods are too conservative here and give a P-value of $1$, in part due to the inclusion of the identity permutation in the P-value calculation.}	
	\label{fig:IS_distribution}
\end{figure}

\newpage
\clearpage

\begin{figure}[!h]
	\begin{center}		
		\includegraphics[width=0.8\columnwidth, height=13cm]{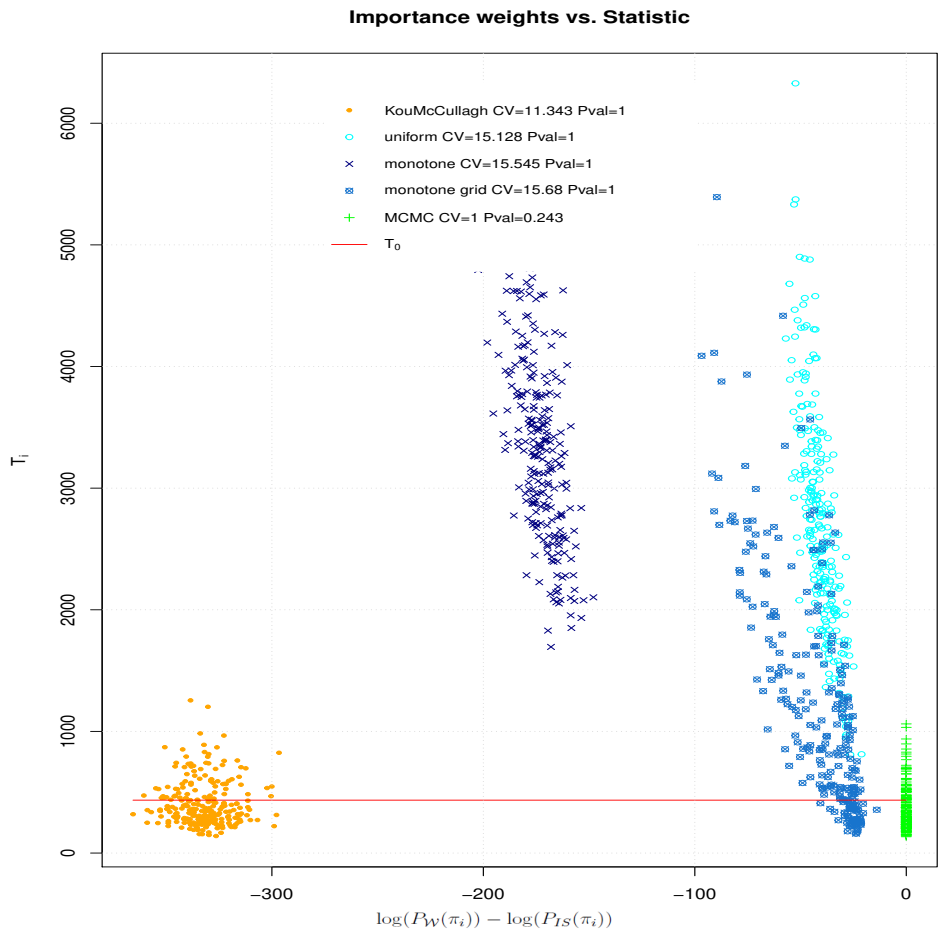}		
	\end{center}
	\vspace{-0.2in}
	\caption{\footnotesize Log probabilities ratios $\log(\frac{P_{\wmat}}{P_{IS}})$ for permutations sampled under different methods, vs. their computed statistic $T_i=T(\pi_i(\sample))$ for the inverse weighting statistic from Section \ref{sec:inverse_weight_stat}. Parameters are the same as in Figure \ref{fig:IS_distribution}.
Since $P_{\wmat}$ is known only up to a multiplicative constant, we show the log probabilities ratios minus the log probability ratio of the identity permutation.
For the MCMC Method, the ratio is $1$ and all permutations have equal weights, with $\approx 24\%$ of permutations $\pi_i$ having $T_i \geq T_0$. For the uniform and  monotone methods we have $T_i\geq T_0$ for all permutations giving a too conservative P-value of $1$.
For the monotone-grid and Kou-McCullagh method, we do sample permutations with $T_i < T_0$, but their overall probabilities ratios are negligible, giving again a conservative $Pvalue \approx 1$.} 
\label{fig:IS_log_prob_ratio}
\end{figure}

\newpage
\clearpage

\begin{figure}[!h]
	\begin{center}		
			\includegraphics[width=0.8\columnwidth, height=13cm]{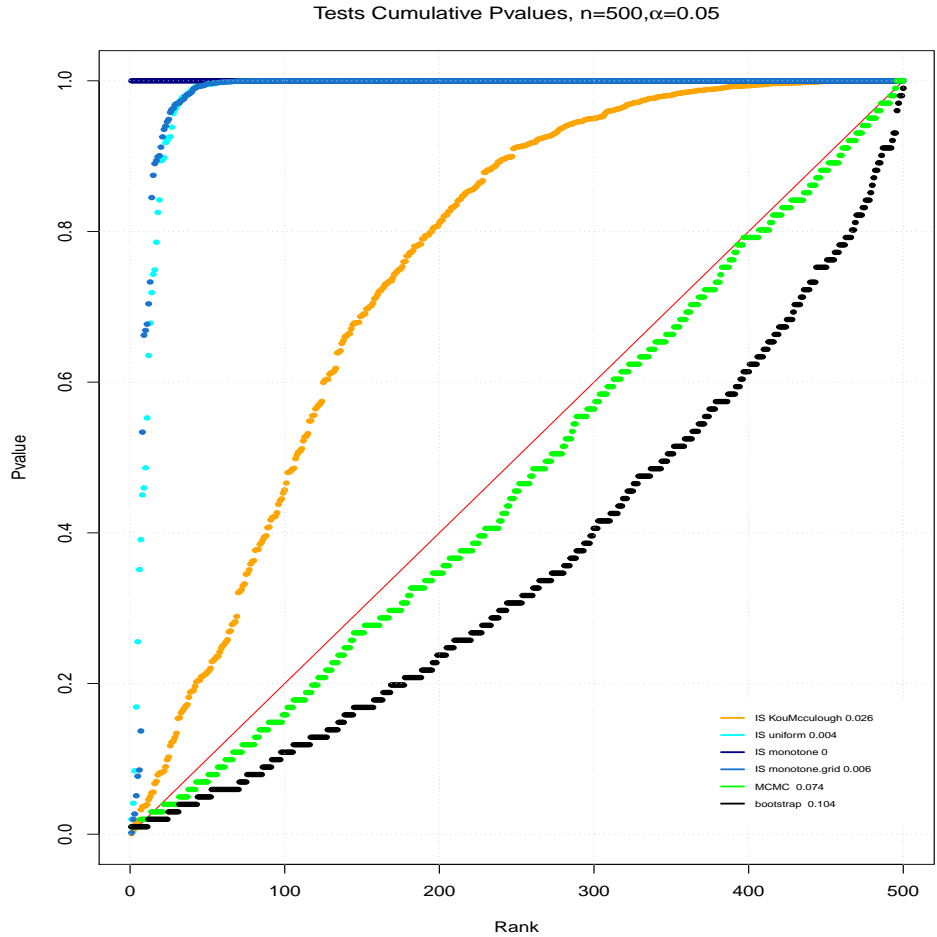}			
	\end{center}
	\vspace{-0.2in}
	\caption{\footnotesize Cumulative p-value distribution for a the same independence LogNormal distribution from Figure \ref{fig:IS_distribution} and the same tests. The permutation test gives an approximate $U[0,1]$ distribution.
	The bootstrap test is slightly non-calibrated, and gives low p-values under the null.
	The different importance sampling tests are all too conservative, with the Kou-McCullagh technique performing the best.}	
	\label{fig:IS_pvals}
\end{figure}

\subsection{An Iterative Algorithm for Marginal Estimation}
\label{sec:IterativeAlgorithm_appendix}

This section deals with estimation of the marginal distributions under the null independence model presented in Section \ref{sec:marginal_estimation_under_null} in the main text. We consider the more general scenario of $k$ independent random variables ${X}_1\sim {F}_1,\ldots,{X}_k\sim {F}_k$ that become dependent due to selection bias. Specifically, we assume that observations are vectors $(X_1,\ldots,X_k)$ having the density
\begin{equation} \label{basic}
\frac{w(x_1,x_2,\ldots,x_k) F_1(dx_1) F_2(dx_2)\cdots
 F_k(dx_k)}{\mathbb{E}\{w( X_1, X_2,\ldots , X_k)\}},
\end{equation}
where the non-negative weight function $w$ is known and has a finite
expectation with respect to  $ F_1\times\cdots\times
 F_k$. The aim is to estimate $ F_1,\ldots, F_k$ without any parametric assumptions. For identifiability, we assume that for
any measurable subset $\truncregion$ in the support of $ F_j$,
${E}\big[w( X_1,\ldots, X_k) \indicator{X_j\in \truncregion}\big]>0$ ($j=1,\ldots,k$); that is, observations from any subset of the support can be observed.

The problem of non-parametric estimation of a general multivariate
distribution $F$ using weighted data is well known (e.g., \cite{vardi1985empirical}) and the corresponding estimator
of $ F_1,\ldots,  F_k$  can be obtained by marginalization. Specifically, generalizing Equation \eqref{eq:non_parametric_MLE} in the main paper, the estimator for $F_j$ is
\begin{equation} \label{naive}
\hat{F}^n_j(t)=\frac{\sum_{i=1}^n I\{{X_{ij}\le t}\}
\wfun(X_{i1},\ldots,X_{ik})^{-1}} {\sum_{i=1}^n  \wfun(X_{i1},\ldots,X_{ik})^{-1}} \quad (j=1,\ldots,k),
\end{equation}
where $X_i$ is the vector of subject $i$, and $X_{ij}$ is its $j$'th coordinate.
Estimator \eqref{naive} does not exploit independence, and it is especially inappropriate when $\wfun$ vanishes in part of the support (i.e., truncation).

Let ${\tilde{E}}$ denote the conditional expectation operator,
${\tilde{E}}(w ; j,x)=\mathbb{E}\{w( X_1,\ldots, X_k)\mid  X_j=x\}$.
For any $1\le j\le k$, the likelihood of the data
$(x_{i1},\ldots,x_{ik})$ ($i=1,\ldots,n$) can be factorized as follows (see Equation (\ref{basic})):
\begin{equation} \label{splitlike}
\prod_{i=1}^n \frac{w(x_{i1},\ldots,x_{ik})\prod_{\ell\ne j}  F_i(dx_{i\ell})}
{{\tilde{E}}(w; j,x_{ij})} \times \prod_{i=1}^n \frac{{\tilde{E}}(w; j,x_{ij}) F_j(dx_{ij})}
{\mathbb{E}\{\tilde E(w; j, X_j)\}}.
\end{equation}
Due to independence, the first term in Equation \eqref{splitlike} does not depend on $F_j$. Thus, assuming all distributions except $F_j$ are known, the likelihood of
$n$ observations is proportional to the second term of Equation \eqref{splitlike},
and a natural non-parametric estimate of $F_j$ is $\hat{ F}_j(x)\propto \sum_i \indicator{x_{ij}\le
x} {\tilde{E}}(w;j,x_{ij})^{-1}$. This estimator assigns mass only to the observed points and gives rise to Algorithm \ref{alg:est_marg_app}, 
generalizing Algorithm \ref{alg:est_marg} in the main text.
\begin{algorithm}
	\caption{{\small Estimation of Marginals Under Quasi-independence}} {\small
	\begin{algorithmic}[1]
		\Statex {\bf Input:} $(x_{i1},\ldots,x_{ik})$ ($i=1,\ldots,n$) - sample, $w$ - bias function, $d(F_1,F_2)$ - distance function.
		\Statex {\bf Parameters:} $\epsilon$ - convergence criterion.
		\State Generate initial estimates using Equation (\ref{naive}), and set $ F^{new}_j=\hat{F}^n_j$, $ F^{old}_j\equiv 0$.
\While{$\max_j d( F_j^{old}, F_j^{new})<\epsilon$}
\State Set $ F^{old}_j=F^{new}_j$ $(j=1,\ldots,k)$.
		\For{$j=1,\ldots,k$}
\State Calculate $e^{new}_{ij}={\tilde{E}}(w; j,x_{ij})$  with
respect to $\prod_{\ell=1}^k F_\ell^{new}$, for all $x_{ij}$
$(i=1,\ldots,n)$.
\State Set
$$
 F^{new}_j(x)= \frac{\sum_{i=1}^n \indicator{x_{ij}\le x}
[e^{new}_{ij}]^{-1}} {\sum_{i=1}^n [e^{new}_{ij}]^{-1}}.
$$
\EndFor
\EndWhile
\State {\bf Output:} $F^{new}_1,\ldots,F^{new}_k$.
	\end{algorithmic}
\label{alg:est_marg_app}}
\end{algorithm}

Algorithm \ref{alg:est_marg_app} provides a non-parametric estimator for the model under the independence constraint. The most time consuming part of the algorithm is the calculation of $e^{new}_{ij}={\tilde{E}}(w; j,x_{ij})$ in step 5. In most cases, it seems
unavoidable to use full enumeration so the
complexity of each step 5 is $O(n^{k})$. For the important case of a linear weight function, the calculation of ${\tilde{E}}(w; j,x_{ij})$ is rather simple.
Let $w({\bf x})={\bf a}^{t}{\bf x}+b$, where the vector $\bf a$ and the
constant $b$ are known ($\bf a\equiv 1$ and $b=0$ is the sum-bias case $w(\textbf{x})=x_1+\cdots+x_k$). Then
\be
{\tilde{E}}(w ; j,x_{ij})={\bf a}^t\Big(
\mathbb{E} (X_1),\ldots,\mathbb{E}(X_{j-1}),x_{ij},\mathbb{E} (X_{j+1}),\ldots,\mathbb{E
} (X_k)\Big) + b
\ee
where the expectations are calculated for each coordinate separately with respect to
$ F_1^{new},\ldots, F_k^{new}$. Thus, the computational complexity of each step $5$
reduces to $O(n)$.

Several properties of Algorithm \ref{alg:est_marg_app} are readily established; their proofs are standard and hence are omitted.

\begin{proposition} \label{propincreas}
The likelihood increases in each step of the algorithm.
\end{proposition}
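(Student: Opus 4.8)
The plan is to recognize Algorithm~\ref{alg:est_marg_app} as a block-coordinate ascent (a cyclic maximization) on the profile/pseudo-log-likelihood, and to verify that each of the $k$ sub-steps within an outer iteration does not decrease the likelihood. First I would fix an outer iteration and an index $j$, and freeze all marginals $F_\ell^{new}$ for $\ell\neq j$ at their current values. Using the factorization in Equation~\eqref{splitlike}, only the second factor $\prod_{i=1}^n \tilde E(w;j,x_{ij})\,F_j(dx_{ij})/\mathbb{E}\{\tilde E(w;j,X_j)\}$ depends on $F_j$, where $\tilde E(w;j,x)$ is itself computed using the frozen marginals. So maximizing the full likelihood over $F_j$ (with the other marginals held fixed) is equivalent to maximizing this single factor over all distributions $F_j$ supported on the observed points $\{x_{1j},\dots,x_{nj}\}$.

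The second step is to solve that one-dimensional (in the functional sense) maximization explicitly. Writing $p_i$ for the mass $F_j$ assigns to $x_{ij}$ and $e_i \equiv \tilde E(w;j,x_{ij})$ (the frozen conditional expectations, as in step~5 of the algorithm), the relevant factor is proportional to $\prod_i \big(e_i p_i\big) \big/ \big(\sum_i e_i p_i\big)^n$, and maximizing $\sum_i \log(e_i p_i) - n\log(\sum_i e_i p_i)$ over the simplex $\{p_i\ge0,\ \sum_i p_i = 1\}$ is a routine Lagrange-multiplier computation whose unique solution is $\hat p_i \propto e_i^{-1}$. This is exactly the update $F_j^{new}(x) = \sum_i \indicator{x_{ij}\le x} e_i^{-1} / \sum_i e_i^{-1}$ performed in step~6. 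Hence this sub-step replaces $F_j$ by the \emph{exact} maximizer of the likelihood over $F_j$ with the remaining marginals held fixed, so the likelihood cannot decrease; and since we cycle over $j=1,\dots,k$ within each outer iteration, the likelihood is non-decreasing across the whole iteration.

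The one subtlety — and the step I expect to require the most care — is the interaction between $j$'s: when we update $F_j$, the quantities $e_i=\tilde E(w;j,x_{ij})$ were computed from the marginals \emph{as they stood at the start of that sub-step}, which already incorporate earlier updates to $F_1,\dots,F_{j-1}$ in the same sweep. The argument still goes through because, at the moment sub-step $j$ is executed, all of $F_\ell$ ($\ell\neq j$) are genuinely fixed and the factorization~\eqref{splitlike} applies with those current values; the exact-maximization conclusion is about that sub-problem only. One should also note the mild identifiability/positivity assumption $\mathbb{E}[w(X_1,\dots,X_k)\indicator{X_j\in\mathcal A}]>0$, which guarantees every $e_i$ is finite and positive, so the closed-form maximizer $\hat p_i\propto e_i^{-1}$ is well defined and puts positive mass on each observed point. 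Putting these together gives the monotonicity claimed in Proposition~\ref{propincreas}; I would leave the (standard) details of the simplex optimization and the factorization bookkeeping to the reader, consistent with the paper's remark that the proofs are omitted.
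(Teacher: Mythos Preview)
Your proposal is correct and is precisely the standard block-coordinate ascent argument the paper has in mind when it says the proof is ``standard and hence omitted'': each sub-step replaces $F_j$ by the exact maximizer of the second factor in~\eqref{splitlike} with the other marginals held fixed, and you correctly handle the within-sweep bookkeeping and the positivity of the $e_i$. There is nothing to add.
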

\begin{proposition}
The algorithm converges in the interior of the parameter space.
\end{proposition}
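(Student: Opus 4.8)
The plan is to treat this as a standard monotone-ascent (EM/MM-type) convergence problem on a compact parameter space, the only genuinely new ingredient being a boundary-avoidance argument that keeps the iterates interior. First I would fix the parameter space: after the initialization step each $F_j$ is a discrete distribution on the (a.s.\ distinct) observed values $x_{1j},\ldots,x_{nj}$, a support the updates never change, so the state is a point $\theta=(p_1,\ldots,p_k)\in\Theta=\Delta_{n-1}^{k}$, a compact set, and the likelihood $L(\theta)=\prod_{i=1}^{n}\mu_\theta(x_{i1},\ldots,x_{ik})$ — with $\mu_\theta(x_{i\cdot})=w(x_{i\cdot})\prod_\ell p_{i\ell}\big/\sum_{c}w(c)\prod_\ell p_{c_\ell,\ell}$, the sum over the $n^{k}$ grid points — is continuous on $\Theta$ and bounded above by $1$, since each factor lies in $[0,1]$.

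The crux is to show the orbit never approaches $\partial\Theta$. The estimate from Equation~\eqref{naive} is interior because the observed data have positive, finite weight, so $L(\theta^{(0)})=:c_0>0$; by Proposition~\ref{propincreas} the likelihood is non-decreasing, giving $L(\theta^{(t)})\ge c_0$ for all $t$, and since every factor $\mu_{\theta^{(t)}}(x_{i\cdot})\le 1$ this forces $\mu_{\theta^{(t)}}(x_{i\cdot})\ge c_0$ for each $i$. Hence the whole orbit lies in $K=\{\theta\in\Theta:\mu_\theta(x_{i\cdot})\ge c_0\ \text{for all }i\}$, which is closed in $\Theta$ — hence compact — and disjoint from $\partial\Theta$, because a vanishing coordinate forces some $\mu_\theta(x_{i\cdot})=0<c_0$. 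On $K$, using that $w$ takes only finitely many nonnegative values on the $n^{k}$-point product grid together with the identifiability assumption, the conditional means $\tilde E(w;j,x_{ij})$ stay bounded away from $0$ and $\infty$, so the one-sweep update map $M$ (update $F_1$, then $F_2,\ldots,$ then $F_k$) is well defined and continuous on $K$.

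With this in hand I would run the classical global-convergence argument: $L(\theta^{(t)})$ is non-decreasing and bounded, hence converges to some $L^\star$; by compactness of $K$ a subsequence converges to $\theta^\star\in K$, and continuity of $L$ and $M$ gives $L(M\theta^\star)=\lim_m L(\theta^{(t_m+1)})=L^\star=L(\theta^\star)$. Because each coordinate update strictly increases $L$ unless that marginal already equals its weighted-empirical conditional solution (the unique maximizer of the profiled sub-problem built from the second factor in Equation~\eqref{splitlike}), $L(M\theta^\star)=L(\theta^\star)$ forces $M\theta^\star=\theta^\star$, so every limit point is a stationary point of $L$ lying in $\Theta^{\circ}$. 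Finally, $\|\theta^{(t+1)}-\theta^{(t)}\|\to 0$ is immediate from this fixed-point characterization (otherwise a subsequence would contradict it), so the limit set is connected; being contained in the set of level-$L^\star$ fixed points of $M$, it reduces to a single point when those are isolated, and $\theta^{(t)}$ converges in $\Theta^{\circ}$.

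I expect the main obstacle to be the combination of the boundary-avoidance step with the implication $L(M\theta)=L(\theta)\Rightarrow M\theta=\theta$: the former must leverage Proposition~\ref{propincreas} precisely at the level of each individual data-point mass, and the latter needs the weighted empirical CDF to be the \emph{unique} solution of the corresponding profiled sub-problem — which is exactly where the paper's remark that the proof is ``standard'' is doing real work. Promoting subsequential to full convergence is the other delicate point; it requires an isolated-fixed-points hypothesis (generically true here) or is, as is common for such algorithms, left slightly informal.
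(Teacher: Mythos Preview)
The paper omits the proof entirely, declaring it ``standard''. Your proposal is precisely the standard monotone--ascent (Zangwill/Wu--type) scheme the authors have in mind: parametrize by a product of simplices, use Proposition~\ref{propincreas} to obtain a bounded non-decreasing likelihood, trap the orbit in a compact set, extract subsequential limits, and identify them as fixed points via continuity of the sweep map. So there is nothing to compare --- your outline \emph{is} the omitted proof.

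There is, however, one concrete gap in your boundary--avoidance step. You assert that $K=\{\theta\in\Theta:\mu_\theta(x_{i\cdot})\ge c_0\text{ for all }i\}$ is closed in $\Theta$; but $\mu_\theta(x_{i\cdot})=N_i(\theta)/D(\theta)$ with $D(\theta)=\sum_c w(c)\prod_\ell p_{c_\ell,\ell}$, and at boundary points where $D$ vanishes the ratio is of the form $0/0$, not $0$. This is not hypothetical: take $n=k=2$ with $w$ supported only on the two observed (``diagonal'') grid points, and set $p_{11}=\epsilon$, $p_{12}=1-\epsilon$. Then $D=2\epsilon(1-\epsilon)\to 0$ while $\mu_1=\mu_2=\tfrac12$ identically, so the whole curve lies in $K$ (for any $c_0\le\tfrac12$) yet its boundary endpoint does not; $K$ is not closed and your compactness argument, as written, does not confine the orbit to a compact subset of $\Theta^\circ$. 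The repair requires a different handle than the level set of $L$ --- for instance, a direct inductive bound on the masses via the update rule $p_{ij}^{\mathrm{new}}\propto 1/e_{ij}$, or the usual device of working with the log-likelihood and controlling the competition between $\sum_i\log N_i$ and $n\log D$ near $\{D=0\}$. You already flagged the boundary step as the likely obstacle; the point here is that the specific mechanism you proposed (closed superlevel set disjoint from $\partial\Theta$) fails, and something slightly more structural is needed.
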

\begin{proposition}
The NPMLE is a fixed point of the algorithm.
\end{proposition}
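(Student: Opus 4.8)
Here is how I would prove that the NPMLE is a fixed point of Algorithm~\ref{alg:est_marg_app}.

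The plan is to write down the stationarity (Karush–Kuhn–Tucker) conditions of the constrained likelihood at the NPMLE and observe that they are \emph{identical} to the equation defining one sweep of the algorithm. First I would reduce to finitely many parameters: by the standard support-reduction argument for Vardi-type NPMLEs, any maximizer of the likelihood of \eqref{basic} over $k$-tuples of CDFs assigns mass only to the observed coordinate values, so for each $j$ we may write $\widehat F_j$ as the discrete law placing mass $p_{ji}\ge 0$ on $x_{ij}$ ($i=1,\dots,n$, distinct by continuity), with $\sum_i p_{ji}=1$. Substituting into \eqref{basic}, the log-likelihood becomes, up to the additive constant $\sum_i\log w(x_{i1},\dots,x_{ik})$,
\[
\ell\big(\{p_{ji}\}\big)=\sum_{i=1}^n\sum_{j=1}^k\log p_{ji}-n\log Z,\qquad
Z:=\sum_{i_1=1}^n\!\cdots\!\sum_{i_k=1}^n w(x_{i_1 1},\dots,x_{i_k k})\prod_{\ell=1}^k p_{\ell,i_\ell}.
\]

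The second step is the only real computation: differentiating the normalizer and collecting the terms with $i_j=i$,
\[
\frac{\partial Z}{\partial p_{ji}}=\sum_{i_1,\dots,i_{j-1},i_{j+1},\dots,i_k} w(x_{i_1 1},\dots,x_{ij},\dots,x_{i_k k})\prod_{\ell\ne j}p_{\ell,i_\ell}=\tilde E(w;j,x_{ij}),
\]
i.e.\ the conditional expectation $\mathbb{E}\{w(X_1,\dots,X_k)\mid X_j=x_{ij}\}$ evaluated under the current product estimate $\prod_\ell\widehat F_\ell$ — exactly the quantity $e^{new}_{ij}$ computed in step~5 of the algorithm. I would then form the Lagrangian $\mathcal L=\ell-\sum_j\lambda_j\big(\sum_i p_{ji}-1\big)$; assuming the maximizer is interior (all $p_{ji}>0$, which is the interiority also invoked in the previous proposition and follows from the identifiability condition on $w$), $\partial\mathcal L/\partial p_{ji}=0$ gives $1/p_{ji}=\lambda_j+(n/Z)\,\tilde E(w;j,x_{ij})$ for all $i$. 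Multiplying by $p_{ji}$, summing over $i$, and using $\sum_i p_{ji}=1$ together with the law of total expectation $\sum_i p_{ji}\,\tilde E(w;j,x_{ij})=\mathbb{E}\{\tilde E(w;j,X_j)\}=\mathbb{E}\{w\}=Z$, I obtain $n=\lambda_j+n$, hence $\lambda_j=0$ for every $j$. The stationarity equations therefore collapse to $p_{ji}\propto \tilde E(w;j,x_{ij})^{-1}=(e^{new}_{ij})^{-1}$, and after renormalizing, $p_{ji}=(e^{new}_{ij})^{-1}\big/\sum_{i'}(e^{new}_{i'j})^{-1}$ — precisely the update performed in step~6 of Algorithm~\ref{alg:est_marg_app} when the $e^{new}_{ij}$ are computed at the NPMLE. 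Consequently, one full sweep of the algorithm started at the NPMLE returns the NPMLE, i.e.\ it is a fixed point; this holds whether the inner loop is run in Jacobi or Gauss–Seidel fashion, since a coordinate already satisfying its stationarity equation is not moved.

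The bookkeeping here (the support-reduction step, and the interiority needed to differentiate) is routine — these are the standard steps in Vardi-type problems, which is why the paper states the proof may be omitted. The one line that genuinely uses the structure of the model, and which I would single out as the crux, is the vanishing of the Lagrange multipliers $\lambda_j$, which comes from the self-consistency identity $\mathbb{E}\{\tilde E(w;j,X_j)\}=\mathbb{E}\{w\}$; everything else is algebra.
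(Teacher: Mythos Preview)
Your argument is correct. The paper omits the proof entirely, stating that it is ``standard'' for this family of Vardi-type problems; the Lagrangian/KKT computation you carry out is exactly the standard route one would expect, and your identification of $\partial Z/\partial p_{ji}$ with $\tilde E(w;j,x_{ij})$ together with the self-consistency identity $\sum_i p_{ji}\,\tilde E(w;j,x_{ij})=Z$ is the heart of the matter.

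One small remark: the interiority of the NPMLE is even more immediate than you suggest --- if any $p_{ji}=0$ then the $i$-th observation contributes a factor of zero to the likelihood, so the NPMLE is automatically interior regardless of the identifiability condition on $w$. This does not affect the validity of your proof.
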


\subsection{Implementation Details}
\label{sec:Implementation_appendix}
We list below several practical issues that were dealt with when implementing our TIBS algorithm.

\begin{itemize}
	\item
	{\bf Permutations MCMC parameters:} When sampling permutations, we seek permutations that are independent from each other. This is approximately achieved when enough MCMC steps are used between every two permutations chosen such that the Markov Chain is mixing and is close to the stationary distribution. The parameter $M$ of MCMC steps between two consecutive permutations was set to $M=2n$, where $n$ is the sample size. This choice was based on calculating the correlation between $\pi_t(i)$ and $\pi_{t+M}(i)$ as a function of $M$ and showing that this correlation decays.
	
	In addition, a 'burn-in' parameter $M_0$ is often used in MCMC studies, which is the initial number of steps performed after starting from the identity permutation. Here, we chose to not include a 'burn-in' step, i.e. setting $M_0=0$ and taking the first permutation (the identity) as one of our permutations. This ensures that under the null we have one sample identical to the original sample, i.e. the empirical p-value cannot be lower than $\frac{1}{B}$. Some empirical arguments against the usage of 'burn-in' are given in \url{http://users.stat.umn.edu/~geyer/mcmc/burn.html}.

	\item
	{\bf Estimating expectations using MCMC:} The MCMC permutations algorithm was used to estimate the expected cell counts in the test statistic, as shown in Equation \eqref{eq:estimate_expected}, using the estimated transition probabilities $\hat{P}_{ij}$ in Equation \eqref{eq:P_ij_estimator}. For these estimators, we used {\it all} permutations generated during the MCMC algorithm, and not only the $B$ permutations selected later to produce permuted (null) samples. While consecutive permutations are strongly dependent and differ only in one pair ($\pi(i), \pi(j)$), adding
	the estimator for $P_{ij}$ over all permutations reduces the variance of the expected cell estimators, compared to using only every $M$'th permutation, at a negligible additional computational cost.
	
	\item
	{\bf Removing cells with low counts:} The test statistic $T$ in Equation \eqref{eq:modifed_hoeffding} sums over all cells, including cells with small expected counts.
	Such cells yield terms with high variance and may therefore reduce the power of the test statistic, especially recalling that we have variance not only in the observed counts $o_i^{jk}$ but we also use estimators $\hat{e}_i^{jk}$ for the expected counts.
	
	To reduce the variance of these cells, we include in the test statistic the contribution of cells for a data point $i$ only if $\hat{e}_i^{jk}>1$ for all $j,k$. Otherwise, we simply skip these cells, for both the original sample and the permuted and bootstrap samples. Alternatively, one can replace the Pearson chi-square statistic by a likelihood-ratio test statistic that is less sensitive to cells with low counts.
	
	\item
	{\bf Dealing with ties:}
	Even for a continuous distribution $\jointcdf$ and for our test statistic $T$ in Equation \eqref{eq:modifed_hoeffding}, we encounter ties data points $(x_i,y_i)$ determining the quadrants $Q_i^{jk}$ and data points $(x_j, y_{\ell})$ used as counts for the statistic. When $x_j=x_i$ and/or $y_{\ell}=y_i$ the data points lie on the boundaries of the quadrants $Q_i^{jk}$.
	This can occur in the bootstrap test (as the same values may be sampled multiple times),
	and also for permutation testing, where the permuted sample contains data points $(x_i, y_{\ell})$ and $(x_j, y_i)$ for some $j,{\ell}$. We can count these boundary points on one quadrant, another, or ignore them altogether.
	While the effect of such ties is asymptotically negligible - they may affect the statistic and P-value calculation for small samples. For example, if we discard them, then for the original sample the sum of the four quadrants observed counts will be $n-1$, but for a permuted dataset this will be typically $n-2$.
	This causes a small but systematic bias between the original and randomized (permuted or bootstrap) samples,
	and may result in an invalid test with a non-uniform P-values distribution under the null.
	To overcome this bias, our implementation uses a small perturbation of the quadrants $Q_i^{jk}$ (see Equation \eqref{eq:data_quartiles_def}), such that the center point used is $(x_i+\eps^{(x)}_i, y_i+\eps^{(y)}_i)$ with $\eps^{(x)}_i, \eps^{(y)}_i \iid N(0, 10^{-9})$. This randomly assigns any point (in the original or bootstrap/permuted sample) containing $x_i$ or $y_i$ (for example the original data point $(x_i,y_i)$) into one of the relevant quadrants, and does not affect the assignment of other points, thus ensuring that all $n$ points are counted for both the original and the randomized samples.
	
\end{itemize}

\end{document}